\documentclass[pra,twocolumn,superscriptaddress]{revtex4}
\usepackage{amsmath,latexsym,amssymb,amsfonts}

\newcommand{\aout}{\hat{A}}
\newcommand{\bout}{\hat{B}}
\newcommand{\fa}{F_A}
\newcommand{\fb}{F_B}
\newcommand{\rhoout}{\sigma_{\aout\bout\fa\fb}}

\usepackage{revsymb}
\usepackage{hyperref}
\usepackage{natbib}
\usepackage{amsmath}
\usepackage{amsthm}
\usepackage{graphicx}
\usepackage{color}
\usepackage{amssymb}
\usepackage{pifont}
\usepackage[T1]{fontenc} 

\setcounter{secnumdepth}{4}

\usepackage[noend]{algpseudocode}
\usepackage{algorithm}
\algnewcommand{\Initialize}[1]{%
  \State \textbf{Initialize:}
  \Statex \hspace*{\algorithmicindent}\parbox[t]{.8\linewidth}{\raggedright #1}
}

\newcommand{\bc}{\begin{center}}
\newcommand{\ec}{\end{center}}

\newcommand{\id}{\mathsf{id}}

\newcommand{\tr}{\mathop{\mathrm{tr}}\nolimits}


\newtheorem{theorem}{Theorem}[section]

\newtheorem{lemma}[theorem]{Lemma}
\newtheorem{definition}[theorem]{Definition}
\newtheorem{remark}[theorem]{Remark}

\newcommand{\hil}{\mathcal{H}}



\usepackage{amsfonts}

\def\id{\mathbb{I}}
\def\idchan{\openone}

\def\01{\{0,1\}}

\newcommand{\ket}[1]{|#1\rangle}
\newcommand{\bra}[1]{\langle#1|}

\newcommand{\proj}[1]{|#1\rangle\!\langle#1|}
\newcommand{\ketbra}[2]{|#1\rangle\!\langle#2|}


\newcommand{\mT}{\mathcal{T}}

\newcounter{sdp}
\newenvironment{sdp}[2]{\refstepcounter{sdp}
\begin{samepage}
\smallskip
\begin{center}
\begin{tabular}{ll}
#1 & #2 \\
subject to 
}
{ 
\end{tabular}
\end{center}
\begin{center}Optimisation Program~\thesdp.\end{center}
\smallskip
\end{samepage}
}

\newcommand{\cmark}{\text{\ding{51}}}
\newcommand{\xmark}{\text{\ding{55}}}

\newcommand{\Ctotal}{\hat{C}_{\cmark, \hat{A} A' \hat{B} B'}}

\bibliographystyle{plain}

\begin{document}

\preprint{}\title{Optimizing practical entanglement distillation}  
\author{Filip Rozp\k{e}dek}
\thanks{These authors contributed equally}
\email{f.d.rozpedek@tudelft.nl}
\affiliation{QuTech, Lorentzweg 1, 2628 CJ Delft, Netherlands}
\author{Thomas Schiet}
\thanks{These authors contributed equally}
\email{f.d.rozpedek@tudelft.nl}
\affiliation{QuTech, Lorentzweg 1, 2628 CJ Delft, Netherlands}
\author{Le Phuc Thinh}
\affiliation{QuTech, Lorentzweg 1, 2628 CJ Delft, Netherlands}
\author{David Elkouss}
\affiliation{QuTech, Lorentzweg 1, 2628 CJ Delft, Netherlands}
\author{Andrew C. Doherty} 
\affiliation{Centre for Engineered Quantum Systems, School of Physics,
The University of Sydney, Sydney, NSW 2006, Australia
}
\author{Stephanie Wehner}
\affiliation{QuTech, Lorentzweg 1, 2628 CJ Delft, Netherlands}

\begin{abstract}
The goal of entanglement distillation is to turn a large number of weakly entangled states into a smaller number of highly entangled ones.
Practical entanglement distillation schemes offer a tradeoff between the fidelity to the target state, and the probability of successful distillation. 
Exploiting such tradeoffs is of interest in the design of quantum repeater protocols.
Here, we present a number of methods to assess and optimize entanglement distillation schemes.
We start by giving a numerical method to compute upper bounds on the maximum achievable fidelity for a desired probability of success. 
We show that this method performs well for many known examples by comparing it to well-known distillation protocols. This 
allows us to show optimality for many 
well-known distillation protocols for specific states of interest. 
As an example, we analytically prove optimality of the distillation protocol utilized within the Extreme Photon Loss (EPL) entanglement generation 
scheme, even in the asymptotic limit.
We proceed to present a numerical method that can improve an existing distillation scheme for a given input state, 
and we present an example for which this method finds an optimal distillation protocol.
An implementation of our numerical methods is available as a Julia package.
\end{abstract}
\pacs{} \maketitle

\section{Introduction}
Entanglement distillation forms an important element of many proposals for quantum repeaters~\cite{briegel1998quantum, bratzik2013quantum, guha2015rate, vollbrecht2011entanglement, Munro_15}, as well as
networked quantum computers~\cite{nickerson2013topological,nickerson2014freely}. It has seen widespread study across several areas ranging from 
practical entanglement distillation schemes~\cite{bennett1996purification, deutsch1996quantum, zhao2001practical, yamamoto2001concentration, pan2001entanglement, campbell2008measurement, nickerson2014freely} and their experimental implementations~\cite{kwiat2001experimental, zhao2003experimental, reichle2006experimental, takahashi2010entanglement, kalb2017entanglement}, to a general understanding of some of
its possibilities and limitations in quantum information theory~\cite{dur2007entanglement}. 
The general goal of bipartite entanglement distillation is to convert a state $\rho_{AB}$ into a state $\eta_{\hat{A}\hat{B}}$ 
that is close to a maximally entangled state $\Phi_{\hat{A}\hat{B}}$ using only local operations and classical communication (LOCC) between 
the network node holding $A$ (Alice) and the one holding $B$ (Bob). Here by $A$ and $B$ we denote the input registers and by $\aout$ and $\bout$ the output ones.
Closeness is measured in terms of the fidelity
\begin{align}
F = \bra{\Phi_D}\eta_{\hat{A}\hat{B}}\ket{\Phi_D} \geq 1 - \epsilon\ ,
\end{align}
to the target state
\begin{align}\label{eq:maxEnt}
\ket{\Phi_D} = \frac{1}{\sqrt{D}} \sum_{j=0}^{D-1} \ket{j}_{\hat{A}} \ket{j}_{\hat{B}}\ , 
\end{align}
which is maximally entangled across $\hat{A}$ and $\hat{B}$. 

There is a slight difference between the meaning of \emph{entanglement distillation} in the quantum information theory literature and in practical schemes. 
In quantum information theory, one typically considers the case where $\rho_{AB} \approx (\tau_{ab})^{\otimes n}$ consist of 
$n$ copies of a state $\tau_{ab}$. If we want to distil states that are arbitrarily close to the perfect maximally entangled state, then the distillable entanglement $E_D(\tau_{ab})$ of $\tau_{ab}$ answers the question of how large this output state can be. Specifically, it tells us what would be the dimension $|\aout \bout|$ relative to the input dimension $|A B|$, under distillation using LOCC 
as $n\rightarrow \infty$~\cite{bennett1996mixed}. As such, the dimension of the output
state $|\aout \bout|$ is generally smaller than the dimension $|A B|$ of the input state, unless the input is already maximally entangled.
While $E_D$ is difficult to compute in general, 
several computable bounds have been proposed~\cite{rains2001semidefinite,vidal2002computable,plenio2005logarithmic,wang2016improved}. Recent years have seen one-shot variants of distillable entanglement
in which $n$ can be finite, or indeed $\rho_{AB}$ may have an arbitrary structureless form~\cite{tomamichel2016quantum, buscemi2010distilling, fang2017non}. Bounds on the one-shot distillable
entanglement may be computed numerically~\cite{brandao2011one}. Crucially, the task of entanglement distillation as it is considered in quantum information
theory always produces an output state $\eta_{\hat{A}\hat{B}}$, and considers no failure. The possibility of failure is allowed implicitly by assuming that
if the entanglement distillation procedure fails, then Alice and Bob output an arbitrary state leading to a reduced fidelity of the output state to the target state.

In contrast, practical schemes for entanglement distillation explicitly allow for the possibility of failure~~\cite{bennett1996purification, deutsch1996quantum, zhao2001practical, yamamoto2001concentration, pan2001entanglement, campbell2008measurement, nickerson2014freely}. 
The fidelity $F$ to the target state is in that case of interest only in the event of success. Not surprisingly, there exist interesting tradeoffs between this fidelity $F$, 
and the probability of success $p_{\rm succ}$ of the distillation procedure.  A simple example of such a tradeoff is the possibility of \emph{filtering} in which 
the dimensions $|\hat{A}|$ and $|\hat{B}|$ of the output systems $\hat{A}$ and $\hat{B}$ are equal to the input dimensions $|A|$ and $|B|$, that is, 
$|\hat{A}| = |A|$ and $|\hat{B}|=|B|$. Yet, it is possible to probabilistically 
increase the fidelity to the target state by LOCC, where a higher fidelity $F$ leads to a lower success probability $p_{\rm succ}$. 
More generally, trading off the fidelity $F$ against $p_{\rm succ}$ is relevant to the construction of quantum networks: here, the initial generation of entanglement
is typically already probabilistic such as when using a heralded scheme to produce the initial (imperfect) entanglement~\cite{duan2001long, barrett2005efficient}. Most significantly, however,
the local quantum memory used to store entanglement is itself imperfect. This means that both the initial as well as the resulting entanglement cannot
be preserved for an arbitrary amount of time. Clearly, the success probability $p_{\rm succ}$ dictates the rate at which we can hope to produce high-fidelity
entanglement between different nodes in the network. This rate imposes requirements on the coherence times of the memory if multiple entangled pairs are generated such that they should undergo further processing, for example, to generate more complex entangled states in a multi-node network. 
In such a scenario, one may thus wish to obtain a higher probability of success at the expense of a lower fidelity (or vice versa) in relation to the local storage capabilities of the nodes.

Due to a limited lifetime of local quantum memories, practical distillation schemes are not expected to employ multi-round operations in the near future. Instead, practically employed schemes
consist of applying a local operation and measurement on Alice's and Bob's side, followed by a single exchange of measurement outcomes using classical communication
in order to decide success or failure. Here, we will refer to this subset of LOCC as \emph{measure and exchange (MX) operations} due to their reduced technical 
demands (see Section~\ref{sec:RO} for a definition). 

\section{Overview}
In this paper, we develop \emph{a set of tools} for optimising and assessing existing practical distillation schemes. Specifically, our tools allow for a detailed investigation of the tradeoff between the possible output fidelity and probability of success of distillation schemes. 

\begin{itemize}
\item In Section~\ref{sec:RO}, we first formally define the set of measure and exchange (MX) operations, and illustrate it with an example of an 
existing filtering protocol.

\item In Section~\ref{sec:SDPbounds},  we state a semidefinite programming (SDP) method to compute upper bounds on the achievable fidelity (or success probability) of a distillation scheme for a given success probability (or fidelity). These methods adapt the ideas of Rains~\cite{rains2001semidefinite} as well as the later methods of Bose symmetric extensions~\cite{doherty01,doherty02} to the case of MX operations, where immediate measurements are performed to decide success or failure. We implement these methods in a numerical package that is freely available on GitHub~\cite{Note1}. 

\item In Section~\ref{sec:seesaw}, we present a numerical seesaw method based on semidefinite programming that takes a specific distillation 
scheme and entangled state as input, and iteratively searches for a better distillation scheme adapted to the state of interest. This method is also included in our numerical package.

\item In Section~\ref{sec:states}, we illustrate our method with a variety of examples, considering different entangled states of interest. We compare upper bounds attained with existing distillation schemes (and interpolations between existing distillation schemes) to determine their performance. We observe optimality for a number of schemes for specific states of interest, including modifications of such schemes and certain new schemes obtained from existing ones using our tools. Specifically, we present an instance in which the seesaw method will find an optimal distillation 
scheme from an existing one that is suboptimal for the given state. 

\item In the appendix (summary in Section~\ref{sec:states}) we employ our semidefinite programming methods to analytically prove optimality of the DEJMPS protocol~\cite{deutsch1996quantum} for distilling Bell diagonal states of rank up to three. Furthermore we show optimality  
of the distillation procedure used within the Extreme Photon Loss (EPL) remote entanglement generation scheme as described in~\cite{campbell2008measurement, nickerson2014freely}, even in the limit of asymptotically many copies. 
\end{itemize}
 
\section{Optimisation methods}
\label{sec:methods}
Let us now first define MX operations, and specify the problem of interest in terms of such operations. Throughout, we will use the convention $\sigma_X = \tr_{Y}(\sigma_{XY})$ to denote the marginal $\sigma_X$ of a larger state $\sigma_{XY}$. Moreover, for the purpose of the compactness of notation, we will often omit writing explicitly the identity matrix or the identity channel. That is, for $(\id_A \otimes M_B) \rho_{AB}$ we will often use the shorthand $M_B \rho_{AB}$ and for $(\idchan_A \otimes \Lambda_{B \rightarrow \bout}) (\rho_{AB})$ we will use $\Lambda_{B \rightarrow \bout} (\rho_{AB})$.

\subsection{Measure and exchange (MX) operations}
\label{sec:RO}
All MX operations can be modelled as completely positive trace-preserving (CPTP) maps, e.g for Alice 
\begin{align}
\Lambda_{A \rightarrow \hat{A}F_A}: \mathcal{D}\left(\hil_{A}\right) \rightarrow \mathcal{D}\left(\hil_{\hat{A}F_A}\right),
\end{align}
where $\hil_{A}$ and $\hil_{\hat{A}F_A}:=\hil_{\hat{A}} \otimes \hil_{F_A}$ denote the input and output spaces respectively and $\mathcal{D}$ denotes the set of density operators living on the space.  The registers $F_A$ and $F_B$ denote classical flag registers, which Alice and Bob will compare in order to decide success or failure.
Applying these maps locally yields the state
\begin{align}
\sigma_{\hat{A}F_A\hat{B}F_B} = \Lambda_{A\rightarrow \hat{A}F_A} \otimes \Lambda_{B \rightarrow \hat{B} F_B}\left(\rho_{AB}\right)\ .
\end{align}
Since Alice and Bob use classical communication to compare the flags, we may without loss of generality assume that the state
after a measurement on $F_A$ and $F_B$ is of the form
\begin{align}
\sigma_{\hat{A}\hat{B}F_A F_B} = \sum_{f_A, f_B} \sigma_{\hat{A}\hat{B}}^{f_A,f_B} \otimes \proj{f_A}_{F_A} \otimes \proj{f_B}_{F_B}\ , 
\end{align}
where the sum is taken over strings $f_A$ and $f_B$, and $0 \leq \tr(\sigma_{\hat{A}\hat{B}}^{f_A,f_B}) \leq 1$.  
Comparing the flags to decide success or failure can be understood as subsequently projecting the state using a projector 
\begin{align}
P_{\cmark} = \sum_{(f_A,f_B) \in \mathcal{S}} \proj{f_A}_{F_A} \otimes \proj{f_B}_{F_B}\ ,
\end{align}
where $\mathcal{S} = \{(f_A, f_B) \mid \mbox{Alice and Bob declare success}\}$. 
The success probability can thus be expressed as 
\begin{align}
p_{\rm succ} = \tr\left(P_{\cmark}\sigma_{F_A F_B}\right)\ .
\end{align}
The global state conditioned on success can in turn be written as
\begin{align}
\eta_{\hat{A}\hat{B}F_AF_B} = \frac{(\id_{\hat{A} \hat{B}} \otimes P_{\cmark}) \sigma_{\hat{A}\hat{B}F_AF_B} (\id_{\hat{A} \hat{B}}\otimes P_{\cmark})}{p_{\rm succ}}\,,
\end{align}
which has a fidelity to the ideal maximally entangled state
\begin{align}
F = \bra{\Phi_D}\eta_{\hat{A}\hat{B}}\ket{\Phi_D}\ .
\end{align}
Our formalism captures all practical schemes by appropriate definition of $P_{\cmark}$.

As an example let us consider the filtering protocol~\cite{gisin1996hidden}. This protocol is adapted to perform well for an input state with $|A|=|B|=2$ of the form
\begin{align}
\rho_{AB} = p \proj{\Phi_2} + (1-p) \proj{01}\ .
\label{eq:stateort}
\end{align}
In this procedure, Alice performs a measurement given by the POVM:
$\{M_A^{0}, M_A^{1}\}$ with $M_A^{1} = (A_A^1)^{\dag} A_A^1$, where $A_A^1 = \sqrt{\epsilon} \proj{0} + \proj{1}$ and $M_A^{0} = (A_A^0)^{\dag} A_A^0 = \id - M_A^{1}$  for some parameter $\epsilon$ determining the tradeoff between $F$ and $p_{\rm succ}$.  
In terms of the map this measurement can be expressed as
\begin{align}
\Lambda_{A \rightarrow \hat{A},F_A}(\rho) = \sum_{f_A \in \{0,1\}} A_A^{f_A} \rho \left(A_A^{f_A}\right)^\dagger \otimes \proj{f_A}_{F_A}\ .
\end{align}
Similarly, Bob performs a measurement given by the POVM:
$\{M_B^{0}, M_B^{1}\}$ with $M_B^{1} = (A_B^1)^{\dag} A_B^1$, where $A_B^{1} = \sqrt{\epsilon} \proj{1} + \proj{0}$ and $M_B^{0} = (A_B^0)^{\dag} A_B^0 = \id - M_B^{1}$, giving the map 
\begin{align}
\Lambda_{B \rightarrow \hat{B},F_B}(\rho) = \sum_{f_B \in \{0,1\}} A_B^{f_B} \rho \left(A_B^{f_B}\right)^\dagger \otimes \proj{f_B}_{F_B}\ .
\end{align}
Alice and Bob declare success if $f_A = f_B = 1$, corresponding to a choice of $P_{\cmark} = \proj{11}_{F_A F_B}$.

When optimising over measure and exchange operations, it is sometimes convenient to consider a
slightly more general class of operations which we call \emph{measure and exchange operations with shared randomness (MXS operations)}. As the name suggests, Alice and Bob have additional access to classical shared randomness, which is easy to distribute ahead of time. Specifically, if Alice and Bob have a classical symbol $r$ chosen with probability $p_r$, then they can perform MX operations that depend on $r$. This means the output state is of the form
\begin{align}
\rhoout = \sum_r p_r \Lambda_{r,A \rightarrow \aout \fa} \otimes \Lambda_{r, B \rightarrow \bout \fb}\left(\rho_{AB}\right) \ .
\end{align}
Note the set of MXS operations is a convex set unlike the set of MX operations.

\subsection{Optimising over MX operations}

\subsubsection{General form}
We are now going to consider various optimisations related to the distillation problem. As we have seen, we would like to optimize one of the three parameters $D, \, p_{\rm succ}, \, \epsilon$, where $D$ is the local output dimension, $p_{\rm succ}$ is the success probability and the fidelity is $1-\epsilon$. We will typically fix the output dimension $D$ and for now we will consider optimising the fidelity for fixed success probability $p_{\rm succ} = \delta$. It is straightforward to adapt the techniques below to optimize $p_{\rm succ}$ instead.
Ideally, we thus wish to solve the following (quadratic) optimisation problem over maps $\Lambda_{A \rightarrow \aout \fa}$ and $\Lambda_{B \rightarrow \bout \fb}$
\begin{sdp}{maximise}{$\frac{1}{\delta}\tr\left(\proj{\Phi_D}_{\aout \bout} \otimes P_{\cmark}\ \rhoout\right)$}
& $\tr\left(P_{\cmark}\sigma_{F_A F_B}\right)= \delta$\\
& $\rhoout = \Lambda_{A \rightarrow \aout \fa} \otimes \Lambda_{B \rightarrow \bout \fb}\left(\rho_{AB}\right)$\ .
\label{1stprogramme} 
\end{sdp}
\subsubsection{Simplifying the optimisation problem}

How do we optimize over quantum operations? The key is to employ the Choi isomorphism which gives a one-to-one correspondence between quantum channels and quantum states with certain properties. Specifically, for any quantum channel $\Gamma_{S \rightarrow R}$ from a system $S$ to system $R$, there corresponds a unique Choi state 
\begin{align}\label{eq:ChoiForm}
C_{RS'} = \Gamma_{S \rightarrow R}\otimes \idchan_{S'} \left(\Phi_{SS'}\right),
\end{align}
satisfying
\begin{align}
C_{RS'} \geq 0\,\,, C_{S'} = \frac{\id_{S'}}{|S|}\,,\label{eq:ChoiCond}
\end{align}
where $\Phi_{SS'}$ is the density matrix of the normalised maximally entangled state from Eq.~\eqref{eq:maxEnt} of dimension $D = |S|$. The Choi state carries all information of the original channel, in the sense that
\begin{align}
\tr\left[M_R \Gamma_{S\rightarrow R}(\rho_S)\right] = |S| \tr[M_R \otimes \rho_{S'}^T (C_{RS'})]
\end{align}
for all matrices $M_R$ on $R$.

For the case of MX operations the Choi states take a product form. This is because a maximally entangled state of a larger system whose dimension $D$ is a composite number is formed by taking the tensor product of maximally entangled states:
\begin{align}\label{eq:ChoiFormRO}
C_{\hat{A}\fa\hat{B}\fb, A'B'} &=
\Lambda_{A \rightarrow \hat{A} \fa} \otimes
\Lambda_{B \rightarrow \hat{B} \fb}
\left(\Phi_{AA'} \otimes \Phi_{BB'}\right)\ \nonumber \\
&= C_{\hat{A}\fa  A'} \otimes C_{\hat{B}\fb  B'}\ .
\end{align}
This translates the optimisation to the space of product of two Choi states.
Similarly, for MXS operations we obtain the optimisation over the subset of separable Choi states that can be decomposed as follows (we denote this set here as SEP-C):
\begin{align}\label{eq:ChoiFormROS}
C_{\hat{A}\fa\hat{B}\fb, A'B'} =  \sum_r p_r C_{r, \hat{A}\fa  A'} \otimes C_{r, \hat{B}\fb  B'}\,.
\end{align}
Note that SEP-C is a strict subset of the set SEP of separable states, since we require that the individual components satisfy the Choi condition Eq.~\eqref{eq:ChoiCond}.

Before delving into the various approaches to optimize our function below, let us first simplify the problem slightly. Our goal will be to remove the registers $\fa$ and $\fb$
from the expressions above. In particular, let us imagine that $C^*_{\hat{A}\fa,A'}$ and $C^*_{\hat{B}\fb,B'}$ are optimal solutions to the optimisation problem above.
We then claim that
\begin{align}
\tilde{C}_{\hat{A}\fa,A'} &= \sum_{f_A \in \{0,1\}} \proj{f_A}_{\fa} C^*_{\hat{A}\fa A'} \proj{f_A}_{\fa}\ ,\\
\tilde{C}_{\hat{B}\fb,B'} &= \sum_{f_B \in \{0,1\}} \proj{f_B}_{\fb} C^*_{\hat{B}\fb B'} \proj{f_B}_{\fb}\ ,
\end{align}
are also optimal. This is an immediate consequence of the fact that in our optimisation problem, we always measure the registers $\fa$ and $\fb$. We can thus without loss
of generality assume that both states are cq-states
\begin{align}
\tilde{C}_{\hat{A}\fa A'} &= \sum_{f_A \in \{0,1\}} \hat{C}_{f_A, \hat{A} A'} \otimes \proj{f_A}_{\fa}\ ,\\
\tilde{C}_{\hat{B}\fb B'} &= \sum_{f_B \in \{0,1\}} \hat{C}_{f_B, \hat{B} B'} \otimes \proj{f_B}_{\fb}\ ,
\end{align}
that is the flags are always classical registers.

Observing that our optimisation problem is only concerned with the case that Alice and Bob succeed, we can now express the problem in terms of the Choi states. We can now consider two cases:
\begin{widetext}
\begin{enumerate}
\item Some protocols have local success flags, e.g. the protocol succeeds if Alice and Bob both measure ``1'', which is the case in the filtering protocol described in Section~\ref{sec:RO} or the distillation protocol used within the EPL scheme (both are also described in Appendix~\ref{sec:fixedprot}). 
The meaning of ``local'' refers to the fact that here Alice and Bob can individually already declare failure if they observe a ``0'' (success evidently requires a comparison).
For this example we arrive at the optimisation problem
\begin{sdp}{maximise}{$\frac{|A| |B|}{\delta} \tr\left(\proj{\Phi_D}_{\aout \bout} \otimes \rho_{A'B'}^T \left(\hat{C}_{1,\hat{A} A'} \otimes \hat{C}_{1,\hat{B} B'}\right)\right)$}
& $|A||B| \tr\left[\rho_{A'B'}^T \left(\hat{C}_{1,A'} \otimes \hat{C}_{1,B'}\right)\right] = \delta$\ ,\\
& $\hat{C}_{1,\hat{A} A'} \geq 0$, $\hat{C}_{1,\hat{B} B'} \geq 0$ \ ,\\
& $\hat{C}_{1,A'} \leq \frac{\id_{A'}}{|A|}$, $\hat{C}_{1,B'} \leq \frac{\id_{B'}}{|B|}$ \ .
\label{opt:general1}
\end{sdp}
Here the last condition follows from the Choi condition Eq.~\eqref{eq:ChoiCond} because we have eliminated the states $\hat{C}_{0,\hat{A} A'}$ and $\hat{C}_{0,\hat{B} B'}$ from explicit consideration.
\item The other case is the one of the non-local success flags, e.g. Alice and Bob succeed if $f_A = f_B$. This is the case for example for the BBPSSW~\cite{bennett1996purification} or DEJMPS~\cite{deutsch1996quantum} protocols (again see also Appendix~\ref{sec:fixedprot}). In this case we obtain
\begin{sdp}{maximise}{$\frac{|A| |B|}{\delta} \tr\left(\proj{\Phi_D}_{\aout \bout} \otimes \rho_{A'B'}^T \left(\hat{C}_{1,\hat{A} A'} \otimes \hat{C}_{1,\hat{B} B'} + \hat{C}_{0,\hat{A} A'} \otimes \hat{C}_{0,\hat{B} B'}\right)\right)$}
& $|A||B| \tr\left[\rho_{A'B'}^T \left(\hat{C}_{1,A'} \otimes \hat{C}_{1,B'} + \hat{C}_{0, A'} \otimes \hat{C}_{0,B'}\right)\right] = \delta$\ ,\\
& $\hat{C}_{1,\hat{A} A'} \geq 0$, $\hat{C}_{1,\hat{B} B'} \geq 0$, $\hat{C}_{0,\hat{A} A'} \geq 0$, $\hat{C}_{0,\hat{B} B'} \geq 0$ \ ,\\
& $\hat{C}_{1,A'} + \hat{C}_{0,A'} = \frac{\id_{A'}}{|A|}$, $\hat{C}_{1,B'} + \hat{C}_{0,A'} = \frac{\id_{B'}}{|B|}$ \ .
\label{opt:general2}
\end{sdp}

\end{enumerate}
\end{widetext}

\subsection{Reliable upper bounds using SDP relaxations}\label{sec:SDPbounds}
The Choi isomorphism only transfers the optimisation from channel space to state space, but it does not deal with the (quadratic) non-convex nature of the program. In this section we perform a set of convex relaxations on the domain of optimisation. First, in Section~\ref{sec:rainsbound} we consider optimisation over positive partial transpose (PPT) operations and in Section~\ref{sec:BSEmaintext} we add an additional constraint related to the extendibility of separable states. We will call the resulting bounds reliable, since these numerical methods are guaranteed to produce an upper bound on our objective function. In contrast, later in Section~\ref{sec:seesaw} we discuss a heuristic method which does not have this property.

\subsubsection{PPT relaxations}
\label{sec:rainsbound}
The first method to obtain an upper bound on the objective is a direct extension of Rains~\cite{rains2001semidefinite}. Here, we relax the set of SEP-C states to the set of PPT Choi states--- Choi states which are positive under partial transpose. We perform an easy adaption of this method to the case of MX operations including classical flags, resulting in Optimisation Program~\ref{PPTprogramme}. This method is implemented in our numerical software package available at~\cite{Note1}.

Enforcing the PPT condition is an SDP constraint, whereas membership of SEP is more difficult to characterise and optimisation over the set of separable states is in general hard. Applying the PPT constraint to our problem means that we construct a single Choi state variable on all the registers, such that it obeys the PPT condition, i.e.,
\begin{equation}
C_{\aout F_A A' \bout F_B B'}^{\Gamma} \ge 0,
\label{eq:PPTconst}
\end{equation}
where $\Gamma$ denotes the transpose on all the registers of Bob. 

To introduce some helpful notation, we can split this Choi of the distillation channel into the success and failure parts
\begin{align}
C_{\aout F_A A' \bout F_B B'} = \hat{C}_{\cmark, \aout F_A A' \bout F_B B'} + \hat{C}_{\xmark, \aout F_A A' \bout F_B B'}
\end{align}
obeying the condition
\begin{equation}
\hat{C}_{\cmark, A' B'} + \hat{C}_{\xmark, A' B'} = \frac{\id_{A'B'}}{|A||B|}.
\label{eq:choicondsuccfail}
\end{equation}
For a protocol with local flags we have
\begin{equation}
\hat{C}_{\cmark, \aout F_A A' \bout F_B B'} = \hat{C}_{1, \hat{A} A'} \otimes \hat{C}_{1, \hat{B} B'} \otimes \proj{11}_{F_A F_B},
\end{equation}
whereas for a protocol with non-local flags
\begin{align}
\hat{C}_{\cmark, \aout F_A A' \bout F_B B'}	&= \hat{C}_{1,\hat{A} A'} \otimes \hat{C}_{1,\hat{B} B'} \otimes \proj{11}_{F_A F_B} \nonumber \\ 
		&+ \hat{C}_{0,\hat{A} A'} \otimes \hat{C}_{0,\hat{B} B'} \otimes \proj{00}_{F_A F_B}.
\end{align}

Clearly $\hat{C}_{\cmark, \aout F_A A' \bout F_B B'}$ and $\hat{C}_{\xmark, \aout F_A A' \bout F_B B'}$ are orthogonal on the flag registers.
As a result imposing the PPT constraint on $C_{\aout F_A A' \bout F_B B'}$ is equivalent to imposing it on both $\hat{C}_{\cmark, \aout F_A A' \bout F_B B'}$ and $\hat{C}_{\xmark, \aout F_A A' \bout F_B B'}$. Finally, $\hat{C}_{\xmark, \aout F_A A' \bout F_B B'}$ does not appear explicitly in our optimisation problem, but because of the relation in Eq.~\eqref{eq:choicondsuccfail}, it translates directly to the following condition on the marginal of $\hat{C}_{\cmark, \aout F_A A' \bout F_B B'}$:
\begin{equation}
\hat{C}_{\cmark,A'B'}^\Gamma \leq \frac{\id_{A'B'}}{|A||B|},
\end{equation}
where $\Gamma$ again denotes the partial transpose on all registers of B.
Of course Eq.~\eqref{eq:choicondsuccfail} also implies that
\begin{equation}
\hat{C}_{\cmark,A'B'} \leq \frac{\id_{A'B'}}{|A||B|}\,.
\end{equation}

Since in our program we have already eliminated the flags, our SDP variable is $\Ctotal$.
\emph{We note that both the case with local and non local flags as well as any other flag configuration reduce to exactly the same relaxed PPT program.}
All other constraints in terms of the reduced state of $\Ctotal$ remain the same so that now we will obtain the following program:
\begin{sdp}{maximise}{$\frac{|A| |B|}{\delta} \tr\left[\left(\proj{\Phi_D}_{\aout \bout} \otimes \rho_{A'B'}^T\right) \Ctotal\right]$}
& $|A||B| \tr\left[\left(\id_{\hat{A}\hat{B}} \otimes \rho_{A'B'}^T\right) \Ctotal\right] = \delta$\ ,\\
& $\Ctotal \geq 0$ \ ,\\
& $\Ctotal^{\Gamma} \geq 0$ \ , \\
& $\hat{C}_{\cmark,A'B'} \leq \frac{\id_{A'B'}}{|A||B|}$ \ , \\
& $\hat{C}_{\cmark,A'B'}^\Gamma \leq \frac{\id_{A'B'}}{|A||B|}$ \ .
\label{PPTprogramme} 
\end{sdp}

We give a side remark regarding terminologies. Such a PPT Choi state $C_{\aout F_A A' \bout F_B B'}$ corresponds to an operation that Rains defines as a PPT operation~\cite{rains2001semidefinite, rains1999bound, rains1999rigorous}. These PPT operations include all LOCC operations as a strict subset. Hence our relaxed program provides upper bounds on the achievable fidelity not only over MX and MXS operations but also over all LOCC operations. See Appendix~\ref{sec:PPTChoi} for a short discussion of these PPT channels.

The Optimisation Program~\ref{PPTprogramme} is a semidefinite program with very high symmetry. This allows considerable further simplifications (see Appendix~\ref{sec:symmetry}). We finally obtain the semidefinite program corresponding to the Rains style bound on the fidelity of distillation with fixed success probability $\delta$
\begin{sdp}{maximise}{$p(M_{A'B'},E_{A'B'}) = \frac{|A||B|}{\delta} \tr\left[\rho_{A'B'}^T M_{A'B'}\right]$}
& $M_{A'B'} \geq 0$, $E_{A'B'} \geq 0$\ ,\\
& $M_{A'B'} + E_{A'B'} \leq \frac{\id_{A'B'}}{|A||B|}$\ ,\\
& $M_{A'B'}^\Gamma + E_{A'B'}^\Gamma \leq \frac{\id_{A'B'}}{|A||B|}$\ ,\\
& $|A||B| \tr\left[\rho_{A'B'}^T \left(M_{A'B'} + E_{A'B'}\right) \right] = \delta$\ ,\\
& $M_{A'B'}^\Gamma + \frac{1}{D+1} E_{A'B'}^\Gamma \geq 0$ \ ,\\
& $-M_{A'B'}^\Gamma + \frac{1}{D-1} E_{A'B'}^\Gamma \geq 0$\ . 
\label{PPTprogrammeSymmetry}
\end{sdp}
Recall that $\rho_{A'B'}$ is the initial input state that Alice and Bob are attempting to distil and in most examples considered here, it will consist of two copies of some two-qubit state. In what follows and on all the plots shown in Section~\ref{sec:states} we will refer to the bound obtained using this program as the \emph{PPT bound}. 

We note here that by following an analogous procedure, one can construct a similar program which aims at maximising probability of success subject to a constraint of fixed output fidelity. This program can also be relaxed to a PPT program which is also an SDP. Effectively it results in a similar program to the one above just with the objective function and constraint on probability of success interchanged:
\begin{sdp}{maximise}{$|A||B|\tr\left[\rho_{A'B'}^T (M_{A'B'} + E_{A'B'}) \right]$}
& $M_{A'B'} \geq 0$, $E_{A'B'} \geq 0$\ ,\\
& $M_{A'B'} + E_{A'B'} \leq \frac{\id_{A'B'}}{|A||B|}$\ ,\\
& $M_{A'B'}^\Gamma + E_{A'B'}^\Gamma \leq \frac{\id_{A'B'}}{|A||B|}$\ ,\\
& $\tr\left[\rho_{A'B'}^T [(1-F) M_{A'B'} - F E_{A'B'}] \right] = 0$\ ,\\
& $M_{A'B'}^\Gamma + \frac{1}{D+1} E_{A'B'}^\Gamma \geq 0$ \ ,\\
& $-M_{A'B'}^\Gamma + \frac{1}{D-1} E_{A'B'}^\Gamma \geq 0$\ .
\label{PPTprogrammeProb}
\end{sdp}
Now $F$ is a constant fidelity and so the fidelity constraint is just:
\begin{equation}
\frac{\tr[\rho_{A'B'}^T M_{A'B'}]}{\tr[\rho_{A'B'}^T (M_{A'B'} + E_{A'B'})]} = F.
\label{eq:}
\end{equation}
Hereafter, we will drop the subscripts on $\rho,E$ and $M$ to simplify the notation.

We remark that an appealing feature of semidefinite programs is the dual~\cite{boyd2004convex} of the SDP. 
In Appendix~\ref{sec:duality} we dualise the above SDPs to obtain dual programs which depend on the variables 
$y, J, G, H, K$. We denote the objective function of the dual program as $d(y, J, G, H, K)$.
It is an appealing feature of SDP duality - known as \emph{weak duality} - that
\begin{equation}
d(y,J,G,H, K)-p(M,E) \geq 0.
\end{equation}
Finding values for $y, J, G, H$, and $K$ that satisfy the constraints of the dual SDP thus always results in upper bounds 
$d(y,J,G,H, K)\geq p^* $, where $p^*$ denotes the optimal solution of the primal program. Furthermore, if such variables satisfy $d(y,J,G,H,K) = p(M,E)$, then we know that the optimal solution has been found.

We remark that it is this feature that makes SDPs highly appealing as a numerical method, since a numerical SDP solver will find primal and dual variables which form a certificate for optimality, or - if due to finite precision in numerical calculations optimality is reached only approximately - a certificate for approximate optimality in which the difference between the dual and primal ($d-p$) is sufficiently small. 
In addition, however, SDPs can thus also be used to prove optimality analytically, if one can make an educated guess for the primal and dual variables. 

\subsubsection{Bose symmetric extensions}
\label{sec:BSEmaintext}

The goodness of the relaxation above depends on how well the set of PPT Choi states approximates the set SEP-C. A sharper approximation could evidently be obtained by approximating the set of separable states SEP itself by more stringent conditions. A standard technique for doing so is by the method of extensions~\cite{doherty01,doherty02} which is closely related to the sums-of-squares relaxations for polynomial optimisation problems.

In the case at hand, in addition to the PPT constraint in Eq.~\eqref{eq:PPTconst} we will add the constraint that the state is $k$-Bose-symmetric-extendible ($k$-BSE)~\cite{navascues2009power}. By definition, a (Choi) state $\hat{C}_{(\hat{A} A')\hat{B} B'} $ is $k$-BSE iff there exists $\hat{C}_{(\hat{A}_1A'_1)\ldots(\hat{A}_{k+1}A'_{k+1})\hat{B} B'}$ satisfying
\begin{enumerate}
\item $\hat{C}_{(\hat{A}_1A'_1)\ldots(\hat{A}_{k+1}A'_{k+1})\hat{B} B'} \geq 0$,
\item $\tr_{(\hat{A}_2A'_2)\ldots(\hat{A}_{k+1}A'_{k+1})}\left(\hat{C}_{(\hat{A}_1A'_1)\ldots(\hat{A}_{k+1}A'_{k+1})\hat{B} B'}\right) = \hat{C}_{(\hat{A} A')\hat{B} B'}$,
\item $\left(P_{\mathrm{Sym}} \otimes \id_{\hat{B} B'}\right)\left(\hat{C}_{(\hat{A}_1A'_1)\ldots(\hat{A}_{k+1}A'_{k+1})\hat{B} B'}\right) = \hat{C}_{(\hat{A}_1A'_1)\ldots(\hat{A}_{k+1}A'_{k+1})\hat{B} B'}$, where $P_{\mathrm{Sym}}$ is the projector onto the symmetric subspace of $(\hat{A}_1A'_1)\ldots(\hat{A}_{k+1}A'_{k+1})$.
\end{enumerate}
It is clear that adding this constraint to the PPT constraint constitutes a sharper approximation of SEP-C because any separable state is $k$-BSE for all $k\in\mathbb{N}$. To see this, it is sufficient to note that $\sum_i p_i \proj{u_i}^{\otimes {k+1}} \otimes \proj{v_i}$ is a $k$ Bose symmetric extension of the separable state $\sum_i p_i \proj{u_i} \otimes \proj{v_i}$.

In this way, we obtain a sharper and sharper approximation of SEP-C by choosing larger values of $k$ --- the accuracy scales not worse than $O(|\aout A'|^2/(k+1)^2)$~\cite{doherty2014entanglement}. The only drawback is the size of the resulting SDP. Although it increases only polynomially with $k$, for practically interesting problems we were only able to introduce $k=1$ Bose symmetric extensions. We refer to Appendix~\ref{sec:BSE} for the detailed calculations and the exact form of the resulting SDP. Whenever we refer to the \emph{1-BSE bound}, we mean the bound arising from this optimisation over Choi matrices that are both PPT and 1-BSE. 

\subsection{Optimising existing schemes}
\label{sec:seesaw}
While the previous methods are concerned with deriving upper bounds on the fidelity, we can as well start from an existing distillation protocol and try to find a better protocol. In the following we discuss one such a scheme that we dub the seesaw method. Looking at the original Optimisation Programs~\ref{opt:general1} and~\ref{opt:general2}, we see that there is no need for any PPT style relaxation if one of the distillation maps for either Alice or Bob is fixed: for a fixed value of one of the maps, the optimisation problem is already an SDP. 
If we thus fix the operation of Alice (or Bob), then we may use an SDP solver to optimize over the possible distillation schemes in terms
of the Choi state of Bob (or Alice). 
Once solved, we may iterate the procedure in a seesaw fashion. We now fix the operation of Bob (Alice) with the outcome of the previous step and we optimize over the operation of Alice (Bob). The optimisation problem is again an SDP. These steps can then be repeated, as often as desired optimising iteratively over either Alice or Bob. While not guaranteed to find the optimal solution, the seesaw method often performs rather well in 
practice and is implemented in our numerical package~\cite{Note1}. In fact, in the next section we provide an example where this method finds an optimal filtering scheme, as the numerical results show that it achieves fidelities corresponding to the PPT bound. We remark that given the new Choi states, one may find the corresponding isometry (or unitary) that implements the map using an ancilla (see, e.g., lecture notes~\cite{wolf:lectureNotes}) and then compile it into a quantum circuit for the specific architecture in question.

\section{States and distillation schemes}
\label{sec:states}

Let us now illustrate our methods with a number of states commonly studied in the entanglement distillation literature, or arising in experiments. 
We thereby demonstrate the use of our methods as a numerical tool to compute the trade-offs between the fidelity $F$ and probability of success $p_{\rm succ}$, as well as their
use as an analytical tool to formally prove optimality of certain entanglement distillation schemes. 
We also provide a simple example illustrating the use of the seesaw method to improve 
an existing distillation scheme for a specific state.

Here we will use the term ``a copy of a state'' to denote a two-qubit state shared between Alice and Bob. In these examples, we will for simplicity only consider distillation to a single copy i.e. when the output of the procedure is a two-qubit state. More examples can easily be explored using
the freely available numerical package~\cite{Note1}.

\subsection{Isotropic states}
As a warm-up, let us consider distilling isotropic states. These states are often considered in the quantum information theory literature
due to their beautiful symmetries. Moreover, they are the states that arise when a maximally entangled state undergoes depolarising noise, which is often used as a simplified pessimistic model for the noise caused by the imperfect operations in physical implementations of quantum memories.
Specifically, an isotropic state is of the form
\begin{align}\label{eq:wernerState}
\tau_{AB} = p \proj{\Phi_D} + \left(1-p\right) \frac{\mathbb{I}}{D^2}\ ,
\end{align}
where $\ket{\Phi_D}$ is the maximally entangled state defined in Eq.~\eqref{eq:maxEnt}. The isotropic state is invariant under $U\otimes U^*$ on $A$ and $B$ for all $U$.
\subsubsection{Numerical examples}
FIG.~\ref{fig:iso2} illustrates the upper bounds obtained by PPT and the 1-BSE relaxation, in comparison to the BBPSSW and 
DEJMPS protocols when distilling $2$ copies of the isotropic state $\rho_{AB} = \tau_{ab}^{\otimes 2}$ to a single two-qubit state (see Appendix~\ref{sec:fixedprot} for the description of these well-known protocols). We remark that when performing a single round of distillation, the two protocols coincide for the case of the isotropic state. The continuous red line corresponds to an achievable scheme based on the interpolation or extrapolation of those existing schemes. The details of how this is performed are included in Appendix~\ref{sec:interpolation} and for simplicity on the plots we always label this curve arising from both extrapolation and interpolation as ``Interpolation''. Similarly in FIG.~\ref{fig:iso3} we depict the corresponding results for distilling $3$ copies of the isotropic state $\rho_{AB} = \tau_{ab}^{\otimes 3}$ to a two-qubit state.

In FIG.~\ref{fig:iso2} and FIG.~\ref{fig:iso3} we see that both the PPT and 1-BSE bounds are non trivial and the 1-BSE bound is tighter than the PPT bound for smaller values of the probability of success. In particular we observe that deterministic distillation (with $p_{\rm succ} = 1$) when operating on 2 copies of the isotropic state is not possible. For 3 copies it is possible to deterministically increase the fidelity, and this can be achieved, e.g., using the protocol DEJMPS A (see caption of FIG.~\ref{fig:iso3} for details of this protocol).

\begin{figure}
    \includegraphics[scale=0.9]{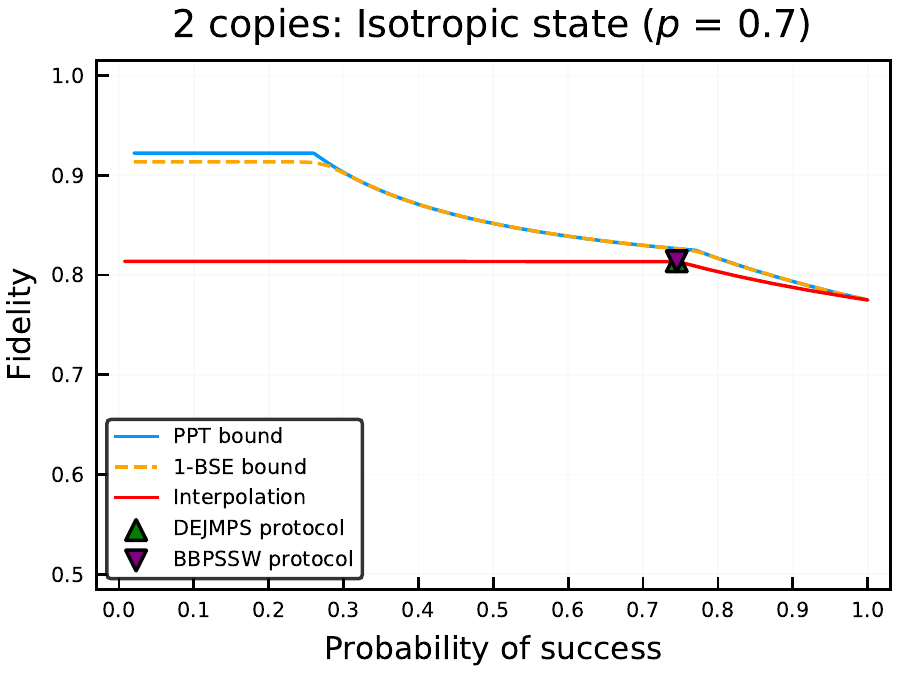}
    \caption{Distilling the isotropic states $\tau_{ab}^{\otimes 2}$ with $D=2$ and $p=0.7$ in Eq.~\eqref{eq:wernerState} to a two-qubit state. The fidelity of each input copy is $F_{\rm in} = 0.775$ and we observe that deterministic distillation (with $p_{\rm succ} = 1$) is not possible for two copies of the isotropic state. We also find that the method of $1$-BSE provides tighter bounds than the PPT method alone.}
    \label{fig:iso2}
\end{figure}
\begin{figure}
    \includegraphics[scale=0.9]{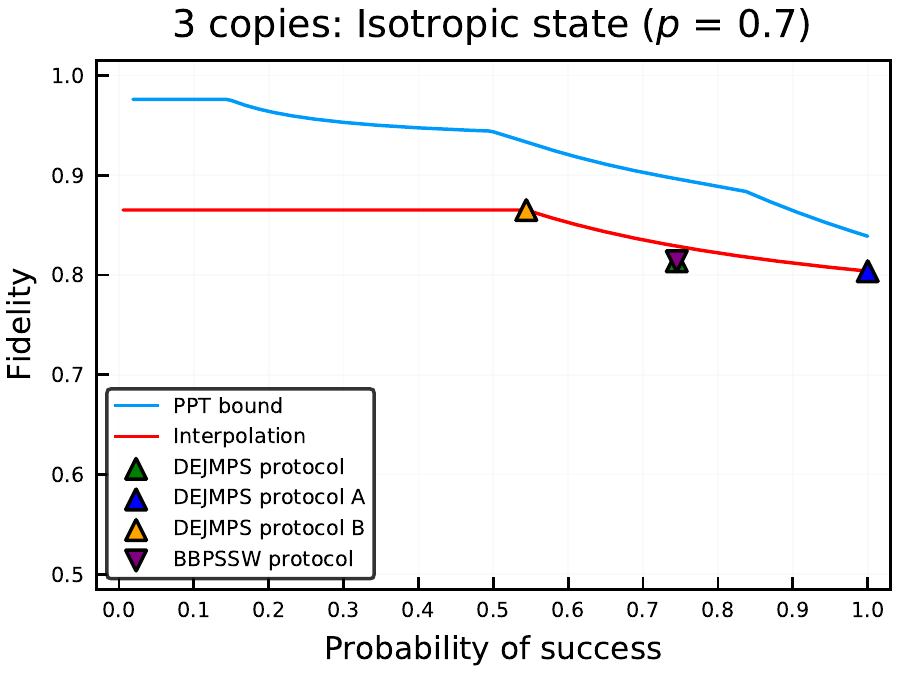}
    \caption{Distilling the isotropic states $\tau_{ab}^{\otimes 3}$ with $D=2$ and $p=0.7$ in Eq.~\eqref{eq:wernerState} to a two-qubit state. The fidelity of each input copy is $F_{\rm in} = 0.775$. The protocol DEJMPS A corresponds to applying DEJMPS to the first two copies and outputting the resulting state in case of success and outputting the remaining third copy in case of failure. This protocol allows for deterministic increase of fidelity. The protocol DEJMPS B corresponds to applying DEJMPS to the first two copies and then conditioned on success, applying it to the remaining two copies. Failure at any stage results in outputting the failure flag. The 1-BSE bound was already computationally too expensive for this 3-copy scenario.}
    \label{fig:iso3}
\end{figure}

\subsection{Bell diagonal states}

More generally, we now consider states $\tau_{AB}$ that are diagonal in the Bell basis given by
\begin{align}
\ket{\Phi^+} &= \ket{\Phi_2},\\
\ket{\Phi^-} & = (\id \otimes Z) \ket{\Phi_2},\\
\ket{\Psi^+} &= (\id \otimes X) \ket{\Phi_2},\\
\ket{\Psi^-} &= (\id \otimes X Z)\ket{\Phi_2}.
\end{align}
These are interesting states to consider since indeed any two-qubit state $\rho_{AB}$ can be brought into this form by twirling it over the group of correlated Pauli operators: $\left\{X \otimes X, Y \otimes Y, Z \otimes Z, \id \otimes \id \right\}$. This can be achieved if Alice and Bob have access to some shared randomness. We can thus consider entangled states
\begin{align}
\tau_{AB} = \, &p_1 \proj{\Phi^+} + p_2 \proj{\Psi^+} + p_3 \proj{\Phi^-} \nonumber \\
		&+(1-p_1 - p_2 - p_3) \proj{\Psi^-}\  ,
\label{eq:bellDiagState}
\end{align}
where $p_1 >0.5$ and $p_1 > p_2 \ge p_3 \ge 1-p_1 - p_2 - p_3$. Any Bell diagonal state for which one of the Bell coefficients is larger than $0.5$ can be rotated into this form 
using only local Clifford operations performed by Alice and Bob.

The distillation of such states has been studied in the literature, and we will focus here on the action of the DEJMPS protocol on these states since it is known for achieving higher fidelities than the BBPSSW protocol. 
Specifically, Alice and Bob share two copies of a Bell diagonal state $\tau_{AB}$, that is, $\rho_{AB} = \tau_{ab}^{\otimes 2}$. The decreasing order of the Bell coefficients in $\tau_{AB}$ is important as this specific ordering allows us to achieve the highest fidelity over all the orderings~\cite{dehaene2003local}.

We note that it has been recently shown that the DEJMPS protocol achieves the highest possible fidelity over LOCC operations when distilling a two-qubit state from two copies of a Bell diagonal state of rank two~\cite{ruan2017adaptive}. Moreover, in~\cite{dehaene2003local} protocols that permute Bell states in the mixture were analyzed and it was claimed that for two copies of all Bell diagonal states, DEJMPS protocol achieves the highest achievable fidelity when distilling a two-qubit state, but only among all such permuting protocols. Here our results indicate that we can make a much wider range of optimality statements about DEJMPS in relation to Bell diagonal states than has been known before.

\subsubsection{Numerical examples}

We first investigate a number of examples using our numerical procedure. We present the results in FIG.~\ref{fig:bellrank3} and in FIG.~\ref{fig:bellrank4}.
We again emphasize that for simplicity we only consider distilling a two-qubit state from two copies of a Bell diagonal state and we note that all these optimality statements apply when optimising over all LOCC protocols.

First, we observe that for all Bell diagonal states of rank up to three DEJMPS achieves the highest possible output fidelity and achieves it with the highest possible probability of success, as can be seen in a specific example in FIG.~\ref{fig:bellrank3}. This statement we also prove analytically as described in the next subsection.  Moreover, as we also illustrate in FIG.~\ref{fig:bellrank3}, we numerically observe that for Bell diagonal states of rank up to three, extrapolating from DEJMPS allows us to achieve the highest possible output fidelity for each extrapolation protocol's probability of success.

Finally, we also numerically observe that for Bell diagonal states of rank four, apart from a certain set of states including and around the isotropic state, DEJMPS achieves the highest possible fidelity for this protocol's probability of success when applied to these states. In FIG.~\ref{fig:triangle} we fix $p_1$ and $p_2$ and plot the gap between our numerical upper bound and the output fidelity of DEJMPS, both evaluated at the probability of success of DEJMPS, versus the parameter $p_3$. We see that in this space of Bell coefficients the gap vanishes when one moves far enough from the isotropic state. In this space, we observe a similar gap in any other direction away from the isotropic state. However, only by moving exactly along the axis of one of those coefficients do we obtain a gap that is symmetric around the isotropic state as in FIG.~\ref{fig:triangle}. The reason for this fact is that on those axes the two states that are located symmetrically on two sides of the peak at the isotropic state are the same up to the permutation of the Bell coefficients.
\begin{figure}
    \centering
    \includegraphics[scale=0.9]{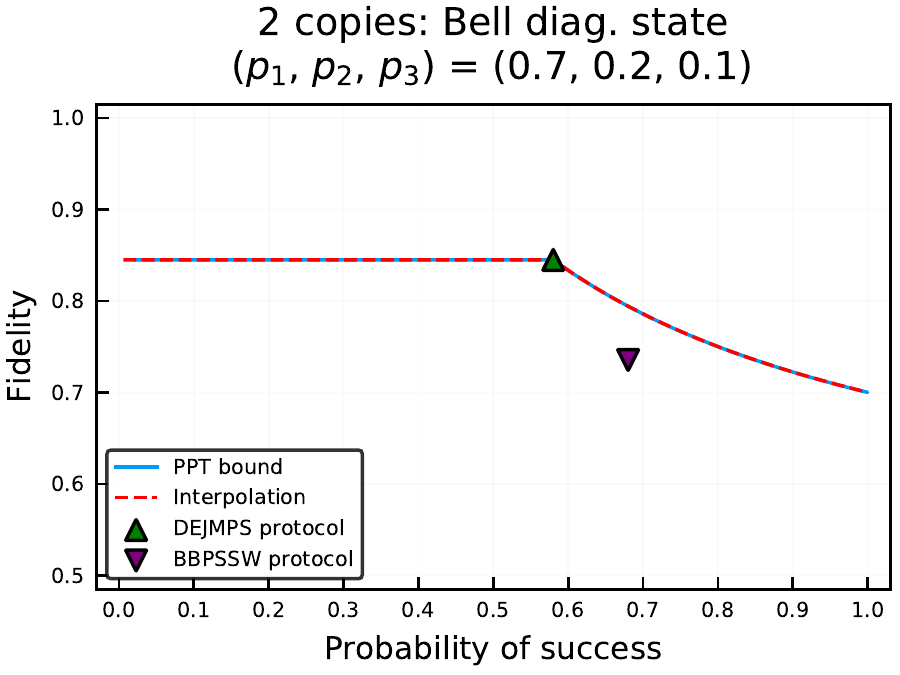}
    \caption{Distilling the Bell diagonal states of rank-three $\tau_{ab}^{\otimes 2}$ with $D=2$ and $p_1=0.7, p_2=0.2, p_3=0.1$ in Eq.~\eqref{eq:bellDiagState} to a two-qubit state. The fidelity of each input copy is $F_{\rm in} = 0.7$ and we observe that deterministic distillation (with $p_{\rm succ} = 1$) is not possible for two copies of this state. We see that DEJMPS is optimal for a mixture of three Bell states. Moreover, extrapolating from DEJMPS to higher probability of success as described in Appendix~\ref{sec:interpolation}, we see that the extrapolation curve overlaps with the PPT bound for all values of the probability of success. This means that this extrapolation also results in optimal schemes achieving the highest possible output fidelity for the specific fixed probability of success. The 1-BSE bound is not included because it overlaps with the PPT bound.}
    \label{fig:bellrank3}
\end{figure}

\begin{figure}
    \centering
    \includegraphics[scale=0.9]{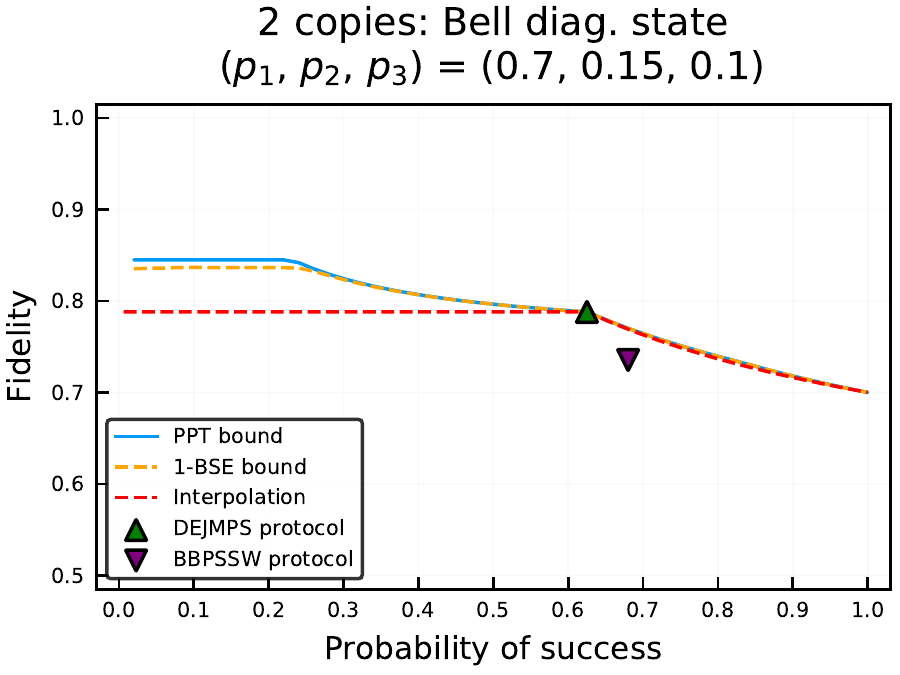}
    \caption{Distilling the Bell diagonal states of rank-four $\tau_{ab}^{\otimes 2}$ with $D=2$ and $p_1=0.7, p_2=0.15, p_3=0.1$ in Eq.~\eqref{eq:bellDiagState} to a two-qubit state. The fidelity of each input copy is $F_{\rm in} = 0.7$ and we observe that deterministic distillation (with $p_{\rm succ} = 1$) is not possible for two copies of this state. We also find that the 1-BSE bound is tighter than the PPT bound for smaller values of the probability of success. Finally, we observe that DEJMPS achieves the highest possible output fidelity for this protocol's probability of success for a mixture of four Bell states which are far enough from the isotropic state.}
    \label{fig:bellrank4}
\end{figure}

\begin{figure}
    \centering
    \includegraphics[scale=0.9]{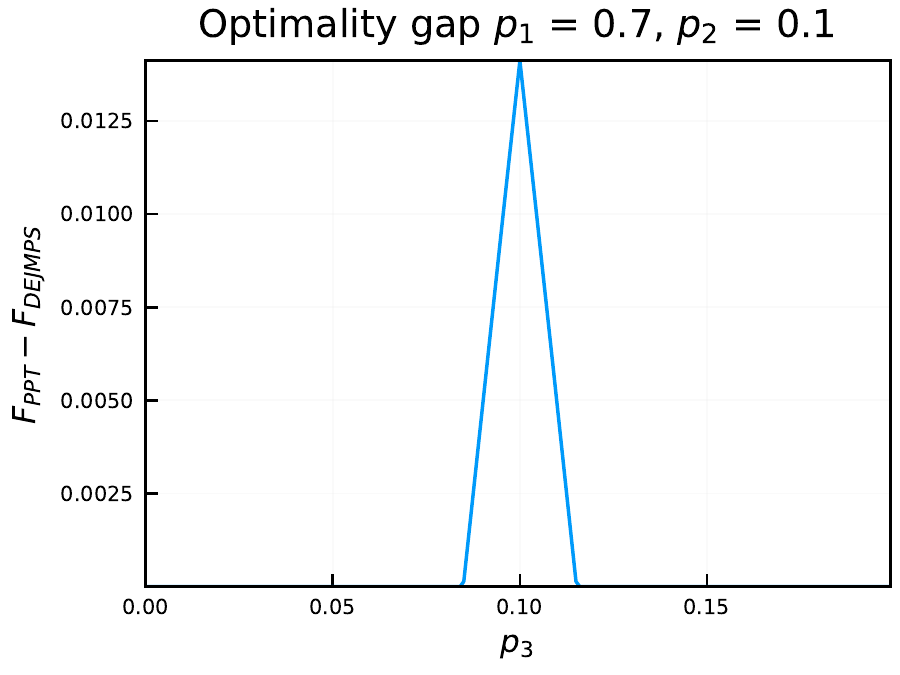}
    \caption{Distilling the Bell diagonal states of rank-four $\tau_{ab}^{\otimes 2}$ with $D=2$ and $p_1=0.7, p_2=0.1$ in Eq.~\eqref{eq:bellDiagState} to a two-qubit state. The fidelity of each input copy is $F_{\rm in} = 0.7$. The plot shows the difference between the PPT bound and the fidelity achievable through DEJMPS as a function of $p_3$ for the probability of success of DEJMPS. We see that DEJMPS achieves the highest possible output fidelity for this protocol's probability of success for a mixture of four Bell states which are far enough from the isotropic state (the middle of the peak). Clearly the states considered on this plot for which $p_3 \neq 0.1$ do not satisfy the condition $p_1 > p_2 \ge p_3 \ge 1-p_1-p_2-p_3$, therefore when applying the DEJMPS protocol to such a state we first permute the Bell coefficients to this order. The 1-BSE bound is not included because it overlaps with the PPT bound.}
    \label{fig:triangle}
\end{figure}

\subsubsection{Optimal fidelity and success probability}
\label{sec:belloptimal}

Semidefinite programming duality now allows us to prove analytically that DEJMPS is an optimal protocol for distilling from two copies of all Bell diagonal states of rank up to three, which was not known before. 
\begin{theorem}
\label{th:infdejmps}
(Informal) Given two copies of a Bell diagonal state of rank at most three and distillation towards the target maximally entangled state with $D=2$, there is no protocol that achieves a larger fidelity than DEJMPS and there is no protocol that achieves this fidelity with a larger success probability than DEJMPS.
\end{theorem}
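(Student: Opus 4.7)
The plan is to apply semidefinite programming duality to Optimisation Programs~\ref{PPTprogrammeSymmetry} and~\ref{PPTprogrammeProb}, producing explicit primal and dual solutions whose objective values meet at the DEJMPS operating point. First I would compute, in closed form, the action of DEJMPS on two copies of a Bell diagonal state with ordered Bell weights $(p_1, p_2, p_3, 0)$ (the rank-three assumption), obtaining the success probability $\delta_{\star}$ and the output fidelity $F_{\star}$. Because DEJMPS is an LOCC (hence MX-with-shared-randomness and hence PPT) protocol, its performance translates into a primal-feasible pair $(M^{\star}, E^{\star})$ for Optimisation Program~\ref{PPTprogrammeSymmetry} with primal value $p(M^{\star},E^{\star}) = F_{\star}$, giving the matching lower bound.

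Next I would construct an analytic dual-feasible ansatz for the variables $(y, J, G, H, K)$ of the dual program derived in Appendix~\ref{sec:duality}. The key simplification is symmetry: the input $\rho_{A'B'}^{T} = \tau_{ab}^{\otimes 2}$ is diagonal in the two-copy Bell basis and invariant under the correlated Pauli group $\{P \otimes P : P \in \{\id, X, Y, Z\}\}$; the partial transposes, $\id$-constraints, and the projector $|\Phi_D\rangle\langle\Phi_D|$ in the SDP all commute with a $(P\otimes P)^{\otimes 2}$ twirl. By a standard group-averaging argument one may therefore assume without loss of generality that an optimal dual solution has the same Bell-diagonal symmetry, so each dual matrix variable decomposes into $16$ blocks labelled by a pair of Bell states. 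The dual constraints, including $M^{\Gamma}+\tfrac{1}{D+1}E^{\Gamma}\ge 0$ and $-M^{\Gamma}+\tfrac{1}{D-1}E^{\Gamma}\ge 0$, then reduce to a small system of linear inequalities in a handful of scalars.

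With this symmetry reduction in hand I would solve the reduced system by reading off the complementary-slackness conditions forced by DEJMPS: on Bell basis components where $(M^{\star}, E^{\star})$ is supported the corresponding dual inequalities should be saturated, while on the unpopulated basis elements the dual inequalities must remain valid. The rank-three hypothesis $p_4 = 0$ makes one of the terms of $\rho_{A'B'}^{T}$ vanish, which removes exactly one equation from the slackness system and leaves enough freedom to satisfy every remaining positivity constraint; this is precisely what breaks down for generic rank-four Bell diagonal states, consistent with the numerical gap observed in FIG.~\ref{fig:triangle} near the isotropic state. Verifying that the resulting dual value equals $F_{\star}$ then establishes optimal fidelity at success probability $\delta_{\star}$ by weak duality. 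Finally I would run the analogous construction for Optimisation Program~\ref{PPTprogrammeProb} with $F$ fixed to $F_{\star}$, using the same twirled ansatz, to obtain that $\delta_{\star}$ is also the maximum success probability compatible with fidelity $F_{\star}$.

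The main obstacle is the third step: writing down a clean closed-form dual ansatz and checking, block by block in the Bell basis, that all matrix positivity constraints hold after partial transpose. Guessing the right structural form for $J, G, H, K$ is the creative part, but once the correct ansatz is fixed the verification becomes routine scalar inequalities on the $p_i$, whose validity hinges on the ordering $p_1 > p_2 \ge p_3$ together with $p_4 = 0$. The rest of the argument—extension from PPT to LOCC, MX, and MXS operations—follows because PPT contains all of these classes, as noted in Section~\ref{sec:rainsbound}.
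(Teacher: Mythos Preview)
Your plan is essentially the paper's approach: SDP duality on Programs~\ref{PPTprogrammeSymmetry} and~\ref{PPTprogrammeProb}, a Bell-diagonal symmetry reduction on both sides, and an explicit dual-feasible ansatz whose value matches the DEJMPS primal. Two points of departure are worth flagging. First, the theorem asserts that \emph{no} protocol (at any success probability) beats $F_\star$, so the fidelity dual must certify $p'_1$ for every $\delta\in(0,1]$, not only at $\delta_\star$; the paper achieves this by taking $y=p'_1/\delta$, $J=G=K=0$, and $H=\tfrac{|A||B|}{\delta}V$ so that $\delta$ scales out of all constraints. Second, the paper's ``creative step'' is not generic complementary slackness but the structural choice $V=V^{\Gamma}$, which in the two-copy Bell parametrisation imposes three linear relations on the ten coefficients $w_i$ and makes the matrix constraints collapse to scalar inequalities $(p'_1-1)\lambda_i+w_i\ge0$ and $p'_1\lambda_i-w_i\ge0$; the rank-three assumption enters because it is exactly what allows these inequalities to be satisfied simultaneously with $w_7=\cdots=w_{10}=0$. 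Your outline will succeed once you adopt these two specifics.
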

In the following we sketch the proof of Theorem \ref{th:infdejmps}. We leave the full details including a precise definition of optimality 
to Appendix \ref{app:belloptimal}.

The entangled Bell diagonal states of rank up to three can be written as
\begin{equation}
\tau_{AB} = p_1 \proj{\Phi^+} + p_2 \proj{\Psi^+} + (1-p_1 - p_2) \proj{\Phi^-},
\label{eq:bellDiagRank3}
\end{equation}
with $p_1 >0.5$ and $p_1 > p_2 \ge 1-p_1 - p_2$. First, we note that the DEJMPS protocol applied to two copies of the state in~Eq.~\eqref{eq:bellDiagRank3} conditioned on success results in a state
\begin{equation}
\rho_{\aout \bout} = p'_1 \proj{\Phi^+} + p'_2 \proj{\Psi^+} + (1-p'_1 - p'_2) \proj{\Psi^-},
\label{eq:bellDiagRank3out}
\end{equation}
where
\begin{align}
p'_1 &= \frac{p_1^2}{N},\\
p'_2 &= \frac{p_2^2 + (1 - p_1 - p_2)^2}{N},
\end{align}
and $N = p_1^2 + (1-p_1)^2$ is the probability that the protocol succeeds. Note that $p'_1 > p'_2 \ge 1-p'_1 - p'_2$. Moreover the fidelity increases, that is, $p'_1 > p_1$. 

The strategy to show optimal fidelity relies on the dual formulation of the SDP in Optimisation Program~\ref{PPTprogrammeSymmetry}. In particular, we prove that there exists a feasible solution of the dual program with the objective function value corresponding to $p'_1$ for all $\delta \in (0,1]$. Hence $p'_1$ is an upper bound on the achievable fidelity for all $\delta$ and there cannot exist an LOCC protocol that takes two copies of the state in Eq. \eqref{eq:bellDiagRank3} and outputs a state with fidelity larger than $p'_1$.

The proof of $N$ being the optimal success probability for all protocols that output fidelity equal to $p'_1$ also follows from SDP duality. That is, we show that there exists a feasible solution of the dual program for optimising the probability of success with the objective function taking the value $N$ for the output fidelity $F = p'_1$.

\subsection{R states}
Another interesting class of states are quantum states that form a mixture between a maximally entangled state and a product state. In particular let us first consider a case where the product part of the mixture is orthogonal to the maximally entangled part. Specifically let us consider the state
\begin{equation}
\tau_{AB} = p \proj{\Psi^{\pm}} + (1-p)\proj{11},
\label{eq:Rstate}
\end{equation}
which we will call an R state. We note that up to a local $X$ or $XZ$ gate this state is exactly the state in Eq.~\eqref{eq:stateort} that we considered in the filtering example in Section~\ref{sec:RO} (this local flip on one side will be helpful when discussing remote entanglement generation in the following section).

This type of state is interesting for two reasons. The first one is ``mathematical''. The above R state is a simple example of a state that as expressed in~\cite{campbell2010exploit} possesses local information, in the sense that the reduced state on Alice and Bob individually is not a maximally mixed state. This local information can also be seen in the non-zero off-diagonal elements when the state is expressed in the Bell basis. Since for the DEJMPS and BBPSW protocols the output fidelity and probability of success are completely independent of those off-diagonal coefficients, this local information is completely neglected in those protocols. Hence one could expect that for these states there exist distillation strategies that utilize this local information and in this way possibly outperform the DEJMPS protocol.

As observed in~\cite{bennett1996mixed} this is indeed the case, since for any value of $0 < p \le 1$ it is possible to extract a perfect Bell state from two copies of the R state by performing a bilateral CNOT, measuring the target copy in the standard basis and post-selecting the events for which both Alice and Bob measured the target copy to be one. In such a scenario of applying this protocol to two copies of the R state the fidelity of $F=1$ is achieved with probability of success $p_{\rm succ} = p^2/2$. Note that depending on the value of $p$ the R state might actually have fidelity to any maximally entangled state smaller than or equal to half. This shows a fundamental difference with respect to the protocols that do not utilize this local information like DEJMPS or BBPSSW for which it is required that the initial fidelity to some maximally entangled state is larger than 0.5 \footnote{It must be noted that there also exists a general procedure for distilling any inseparable two-qubit state, and in particular any two-qubit state whose fidelity to any maximally entangled state is smaller than or equal to half and which therefore cannot be distilled using DEJMPS or BBPSSW, see~\cite{horodecki1996distillability, verstraete2003optimal}. }.

The second reason for considering these states is experimental. These states arise in certain protocols for remote entanglement generation that use a single photon detection scheme in the presence of photon loss~\cite{cabrillo1999creation, campbell2008measurement, nickerson2014freely}. 
In particular,~\cite{nickerson2014freely} describes an entanglement generation procedure that generates two copies of a state closely related to the R state (see the next section for more details) and then performs the above described distillation protocol proposed in~\cite{bennett1996mixed} to combat the effect of photon loss. Since the authors refer to this entire entanglement generation scheme as the Extreme Photon Loss scheme (EPL), here we will refer to this distillation protocol used within the EPL procedure as EPL-D. As already mentioned and as we will discuss in the next section, the R state is still only an idealisation of the actual raw state generated within the remote entanglement generation schemes described in~\cite{campbell2008measurement, nickerson2014freely}. In particular the R state includes only noise due to the photon loss while all realistic implementations will also suffer from other types of noise.  

\subsubsection{Numerical examples}

We first look at filtering a single copy of the R state, since as stated in Section~\ref{sec:RO}, there exists a well-known protocol for filtering those states. Optimal filtering schemes have been studied in the literature~\cite{kent1999optimal, verstraete2001local, verstraete2003optimal}, but not in the context of the optimal tradeoff of fidelity and probability of success.

First, we note that the filtering scheme described in Section~\ref{sec:RO} (here we assume that before filtering, Alice applies an X or XZ operation to bring the R state to the form in Eq.~\eqref{eq:stateort}) clearly cannot increase the fidelity deterministically, while from~\cite{verstraete2003optimal} we know that for all $p< 2/3$ there exists a way of deterministically increasing the fidelity of the R state by running a probabilistic filtering protocol and outputting a product state of fidelity half in case of failure. Inspired by this result we consider here a modified version of the discussed filtering scheme in which for certain larger values of the desired success probability for R states with $p< 2/3$, conditioned on the failure of that original scheme Alice and Bob probabilistically output a state of fidelity half. The details of this modification are discussed in Appendix~\ref{sec:interpolation}.  In FIG.~\ref{fig:Rstatefilt1} and in FIG.~\ref{fig:Rstatefilt2} we compare this modified filtering scheme with our numerical bounds. We consider one example for which the input fidelity is larger and one for which it is smaller than half.

The original filtering scheme allows us to choose the desired probability of success by making a suitable choice of the $\epsilon$ parameter, while in our modified scheme success probability can also be varied by changing the probability of outputting a product state in case of failure of the original scheme (here we maximise the fidelity over those two parameters for each probability of success). We note that independently of the value of the parameter $p$ (provided that it is non-zero), in the limit of zero success probability, this filtering scheme allows for obtaining a state that is arbitrarily close to a maximally entangled state. From the numerical results we observe that for the considered values of $p$, we have that for all probabilities of success our PPT bound perfectly overlaps with the modified filtering scheme, proving that no higher fidelity can be achieved for the fixed value of probability of success than already achieved by our modified filtering scheme. Hence the modified filtering scheme is in fact optimal for these states.
\begin{figure}
    \centering
    \includegraphics[scale=0.9]{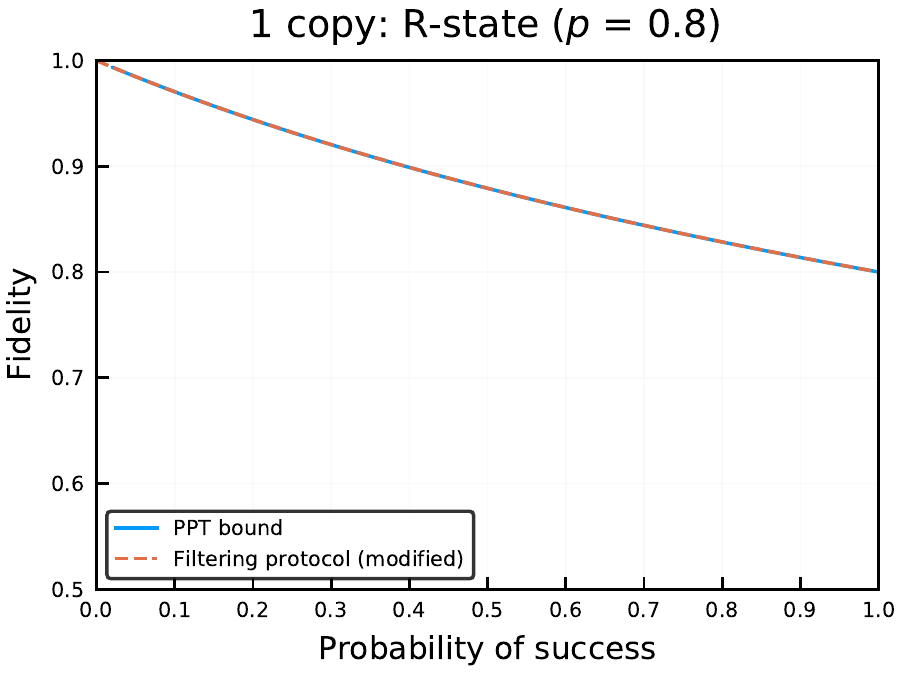}
    \caption{Filtering R state $\tau_{AB}$ with $D=2$ and $p=0.8$ in Eq.~\eqref{eq:Rstate} to a two-qubit state. The fidelity of the input copy is $F_{\rm in} = 0.8$ and in accordance with~\cite{verstraete2003optimal} we observe that deterministic filtering (with $p_{\rm succ} = 1$) is not possible for this state. We see that the filtering scheme perfectly overlaps with the PPT bound, which proves its optimality for this state. The 1-BSE bound is not included because it overlaps with the PPT bound.}
    \label{fig:Rstatefilt1}
\end{figure}

\begin{figure}
    \centering
    \includegraphics[scale=0.9]{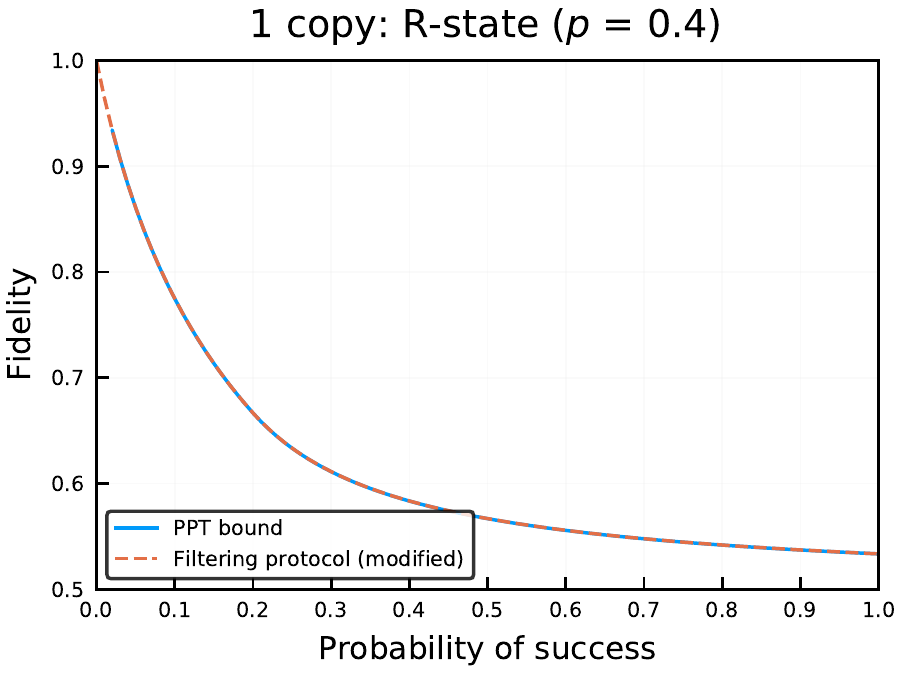}
    \caption{Filtering R state $\tau_{AB}$ with $D=2$ and $p=0.4$ in Eq.~\eqref{eq:Rstate} to a two-qubit state. The fidelity of the input copy is $F_{\rm in} = 0.4$. As first shown in~\cite{verstraete2003optimal}, we observe that for the smaller values of $p$ deterministic filtering of R states is possible and can be achieved with our scheme. We also see that the filtering scheme perfectly overlaps with the PPT bound, which proves its optimality for this state. The 1-BSE bound is not included because it overlaps with the PPT bound.}
    \label{fig:Rstatefilt2}
\end{figure}

We also present two numerical examples for distillation from two to one copies of the R state in FIG.~\ref{fig:Rstate1} and in FIG.~\ref{fig:Rstate2}. In FIG.~\ref{fig:Rstate1} we consider two copies of the R state with input fidelity of 0.8. We see that while our achievable interpolation scheme cannot deterministically increase fidelity for this state, the non-trivial numerical bounds still allow for this possibility. We also see that for this state the PPT operations allow for distilling a state very close to a maximally entangled state for much larger probability of success than the achievable interpolation scheme. In FIG.~\ref{fig:Rstate2} we consider two copies of the R state whose input fidelity is smaller than half. In this case the interpolation scheme allows for deterministic increase of fidelity above 0.5 (as discussed in the previous paragraph, for this value of $p$ that is possible even with just the modified filtering, but the interpolation scheme performs better). We see that here the PPT operations do not allow for distilling a state with fidelity close to one for probabilities of success much larger than that of the EPL-D protocol.

\begin{figure}
    \centering
    \includegraphics[scale=0.9]{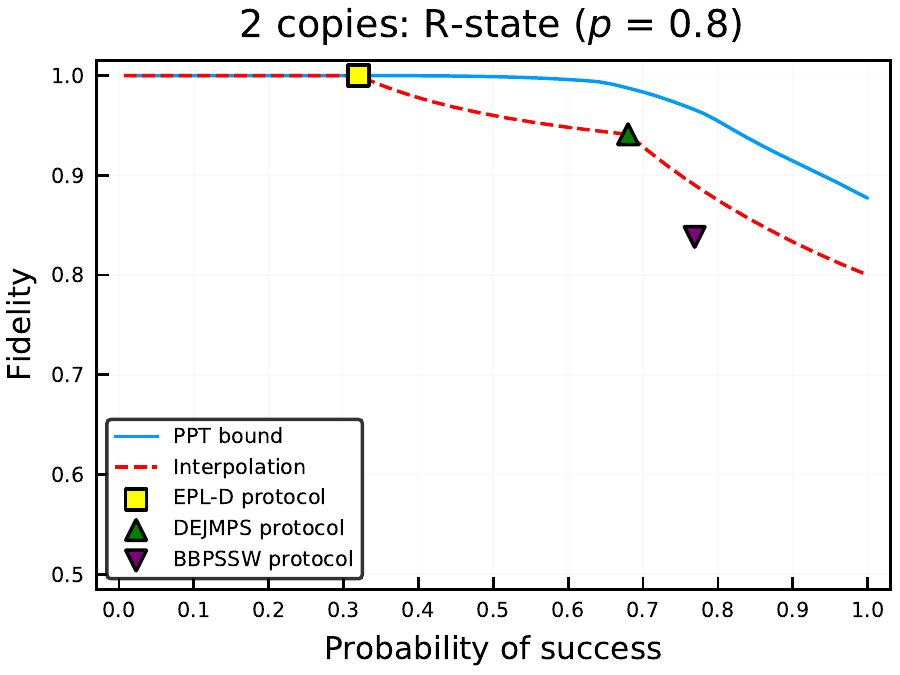}
    \caption{Distilling the R states $\tau_{ab}^{\otimes 2}$ with $D=2$ and $p=0.8$ in Eq.~\eqref{eq:Rstate} to a two-qubit state. The fidelity of the input copy is $F_{\rm in} = 0.8$ and we observe that while the extrapolation from DEJMPS does not allow for deterministic distillation (with $p_{\rm succ} = 1$) in this case, the PPT bound still allows for this possibility. We also see that EPL-D allows for achieving unit fidelity. The 1-BSE bound is not included because it overlaps with the PPT bound.}
    \label{fig:Rstate1}
\end{figure}

\begin{figure}
    \centering
    \includegraphics[scale=0.9]{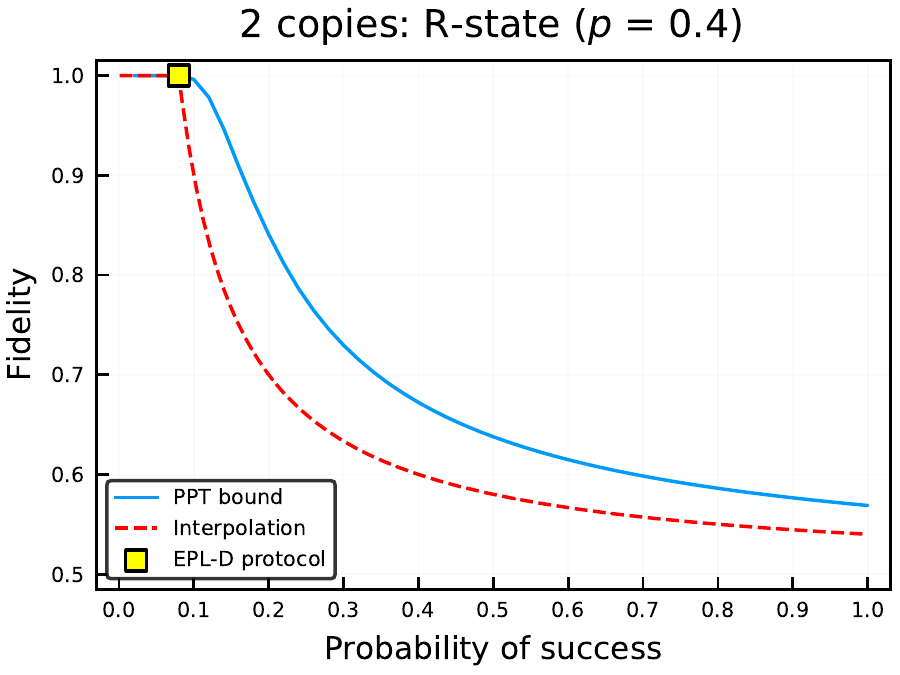}
    \caption{Distilling the R states $\tau_{ab}^{\otimes 2}$ with $D=2$ and $p=0.4$ in Eq.~\eqref{eq:Rstate} to a two-qubit state. The fidelity of the input copy is $F_{\rm in} = 0.4$ and we observe that deterministic distillation (with $p_{\rm succ} = 1$) which achieves output fidelity larger than half is easily achievable for two copies of this state. We also see that EPL-D allows for achieving unit fidelity even if $p\leq 0.5$. The 1-BSE bound is not included because it overlaps with the PPT bound.}
    \label{fig:Rstate2}
\end{figure}

\subsection{Remote entanglement generation}

Here we expand on the experimentally relevant ideas described in the previous section on R states to reliably model the remote entanglement generation through distillation, including most of the experimentally relevant sources of noise as described in~\cite{nickerson2014freely} and as realised experimentally in~\cite{kalb2017entanglement}. Specifically, in most experimental implementations of this specific entanglement generation scheme the actual state that is created will be of the form 
\begin{equation}
    \rho_{AB}(p) = \frac{1}{2 \pi}\int d \phi \tau_{A1B1}(\phi,p) \otimes \tau_{A2B2}(\phi,p),
\end{equation}
where
\begin{equation}
\tau_{AB}(\phi,p) = p \proj{\Psi^+(\phi)} + (1-p) \proj{11},
\end{equation}
and
\begin{align}
\ket{\Psi^+(\phi)} &= \frac{1}{\sqrt{2}}\left(\ket{01} + e^{i\phi} \ket{10}\right), \\
\ket{\Psi^-(\phi)} &= \frac{1}{\sqrt{2}}\left(\ket{01} - e^{i\phi} \ket{10}\right).
\end{align}
Here $\phi$ is a phase that arises due to the optical apparatus and in most cases is completely unknown. We see that the complete lack of knowledge of the phase $\phi$ leads to the uniform averaging over that phase. However, if the system is stable over the duration of generation of the two copies of $\rho$, one can assume that both of those copies are correlated in that phase.

In the next step we make this model even more precise by acknowledging the fact that the first copy of $\rho$ will actually undergo dephasing while trying to generate the second copy. Moreover, the phase will in general not be exactly the same for both copies since in any realistic setting it could drift with respect to the first copy. Mathematically, those two effects can be combined together into a single dephasing process that affects one of the two copies
\begin{equation}
    \rho_{AB}(p, p_{d}) = \frac{1}{2 \pi}\int d \phi \tau_{A1B1}(\phi,p,p_d) \otimes \tau_{A2B2}(\phi,p,1),
\label{eq:intphase}
\end{equation}
where
\begin{align}
\tau_{AB}(\phi,p,p_d)	&= p \left(p_d \proj{\Psi^+(\phi)} \right. \nonumber \\
				&\left. + (1-p_d) \proj{\Psi^-(\phi)})\right) + (1-p) \proj{11}.
\end{align}
Here we shall refer to the state in Eq.~\eqref{eq:intphase} as ``R-state correlated phase''. In this scenario the successful implementation of the EPL-D distillation protocol (followed by a local rotation) leads to the output state
\begin{equation}
\eta_{\aout \bout}(p_d) = p_d \proj{\Phi^+} + (1-p_d)\proj{\Phi^-},
\label{eq:dephasedState2}
\end{equation}
with probability of success $p_{\rm succ} =p^2/2$. We also provide a more detailed description of this remote entanglement generation scheme in Appendix~\ref{sec:fixedprot}. 
\subsubsection{Numerical examples}

We present two numerical examples for applying distillation to the state $\rho_{AB}(p, p_{d})$ in FIG.~\ref{fig:corrphase1} and in FIG.~\ref{fig:corrphase2}. We observe that EPL-D saturates the bound by achieving the highest possible fidelity with the highest possible probability of success. Moreover, we observe that extrapolating from EPL-D to higher values of probability of success also achieves the highest possible fidelity for the corresponding value of the probability of success.

\begin{figure}
    \centering
    \includegraphics[scale=0.9]{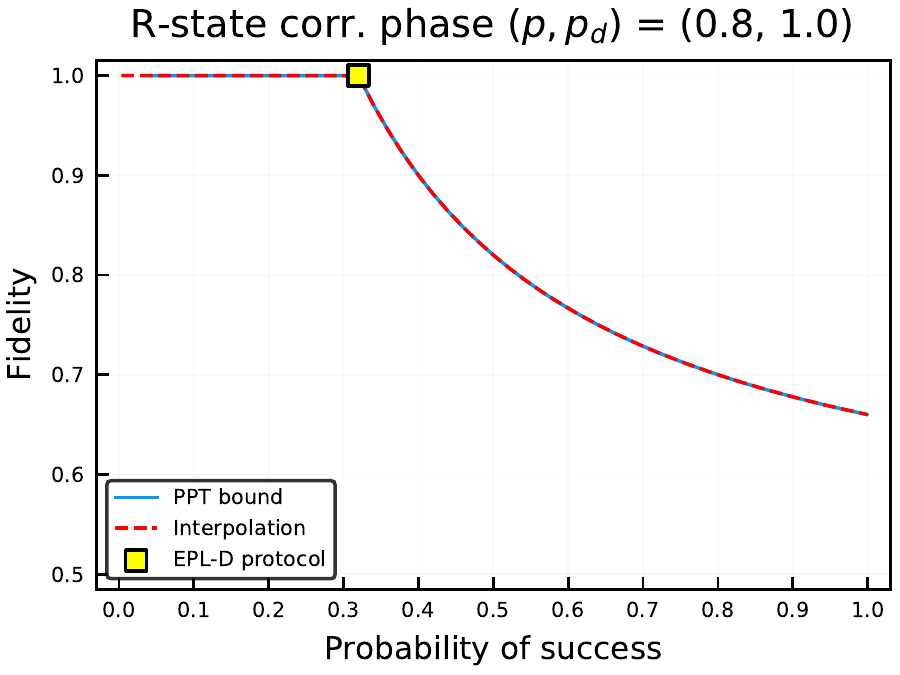}
    \caption{Distilling the R-state correlated phase $\rho_{AB}(p, p_{d})$ given in Eq.~\eqref{eq:intphase} with $D=2$ and $p=0.8, p_d=1$ to a two-qubit state. We see that EPL-D is an optimal distillation protocol for the EPL remote entanglement generation scheme. The red extrapolation curve perfectly overlaps with the PPT bounds which means that the protocols arising by extrapolating EPL-D to higher values of probability of success are also optimal and achieve the maximum possible fidelity for the corresponding probability of success. The 1-BSE bound is not included because it overlaps with the PPT bound.}
    \label{fig:corrphase1}
\end{figure}

\begin{figure}
    \centering
    \includegraphics[scale=0.9]{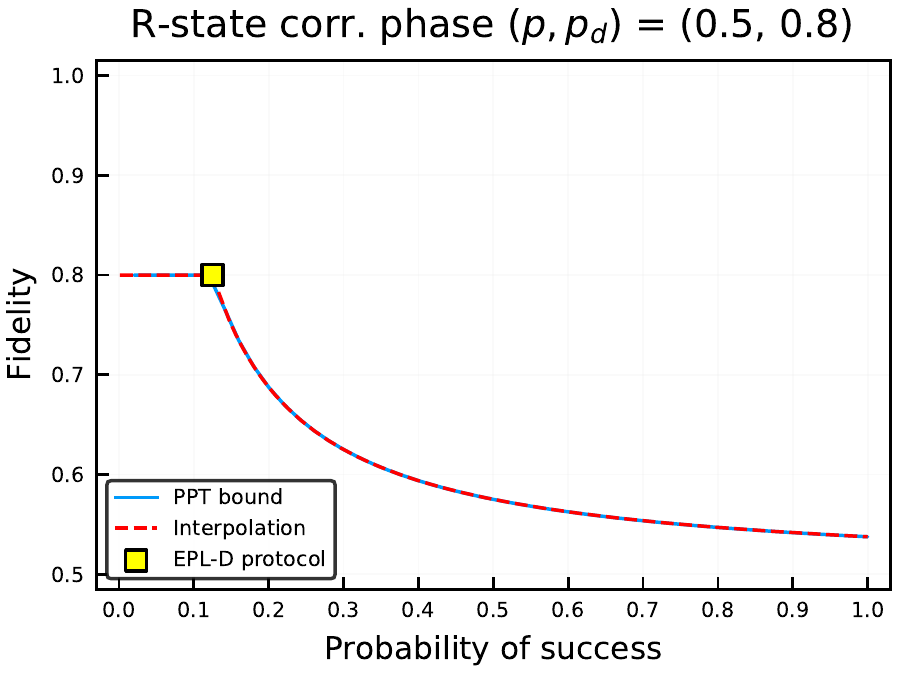}
    \caption{Distilling the R-state correlated phase $\rho_{AB}(p, p_{d})$ given in Eq.~\eqref{eq:intphase} with $D=2$ and $p=0.5, p_d=0.8$ to a two-qubit state. EPL-D is an optimal distillation protocol for the EPL remote entanglement generation scheme. The red extrapolation curve perfectly overlaps with the PPT bounds which means that the protocols arising by extrapolating EPL-D to higher values of probability of success are also optimal and achieve the maximum possible fidelity for the corresponding probability of success. The 1-BSE bound is not included because it overlaps with the PPT bound.}
    \label{fig:corrphase2}
\end{figure}

\subsubsection{Optimal fidelity and probability of success}
The numerical examples suggest that the EPL-D protocol is optimal for distilling states $\rho_{AB}(p, p_{d})$ given in Eq.~\eqref{eq:intphase} both in terms of output fidelity and probability of success. This means that the EPL scheme utilizes the optimal distillation protocol in this respect.
\begin{theorem}
\label{th:infepl}
Given a state of the form $\rho_{AB}(p,p_d)$ given in Eq.~\eqref{eq:intphase} and distillation towards the target maximally entangled state with $D=2$, there is no protocol that achieves a larger fidelity than EPL-D and there is no protocol that achieves this fidelity with a larger success probability than EPL-D.
\end{theorem}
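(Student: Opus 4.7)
The plan is to mirror the semidefinite programming duality argument used to prove Theorem~\ref{th:infdejmps}. I would begin by verifying explicitly that EPL-D, followed by the indicated local rotation, maps $\rho_{AB}(p,p_d)$ conditioned on success to $\eta_{\hat A\hat B}(p_d)=p_d\proj{\Phi^+}+(1-p_d)\proj{\Phi^-}$ with success probability $p^2/2$, so that the achieved fidelity is $p_d$. The theorem then reduces to the two claims (i) no PPT protocol can exceed fidelity $p_d$ at any success probability $\delta\in(0,1]$, and (ii) no PPT protocol can reach fidelity $p_d$ with success probability strictly greater than $p^2/2$. Since LOCC is contained in PPT, these imply the stated optimality over LOCC.

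For part (i) I would turn to the dual of Optimisation Program~\ref{PPTprogrammeSymmetry} and construct a closed-form feasible point $(y,J,G,H,K)$ whose dual objective evaluates to $p_d$ independently of $\delta$; weak duality then certifies $p_d$ as a tight upper bound. For part (ii) I would do the same for the dual of Optimisation Program~\ref{PPTprogrammeProb} at fixed output fidelity $F=p_d$, producing a feasible dual point of value $p^2/2$. In both cases the achievability half follows from EPL-D itself, so producing the dual certificates is the only real work.

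A crucial simplification comes from the symmetries of $\rho_{AB}(p,p_d)$. Averaging over the optical phase $\phi$ renders the state invariant under $(e^{i\theta Z_A}\otimes e^{-i\theta Z_B})^{\otimes 2}$ for every $\theta\in[0,2\pi)$. I would use this $U(1)$ symmetry to twirl the dual variables into the corresponding commutant, which is low dimensional and naturally contains the target projector $\proj{\Phi_2}$ together with the relevant two-copy coherences between $\proj{\Psi^\pm(\phi)}$ and $\proj{11}$. After twirling, the PSD conditions $M^{\Gamma}+(D+1)^{-1}E^{\Gamma}\ge 0$, $-M^{\Gamma}+(D-1)^{-1}E^{\Gamma}\ge 0$, and the marginal inequalities on $M+E$ decompose into a handful of block-wise scalar inequalities that can be checked analytically for all admissible $(p,p_d)$.

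The main obstacle I expect is guessing the correct analytic form of the dual variables. The state is not Bell-diagonal, so unlike the DEJMPS case the optimal certificate must mix the $\proj{11}$ subspace with the $\proj{\Psi^\pm(\phi)}$ coherences, and the asymmetry between the two copies (one dephased with parameter $p_d$, the other not) means the ansatz cannot be taken invariant under copy swap. My concrete strategy would be to solve the SDPs numerically on a grid of $(p,p_d)$ values, read off the rank structure and eigenvectors of the optimal dual matrices, and interpolate a polynomial-in-$(p,p_d)$ ansatz. The final verification should then amount to checking positivity of a small number of $2\times 2$ and $4\times 4$ blocks, reducing to elementary polynomial inequalities in $p$ and $p_d$ over the parameter range $p\in(0,1]$, $p_d\in[1/2,1]$.
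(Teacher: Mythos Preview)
Your SDP-duality plan is a valid route in principle---the numerics in the paper (Figures~\ref{fig:corrphase1} and~\ref{fig:corrphase2}) confirm that the PPT bound is tight for these states, so feasible dual certificates with the required values do exist---but the paper proves Theorem~\ref{th:infepl} by a completely different and much shorter structural argument that avoids SDP duality altogether.

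The paper's proof (Appendix~\ref{sec:EPL}) proceeds as follows. Performing the $\phi$-integral explicitly shows that $\rho_{AB}(p,p_d)$ is block diagonal in the computational basis: up to rearrangement it is a convex combination $\rho=\tr[L]\rho^L+\tr[I]\rho^I+\tr[F]\rho^F$ where $\rho^L$ and $\rho^I$ are diagonal in the product basis (hence separable, so any LOCC protocol achieves fidelity at most $1/2$ on them), and $\rho^F$ has weight $\tr[F]=p^2/2$ and is supported on a two-dimensional block that, after a local relabeling, is exactly the rank-two Bell diagonal state $p_d\proj{\Phi^+}+(1-p_d)\proj{\Phi^-}$. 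An elementary convexity lemma (Lemma~\ref{maxfidmixture}) then gives $f_{\mathrm{opt}}(\rho)\le\max_i f_{\mathrm{opt}}(\rho_i)=f_{\mathrm{opt}}(\rho^F)$, and the non-filterability of rank-two Bell diagonal states~\cite{verstraete2003optimal} yields $f_{\mathrm{opt}}(\rho^F)=p_d$. A companion convexity lemma (Lemma~\ref{optprobsuccmix}) bounds the success probability of any protocol attaining fidelity $p_d$ by the weight of the unique maximizing component, namely $p^2/2$. Both bounds are achieved by EPL-D.

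In short, the paper exploits the fact that the phase average makes the state nearly classical except on one two-qubit block, turning the problem into ``optimal fidelity of a convex mixture whose only entangled component is already non-filterable''. This entirely sidesteps the dual-variable guesswork you correctly identify as the main obstacle in your approach. Your route would work but trades a two-lemma argument for a numerically guided search for closed-form certificates; the benefit of your approach would be uniformity with the DEJMPS proof, at the cost of considerably more labor.
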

It turns out that in this case it is possible to analytically prove this optimality in a simple way without using the SDP formulation. Specifically, see Appendix~\ref{sec:EPL} for the proof, that after performing the integration over the phase $\phi$, the state  $\rho_{AB}(p, p_{d})$ is actually block diagonal in the standard basis, where one of the blocks is of size two and all the other blocks are of size one. Clearly the blocks of size one correspond to separable states. Hence, output fidelity is maximised by projecting onto the size two block. Finally, this block is equivalent up to a local relabelling to the state $\eta_{\aout \bout}(p_d)$ in Eq.~\eqref{eq:dephasedState2}. Since this state is non-filterable in the sense that even probabilistically no LOCC scheme can increase its fidelity~\cite{verstraete2003optimal}, the optimal protocol cannot achieve fidelity higher than $p_d$ which is achieved by EPL-D within the EPL scheme.

The same argument also implies that within EPL, EPL-D achieves fidelity $p_d$ with maximum probability. More concretely, the probability of the projection onto the size-two block succeeds with probability at most $p^2/2$ which is the success probability of EPL-D within EPL.

\subsubsection{Optimality with respect to distillable entanglement}
Recall that the distillable entanglement of a state is defined as the optimal asymptotic rate at which it is possible to transform copies of the state into copies of the maximally entangled state. It turns out that within EPL, EPL-D is also optimal for distillable entanglement. More concretely:
\begin{theorem}
\label{th:distepl}
Given a state of the form $\rho_{AB}(p,p_d)$ given in Eq.~\eqref{eq:intphase}, there is no protocol with the success probability of EPL-D that outputs a state with larger distillable entanglement. Equally there is no protocol that outputs a state with the same distillable entanglement as EPL-D and succeeds with larger probability.
\end{theorem}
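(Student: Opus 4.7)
The plan is to derive Theorem~\ref{th:distepl} from the block-diagonal structure of $\rho_{AB}(p,p_d)$ already established in the proof of Theorem~\ref{th:infepl}, together with two standard properties of the distillable entanglement $E_D$: its exact decomposition on states whose blocks can be sorted by LOCC, and its ensemble monotonicity $q\cdot E_D(\sigma)\le E_D(\rho)$ whenever an LOCC protocol on $\rho$ outputs $\sigma$ with probability $q$.

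First I would compute $E_D(\rho_{AB}(p,p_d))$ exactly. The lower bound $E_D(\rho_{AB}(p,p_d)) \ge (p^2/2)\,E_D(\eta_{\aout\bout}(p_d))$ is immediate: run EPL-D, which is LOCC, to produce $\eta(p_d)$ with probability $p^2/2$, and then apply optimal asymptotic distillation on the successful branch; the failure branch is separable and contributes nothing. For the matching upper bound I would use the explicit block-diagonal decomposition from Theorem~\ref{th:infepl}: the state is a convex combination $(p^2/2)\,\tilde{\eta}_{AB} + (1-p^2/2)\,\sigma^{\mathrm{sep}}_{AB}$ supported on orthogonal subspaces whose projectors factor as products of local projectors on Alice and Bob (up to the local relabelling that identifies $\tilde{\eta}$ with $\eta(p_d)$). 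Because this block-flag is locally accessible, Alice and Bob can perform an LOCC measurement that sorts the state into its entangled branch and its separable one-dimensional blocks without disturbing the within-branch state, giving $E_D(\rho_{AB}(p,p_d)) \le (p^2/2)\,E_D(\eta(p_d))$ and thus equality.

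Next I would apply ensemble monotonicity. Any LOCC protocol on $\rho_{AB}(p,p_d)$ that succeeds with probability $q$ and outputs $\sigma_{\aout\bout}$ admits, in the asymptotic setting, a natural reduction: applying the protocol to $n$ independent copies of the input yields approximately $nq$ independent copies of $\sigma$ (by the law of large numbers), which can then be further distilled at rate $E_D(\sigma)$, so that the overall LOCC rate from $n$ copies is at least $nq\,E_D(\sigma)$. This forces
\begin{equation}
q\cdot E_D(\sigma) \;\le\; E_D(\rho_{AB}(p,p_d)) \;=\; (p^2/2)\,E_D(\eta(p_d)).
\end{equation}
Both claims of Theorem~\ref{th:distepl} then follow: fixing $q=p^2/2$ gives $E_D(\sigma)\le E_D(\eta(p_d))$, and fixing $E_D(\sigma)=E_D(\eta(p_d))$ (a nontrivial condition precisely when $p_d\ne 1/2$, since $\eta(1/2)$ is separable) gives $q\le p^2/2$; EPL-D attains equality in both cases.

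The main obstacle will be making the local-sorting step fully rigorous, which reduces to a concrete check on the explicit form of $\rho_{AB}(p,p_d)$ after $\phi$-integration: one needs to verify that the two-dimensional entangled block is spanned by two product computational-basis vectors $\{\ket{\alpha}_A\ket{\gamma}_B,\ket{\beta}_A\ket{\delta}_B\}$ and that every one-dimensional block is similarly a product of computational-basis vectors, so that the rank-two local projectors isolating the entangled block pick up no cross-leakage from the separable blocks. Once this verification is complete, the ensemble-monotonicity step and the extraction of the two halves of the theorem are routine applications of known properties of $E_D$.
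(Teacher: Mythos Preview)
Your proposal is correct, and its skeleton---compute $E_D(\rho_{AB}(p,p_d))$ exactly, then apply the ensemble-monotonicity bound $q\,E_D(\sigma)\le E_D(\rho)$---matches the paper's; your monotonicity step is precisely the paper's Lemma~\ref{lem:distopt}. The genuine difference is in how the upper bound $E_D(\rho_{AB}(p,p_d))\le(p^2/2)E_D(\eta(p_d))$ is obtained. The paper goes through the relative entropy of entanglement: it exhibits an explicit separable state $\sigma^{\mathrm{SEP}}_{AB}(p)$ and computes $S(\rho\Vert\sigma^{\mathrm{SEP}})=(p^2/2)(1-h(p_d))$, which upper-bounds $E_R$ and hence $E_D$. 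You instead recycle the block-diagonal structure already established for Theorem~\ref{th:infepl}, argue that $\rho$ is LOCC-equivalent to a locally flagged mixture, and invoke flag-additivity of $E_D$. Your route is more self-contained (no auxiliary entanglement measure, no need to guess the right comparison state) and reuses prior structural work; the paper's route trades that for not having to verify that the block projectors factor locally.

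One point of care in your write-up: the sorting measurement is an LOCC map \emph{from} $\rho$ \emph{to} the flagged ensemble, so monotonicity gives $E_D(\rho)\ge(p^2/2)E_D(\eta(p_d))$, not $\le$ as you write. The upper bound comes from the reverse LOCC (discard the flag and re-embed each branch via local isometries), or equivalently from the flag-additivity lemma you already cite at the outset---so this is a presentational fix, not a gap. For the local-projector check you flag as the main obstacle: after permuting registers to $A_1A_2B_1B_2$ the entangled block sits on $\mathrm{span}\{\ket{01}_A\ket{10}_B,\ket{10}_A\ket{01}_B\}$, and the rank-two local projectors $P_A=P_B=\proj{01}+\proj{10}$ isolate it cleanly because the two remaining product vectors $\ket{01}_A\ket{01}_B$ and $\ket{10}_A\ket{10}_B$ in their joint range carry zero weight in $\rho$.
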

We defer the proof of Theorem~\ref{th:distepl} to Appendix~\ref{sec:EPL}. The informal argument relies on the fact that the distillable entanglement of the output of a distillation protocol multiplied by the rate of successful distillation cannot be larger than the distillable entanglement of the original state; that is, we must have that
\begin{equation}
p_{\text{succ,EPL}} E_D(\eta_{\hat A\hat B}(p_d))\leq E_D((\rho_{AB}(p, p_d)).
\label{eq:EPLdistopt}
\end{equation}
In the case of EPL, the distillable entanglement of the state $\rho_{AB}(p, p_d)$ equals $p_{\text{succ,EPL}}(1-h(p_d))$ (see Appendix~\ref{sec:EPL}) while the distillable entanglement of the output state of EPL-D, $\eta_{\hat A\hat B}(p_d)$, is $1-h(p_d)$, where $h(x)=-x\log x-(1-x)\log(1-x)$ is the binary entropy function~\cite{pirandola2017fundamental}. This proves that we actually have equality in Eq.~\eqref{eq:EPLdistopt}. The result is stronger in the case that there is no dephasing, i.e. $p_d=1$. In this case, EPL-D outputs perfect EPR pairs at the distillable entanglement rate. Hence, EPL-D is then by definition optimal within EPL. 

\subsection{S states}
We have already looked at the R state, a simple mixture of a Bell state with a product state. However, we have only considered the scenario when the product state is orthogonal to the given Bell state. As we have already seen those states are easy to both distil and filter.  Specifically, we have seen that from two copies of such a state we can obtain a perfect maximally entangled state with finite probability of success and even from a single copy in the limit of zero probability of success, a perfect maximally entangled state can also be filtered. It is now interesting to see what happens if this product noise is not orthogonal to that Bell state. Hence we will now consider the state
\begin{equation}
\tau_{AB}= p \proj{\Phi^+} + (1-p)\proj{11},
\label{eq:sState}
\end{equation}
which we will call an S state.

\subsubsection{Numerical examples}
The first property of this S state that we have verified numerically is that it is less filterable than the R state, meaning that even at the expense of the probability of success it is not possible to achieve arbitrarily high output fidelity through local filtering. However, we show here that by applying the seesaw optimisation from existing schemes to such local filtering of the S state, we find a new protocol that is more suited to those states. Namely, we start from the filtering scheme described in Section \ref{sec:RO}. We see in FIG.~\ref{fig:Sstate1} that the seesaw method improves the output fidelity of the original filtering protocol designed to perform well on states given in Eq.~\eqref{eq:stateort}. We observe that the new protocol obtained using the seesaw method overlaps with the PPT bound which proves its optimality for the considered state.

\begin{figure}
    \centering
    \includegraphics[scale=0.9]{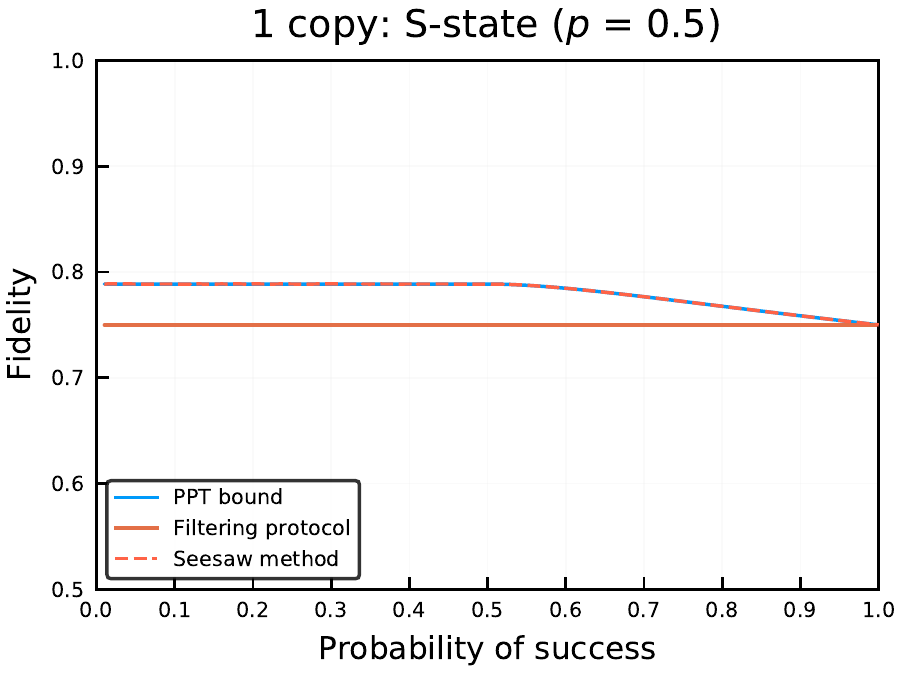}
    \caption{Filtering S state $\tau_{AB}$ with $D=2$ and $p=0.5$ in Eq.~\eqref{eq:sState} to a two-qubit state. The fidelity of the input copy is $F_{\rm in} = 0.75$. We see that deterministic increase of fidelity ($p_{\rm succ} = 1$) is not possible. We also observe that the filtering scheme designed to work well for states given in Eq.~\eqref{eq:stateort} is not able to improve the fidelity of the S state for any value of the probability of success. However, after applying the seesaw method to this protocol we obtain a new filtering protocol that allows for increasing fidelity of this state. Since the curve corresponding to that protocol overlaps with the PPT bound, we see that this protocol is in fact optimal for this state. The 1-BSE bound is not included because it overlaps with the PPT bound.}
    \label{fig:Sstate1}
\end{figure}

We then investigate distillation on two copies of such an S state. We plot our numerical results in FIG.~\ref{fig:Sstate2}. We see that distilling these states is harder than distilling R states in the sense that the output fidelity of one is no longer achievable for any probability of success. Moreover, our interpolation scheme does not allow for deterministic increase of fidelity which we see is possible using PPT operations. The numerical results also suggest that DEJMPS protocol is optimal for distilling these states, such that it allows us to achieve the highest possible output fidelity for this protocol's probability of success when operating on these states.

\begin{figure}
    \centering
    \includegraphics[scale=0.9]{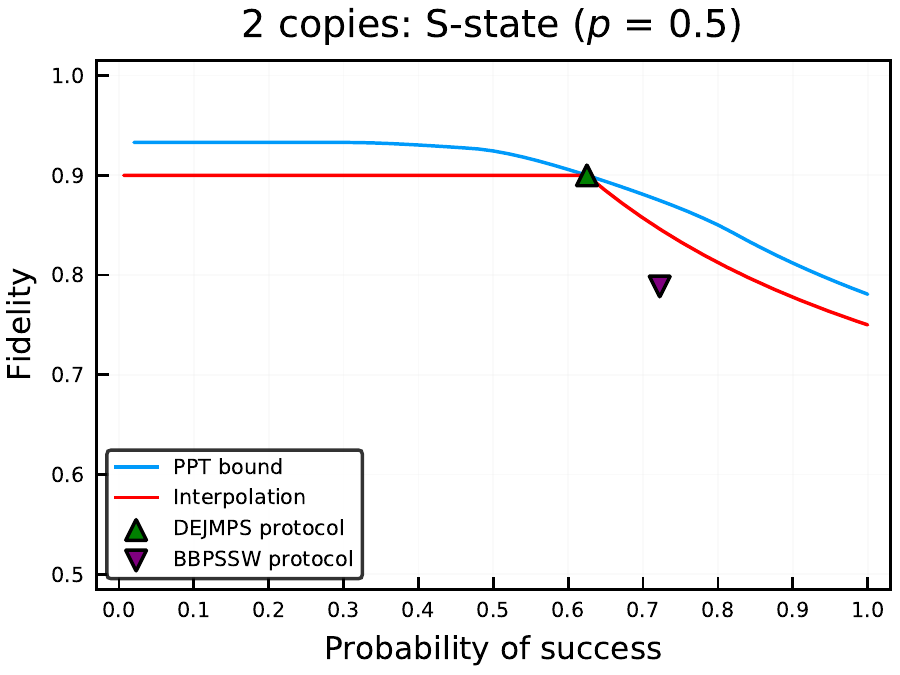}
    \caption{Distilling the S states $\tau_{ab}^{\otimes 2}$ with $D=2$ and $p=0.6$ in Eq.~\eqref{eq:sState} to a two-qubit state. The fidelity of the input copy is $F_{\rm in} = 0.75$ and we observe that while the extrapolation from DEJMPS does not allow for deterministic distillation (with $p_{\rm succ} = 1$) in this case, the PPT bound still allows for this possibility. We also observe that DEJMPS allows us to achieve the highest fidelity for the corresponding probability of success. The 1-BSE bound is not included because it overlaps with the PPT bound.}
    \label{fig:Sstate2}
\end{figure}

\section{Discussion}
We have provided and studied several methods to understand the trade-off between fidelity and probability of success in 
practical entanglement distillation schemes. 
The fidelity is thereby of interest not only because it is a commonly estimated measure in experiment, but most significantly 
because it bears a direct
relation to the possible fidelity of teleportation using the entanglement generated~\cite{hhh:teleportationChannels}. Given that the deterministic 
transmission of qubits in present day systems relies on the heralded generation of entanglement, followed by deterministic teleportation (see e.g.~\cite{pfaff2014unconditional}), the fidelity is thus of central interest in a quantum network. Evidently, it is an interesting open question to derive tradeoffs between the success probability and different entanglement measures.

Looking at the method of Bose symmetric extensions employed here, one might wonder whether one might also employ methods based on $\epsilon$ nets (see, e.g.,~\cite{shiwu}) in order to tackle our optimisation problem. Here an $\epsilon$ net is placed on the set of operations, and every point in this $\epsilon$ net is checked. Whereas this ``try everything'' approach seems rather trivial it does actually (in terms of $\epsilon$) not lead to a computationally (in terms of $k$) more expensive optimisation than the methods of $k$ Bose symmetric extensions when optimising over the set of separable states. We remark that while this comparison is evidently very interesting and fruitful from a complexity theoretic perspective, it is not of great practical interest for the small values of $k$ for which it is feasible to evaluate the SDP. Here, the corresponding $\epsilon$ of the net is still very large, meaning we try out only relatively few points, leading to uninteresting solutions. In contrast, the method of $k$ Bose symmetric extensions actually performs not so badly even for $k=1$ in a more practical fashion. We remark that the method of $\epsilon$ nets can of course be used to optimize over MX operations directly. It is straightforward to adapt the methods of~\cite{shiwu} to derive conditions for optimising over the set of Choi states instead of all states, and then explore the resulting $\epsilon$ net to optimize. This evidently leads to statements on the complexity of optimising over Choi states, but does not lead to a practically realizable method which is the interest of the present article.

One might also wonder whether there exist good heuristic methods based on semidefinite programming in order to derive actual distillation schemes other than using the seesaw method starting from an existing scheme. This indeed may sound quite appealing given heuristics for imposing rank constraints on SDP variables: in our case, we could make explicit a potential ancilla that Alice and Bob may use in their distillation scheme. Fixing an ancilla of a desired maximum size, the Choi state is then pure if we include the purifying ancilla. As such, heuristics such as~\cite{datorro} that confine the rank of the entire state including the ancilla to be 1, approximate the set of pure states, and could thus give rise to a heuristic method for optimising over MX operations directly. In our situation, however, an implementation of~\cite{datorro} did not lead to any interesting results, which is why this method is omitted from this article. Nevertheless, it is an interesting open question to find good heuristic methods for optimising over the set of MX operations. 

\acknowledgments
We acknowledge helpful discussions with Stefan B{\"a}uml, Earl Campbell, Ben Criger, Kenneth Goodenough, Peter Humphreys, Jonas Helsen, Karol Horodecki, J\k{e}drzej Kaniewski, Joel Klassen, Victoria Lipinska, Corsin Pfister, Patrick Rall, J\'er\'emy Ribeiro, Bartosz Regula, Liangzhong Ruan, Mark Steudtner and Benjamin Vervliet. FR, TS, LPT, DE and SW were supported by STW, NWO VIDI, NWO Zwaartekracht QSC, and an ERC Starting Grant. AD was supported by the Australian Research Council through the Centre of Excellence in Engineered Quantum Systems (CE110001013 and CE170100009).

\onecolumngrid
\appendix

\section{PPT Choi states}
\label{sec:PPTChoi}
In this appendix we briefly discuss the connection between the PPT channels and PPT Choi states. The connection between the PPT channels and Jamiolkowski operator has been discussed in~\cite{rains2001semidefinite}; however here we are interested in the Choi isomorphism and so for clarity we describe this connection for the Choi isomorphism.

Following~\cite{rains1999bound}, we first recall the definition of a PPT operation:
\begin{definition}
A quantum operation $\Psi_{AB \rightarrow \aout \bout}$ is a PPT operation if the superoperator $\Psi^{\Gamma}_{AB \rightarrow \aout \bout}$ is completely positive. Here, $\Psi^{\Gamma}_{AB \rightarrow \aout \bout}$ is defined such that:
\begin{equation}
\Psi^{\Gamma}_{AB \rightarrow \aout \bout}: \, \rho_{AB} \rightarrow (\Psi_{AB \rightarrow \aout \bout}(\rho_{AB}^{\Gamma_B}))^{\Gamma_{\bout}},
\end{equation}
with $\Gamma_B$ and $\Gamma_{\bout}$ denoting partial transposes on systems $B$ and $\bout$.
\end{definition}
Now we can easily prove that a PPT Choi state corresponds to a PPT operation.
\begin{lemma}
A quantum operation $\Psi_{AB \rightarrow \aout \bout}$ is a PPT operation if and only if its Choi state $C_{\aout \bout A' B'}(\Psi)$ is PPT.
\end{lemma}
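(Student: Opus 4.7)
The plan is to prove the lemma by showing that the Choi state of $\Psi^{\Gamma}$ is exactly the partial transpose of the Choi state of $\Psi$ on Bob's systems, and then invoking the standard Choi--Jamiolkowski criterion that a map is completely positive iff its Choi state is positive semidefinite. Concretely, I want to establish the identity
\begin{equation}
C_{\aout\bout A'B'}(\Psi^{\Gamma}) \;=\; \left[ C_{\aout\bout A'B'}(\Psi) \right]^{\Gamma_{\bout B'}} ,
\end{equation}
after which the lemma follows by reading off both sides through Choi's theorem.

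To obtain this identity, I would start from the definitions
$C(\Psi) = (\Psi_{AB\to\aout\bout}\otimes\idchan_{A'B'})(\Phi_{AA'}\otimes\Phi_{BB'})$ and
$\Psi^{\Gamma}(\rho_{AB}) = [\Psi(\rho^{\Gamma_B})]^{\Gamma_{\bout}}$, and plug the latter into the formula for the Choi state. The crucial observation is the well-known ``transpose trick'' for the maximally entangled state: the partial transpose on one side equals the partial transpose on the other, i.e.
\begin{equation}
\Phi_{BB'}^{\Gamma_{B}} \;=\; \Phi_{BB'}^{\Gamma_{B'}} ,
\end{equation}
which follows by direct computation from $|\Phi\rangle = \frac{1}{\sqrt{|B|}}\sum_i |ii\rangle$. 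Combined with the fact that $\Psi$ acts trivially on $B'$ and therefore commutes with $(\cdot)^{\Gamma_{B'}}$, this yields
\begin{align}
C(\Psi^{\Gamma})
&= \left[\Psi\!\left(\Phi_{AA'}\otimes \Phi_{BB'}^{\Gamma_B}\right)\right]^{\Gamma_{\bout}}
   = \left[\Psi\!\left(\Phi_{AA'}\otimes \Phi_{BB'}^{\Gamma_{B'}}\right)\right]^{\Gamma_{\bout}} \\
&= \left[\Psi\!\left(\Phi_{AA'}\otimes \Phi_{BB'}\right)^{\Gamma_{B'}}\right]^{\Gamma_{\bout}}
   = \left[C(\Psi)\right]^{\Gamma_{\bout B'}} ,
\end{align}
which is the desired identity.

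With that in hand, the lemma is immediate: $\Psi$ is PPT by definition iff $\Psi^{\Gamma}$ is completely positive, which by Choi's theorem is equivalent to $C(\Psi^{\Gamma})\ge 0$, which by the identity above is equivalent to $[C(\Psi)]^{\Gamma_{\bout B'}}\ge 0$, i.e.\ to $C(\Psi)$ being PPT (recall that in the lemma's setting $\Gamma$ denotes the partial transpose on all of Bob's registers $\bout$ and $B'$). The only real subtlety, and thus the ``main obstacle'', is keeping the bookkeeping of which systems are being transposed and where partial transposes commute through maps straight; once the identity $\Phi^{\Gamma_B}=\Phi^{\Gamma_{B'}}$ and the locality of $\Psi$ with respect to the reference systems are used correctly, the rest is essentially a one-line corollary of Choi's theorem.
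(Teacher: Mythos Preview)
Your proof is correct and follows essentially the same strategy as the paper: establish the identity $C(\Psi^{\Gamma}) = [C(\Psi)]^{\Gamma_{\bout B'}}$ and then invoke Choi's theorem. The only difference is cosmetic: you derive the identity via the elementary ``transpose trick'' $\Phi_{BB'}^{\Gamma_B}=\Phi_{BB'}^{\Gamma_{B'}}$ and commute the $B'$-transpose past $\Psi$, whereas the paper reaches the same identity by invoking the more general observation $(\Psi\otimes\idchan)(\Phi)=(\idchan\otimes T\circ\Psi^{\dagger}\circ T)(\Phi)$ applied with the transpose map; your route is a special case of theirs and arguably the more direct one.
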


\begin{proof}
We use without proof the following simple observation: for every linear map $\Psi_{A \rightarrow \aout}$, it follows
\begin{equation}
(\Psi_{A \rightarrow \aout} \otimes \idchan_B)(\Phi_{AB}) = (\idchan_{\aout} \otimes T_{B} \circ (\Psi_{\bout \rightarrow B})^{\dag} \circ T_{\bout})(\Phi_{\aout \bout})
\end{equation}
where $T$ denotes the transpose map and $\Psi^{\dag}$ is the adjoint of $\Psi$ (i.e., the unique linear map satisfying $\tr\left(\rho \Psi (\sigma)\right) = \tr\left(\sigma \Psi^\dag (\rho)\right)$).

Consider the Choi matrix of the map $\Psi^{\Gamma}$:
\begin{align}
C_{\aout \bout A' B'}(\Psi^{\Gamma})	&=  (\Psi_{AB \rightarrow \aout \bout}^{\Gamma} \otimes \idchan_{A'B'}) \Phi_{ABA'B'} = (T_{\bout} \circ \Psi_{AB \rightarrow \aout \bout} \circ T_{B}\otimes \idchan_{A'B'}) \Phi_{ABA'B'} \\
							&=(T_{\bout} \circ \Psi_{AB \rightarrow \aout \bout} \otimes  T_{A'B'} \circ (T_{B'})^\dag \circ T_{A'B'}) \Phi_{ABA'B'}
\end{align}
It can be easily verified that $ (T_{B'})^\dag = T_{B'}$, so that
\begin{equation}
C_{\aout \bout A' B'}(\Psi^{\Gamma})	 = (T_{\bout} \otimes T_{B'})C_{\aout \bout A' B'}(\Psi) = C_{\aout \bout A' B'}(\Psi))^{\Gamma_{\bout B'}} 	
\end{equation}
Now it can be clearly seen that 
\begin{equation}
(C_{\aout \bout A' B'}(\Psi))^{\Gamma_{\bout B'}} \ge 0 \iff C_{\aout \bout A' B'}(\Psi^{\Gamma}) \ge 0 \iff \Psi^{\Gamma} \text{ is a completely positive map}
\end{equation}
which concludes the proof.
\end{proof}

\section{Background: Well-known protocols}
\label{sec:protocols}

For convenience, we briefly state the well-known protocols from the literature which we compare to our PPT and 1-BSE bounds. We also describe how we can interpolate or extrapolate new schemes from those existing ones in order to obtain schemes that allow us to succeed with arbitrary desired probability.
\subsection{Fixed protocols}
\label{sec:fixedprot}

Firstly we state again the filtering protocol~\cite{gisin1996hidden} that has already been mentioned in Section~\ref{sec:RO}:

\begin{algorithm}[H]
\caption{filtering protocol}\label{algorithm:protocol}
\begin{algorithmic}[1]

	\State Perform local measurements given by the POVMs:
$\{M_A^{0}, M_A^{1}\}$ and $\{M_B^{0}, M_B^{1}\}$ with $M_A^{1} = (A_A^1)^{\dag} A_A^1$, where $A_A^1 = \sqrt{\epsilon} \proj{0} + \proj{1}$ and $M_A^{0} = (A_A^0)^{\dag} A_A^0 = \id - M_A^{1}$ and with $M_B^{1} = (A_B^1)^{\dag} A_B^1$, where $A_B^{1} = \sqrt{\epsilon} \proj{1} + \proj{0}$ and $M_B^{0} = (A_B^0)^{\dag} A_B^0 = \id - M_B^{1}$  for some parameter $\epsilon \in [0,1]$.
	\State Communicate the results.
	\If{The measurement outcomes corresponding to $M_A^{1}$ and $M_B^{1}$ are obtained}
	 	\State Output the post-measurement state.
	\EndIf
\State \textbf{return}

\end{algorithmic}
\end{algorithm}

This protocol is designed to perform well for the state $\rho_{AB} = p \proj{\Phi_2} + (1-p) \proj{01}$ defined in Eq.~\eqref{eq:stateort} [which is the R state defined in Eq.~\eqref{eq:Rstate} up to a local bit (and phase) flip]. For this state, conditioned on success the post-measurement state is: $\eta_{\aout \bout} = \frac{p\epsilon}{p_{\textrm{succ}}} \proj{\Phi_2} + \frac{(1-p)\epsilon^2}{p_{\textrm{succ}}} \proj{01}$ with fidelity $F = \frac{p\epsilon}{p_{\textrm{succ}}}$ and with the probability of success of the filtering procedure given by $p_{\textrm{succ}} = p\epsilon + (1-p)\epsilon^2$. At the end of Appendix~\ref{sec:interpolation} we describe the modification of this filtering scheme that allows us to achieve higher fidelities for R states with smaller values of $p$ in the case of larger desired probability of success.

Now we will describe the distillation procedures that perform distillation from two to one copies of a two-qubit state. The most generic distillation protocol is the BBPSSW protocol \cite{bennett1996purification} which is applicable to states whose fidelity with some maximally entangled state satisfies $F>0.5$.

\begin{algorithm}[H]
\caption{BBPSSW protocol}\label{algorithm:protocol}
\begin{algorithmic}[1]

	\State Depolarise the two available copies of the state to the isotropic state form:
	$$\tau = p \proj{\Phi^+} + (1-p) \frac{\id}{4},$$ with fidelity $F = (3p+1)/4$.
	\State Apply bi-local CNOT gates between the two copies.
	\State Measure the target qubits and communicate the results.
	\If{The measured flags are 00 or 11 (this occurs with probability $p_{\rm succ} =F^2+2F(1-F)/3+5[(1-F)/3]^2$) }
	 	\State The source (first) copy becomes more entangled than before (fidelity to $\ket{\Phi^+}$ increases). We obtain a Bell diagonal state with fidelity $F'$ such that
	$$F' = \frac{F^2 + [(1-F)/3]^2}{p_{\rm succ}}.$$
	\EndIf
\State \textbf{return}

\end{algorithmic}
\end{algorithm}

The protocol that can often achieve higher output fidelity than BBPSSW is the DEJMPS protocol \cite{deutsch1996quantum}, which we show is optimal for rank-three Bell diagonal states Eq.~\eqref{eq:bellDiagState}. Specifically, we consider a version of DEJMPS in which the Bell coefficients are first permuted in a way which maximises output fidelity \cite{dehaene2003local}. Again, this protocol is applicable to states whose fidelity with some maximally entangled state satisfies $F>0.5$.

\begin{algorithm}[H]
\caption{DEJMPS protocol}\label{algorithm:protocol}
\begin{algorithmic}[1]

	\State Twirl the two available copies of the state to the Bell diagonal state using LOCC
	\State Perform local rotations on both Alice's and Bob's qubits so that the two copies are in the form
	$$\tau = p_1 \proj{\Phi^+} + p_2 \proj{\Psi^+} + p_3 \proj{\Phi^-} +p_4 \proj{\Psi^-},$$
	with $p_1 >0.5$, $p_1 > p_2 \ge p_3 \ge p_4$ and $p_1 + p_2 + p_3 + p_4 = 1$. This ordering of the Bell coefficients allows to achieve the highest fidelity \cite{dehaene2003local}.
	\State Perform additional rotations: rotate both qubits on Alice's side by $\pi/2$ around $X$-axis and by $-\pi/2$ on Bob's side.
	\State Apply bi-local CNOT gates between the two copies.
	\State Measure the target qubits and communicate the results.
	\If{The measured flags are 00 or 11 (this occurs with probability $p_{\rm succ} =(p_1 + p_4)^2 + (p_2 + p_3)^2$) }
	 	\State The source (first) copy becomes more entangled than before (fidelity to $\ket{\Phi^+}$ increases). We obtain a state:
	$$\eta = p'_1 \proj{\Phi^+} + p'_2 \proj{\Psi^+} + p'_3 \proj{\Psi^-} +p'_4 \proj{\Phi^-},$$
	with $p'_1 = (p_1^2 + p_4^2)/p_{\rm succ}, \, p'_2 = (p_2^2 + p_3^2)/p_{\rm succ}, \, p'_3 = 2p_2 p_3/p_{\rm succ}, \, p'_4 = 2p_1 p_4/p_{\rm succ}.$
	\EndIf
\State \textbf{return}

\end{algorithmic}
\end{algorithm}

Finally, we also describe the simple protocol first proposed in \cite{bennett1996mixed} which allows us to probabilistically distill a maximally entangled state from two copies of the R state defined in Eq.~\eqref{eq:Rstate}. Since this distillation protocol is utilized within the Extreme Photon Loss (EPL) entanglement generation scheme~\cite{campbell2008measurement, nickerson2014freely} (see below), we refer to it here as EPL-D.  

\begin{algorithm}[H]
\caption{EPL-D protocol}\label{algorithm:protocol}
\begin{algorithmic}[1]

	\State Apply bi-local CNOT gates between the two copies.
	\State Measure the target qubits and communicate the results.
	\If{The measured flags are 11}
	 	\State Output the source (first) copy.
	\EndIf
\State \textbf{return}

\end{algorithmic}
\end{algorithm}

When applied to two copies of the R state defined in Eq.~\eqref{eq:Rstate}, the EPL-D protocol extracts a perfect maximally entangled state with probability of success given by $p_{\rm succ} = p^2/2$. Since R states arise in the remote entanglement generation scheme that uses a single photon detection scheme~\cite{cabrillo1999creation}, EPL-D will be a very natural element of such a remote entanglement generation scheme. The scheme for remote entanglement generation using a single photon detection scheme and a distillation operation under the condition of extreme photon loss has been proposed in~\cite{campbell2008measurement}. Here we will consider an adaptation of this entanglement generation scheme as proposed in~\cite{nickerson2014freely}, which performs distillation on the modified version of R states that includes the noise arising from the lack of knowledge about the internal phase of the generated entangled state and possible additional dephasing. The scheme presented in~\cite{nickerson2014freely}, which we will refer to here as the Extreme Photon Loss (EPL) scheme utilizes EPL-D to eliminate both the effect of photon loss and lack of knowledge about the internal phase of the generated states. We describe the whole procedure in detail below.

\begin{algorithm}[H]
\caption{EPL entanglement generation scheme}\label{algorithm:protocol}
\begin{algorithmic}[1]

	\State Generate node-photon entanglement at both remote nodes, where the photonic qubit is encoded in the presence-absence of a photon.
	\State Send the photonic qubit towards a beam-splitter station in the middle.
	\State Conditioned on the detection of a single photon, store the resulting state in quantum memories.
	\State Repeat the above procedure to generate the second copy.
	\State Assuming stability of the experimental apparatus over the time of generating those two copies, Alice and Bob share then an effective state:
	$$\rho_{AB}(p, p_{d}) = \frac{1}{2 \pi}\int d \phi \tau_{A1B1}(\phi,p,p_d) \otimes \tau_{A2B2}(\phi,p,1),$$ where
	$$\tau_{AB}(\phi,p,p_d) = p \left(p_d \proj{\Psi^+(\phi)} + (1-p_d) \proj{\Psi^-(\phi)})\right) + (1-p) \proj{11}.$$ The dephasing noise corresponds to the decoherence of the quantum memories storing the first copy, while attempting to generate the second one and to the possible small drifts in the phase $\phi$ between the two copies.
	\State Apply EPL-D distillation scheme.
	\If{EPL-D succeeds (this occurs with probability $p_{\rm succ} = p^2/2$) }
	 	\State After Alice applies additional local rotation, we obtain a state:
	$$\eta_{\aout \bout}(p_d) = p_d \proj{\Phi^+} + (1-p_d)\proj{\Phi^-},$$
	with fidelity $p_d$.
	\EndIf
\State \textbf{return}

\end{algorithmic}
\end{algorithm}

\subsection{Interpolating and extrapolating between and from the fixed schemes}
\label{sec:interpolation}

We note that having access to shared randomness, Alice and Bob can also apply a mixture of existing schemes. Consider two protocols with probability of success given by $p_1$ for the first one and $p_2$ for the second one. Also let the output fidelity conditioned on success be given by $F_1$ and $F_2$ for the two protocols respectively. Then if Alice and Bob share a classical coin with probability distribution $(r,1-r)$, i.e., with probability $r$ the coin outputs head and with probability $1-r$ it outputs tail, then they can construct a new protocol in which they first toss the coin and depending on the outcome they apply either the first or the second scheme. This new scheme has a probability of success given by:
\begin{equation}
p_{\rm succ} = r p_1 + (1-r)p_2,
\end{equation}
and the output fidelity conditioned on success will now be given by:
\begin{equation}
F = \frac{1}{p_{\rm succ}}\left( r p_1 F_1 + (1-r) p_2 F_2 \right).
\end{equation}

It is also possible to easily extrapolate from an existing scheme. Consider a protocol that succeeds with probability $p_1$ with the output fidelity conditioned on success given by $F_1$. Then one can also trivially achieve the same fidelity for any smaller value of $p_{\rm succ}$ by first performing that protocol, then conditioned on its success throwing a coin and effectively accepting the output of the protocol only for one of the outcomes of the coin.

It is also possible to extrapolate in the direction of higher probability of success. For all the considered states apart from the scenario of remote entanglement generation and R states with smaller values of the $p$ parameter, we consider the following extrapolation scheme from a fixed protocol $\mathcal{P}$ when considering distillation from two to one copies. Alice and Bob first throw a coin with probability distribution $(r,1-r)$ and depending on the outcome they either apply the protocol $\mathcal{P}$, which upon success occurring with probability $p$ outputs a state of fidelity $F_{\text{out}}$, or they output one of the input copies of fidelity $F_{\text{in}}$. This scheme has a probability of success
\begin{equation}
p_{\rm succ} = r p  + (1-r),
\end{equation}
and the output fidelity conditioned on success will now be given by:
\begin{equation}
F = \frac{1}{p_{\rm succ}}\left( r p F_{\text{out}} + (1-r) F_{\text{in}} \right).
\end{equation}

In the case of remote entanglement generation using EPL, the state from which we distill is not a simple tensor product of two copies and therefore the above extrapolation scheme could not be applied in this case. Hence, we then apply a different scheme. In this case Alice and Bob first apply the EPL-D protocol which upon success occurring with probability $p$ outputs a state of fidelity $F_{\text{out}}$. In the case in which EPL-D fails, they throw a coin with probability distribution $(r,1-r)$. Then for one of the coin outcomes Alice and Bob output a separable state of fidelity $1/2$, and declare failure for the other outcome. This gives
\begin{equation}
p_{\rm succ} = p + (1-p) r,
\end{equation}
with the output fidelity given by:
\begin{equation}
F = \frac{1}{p_{\rm succ}}\left( p F_{\text{out}} + (1-p) r \, \frac{1}{2}  \right).
\end{equation}

It also turns out that for R states with $F_{\text{in}} < 2 - \sqrt{2}$ it is also better in terms of output fidelities to apply this extrapolation scheme to EPL-D without interpolating with DEJMPS at all.

Finally we also describe the extrapolation-based modified filtering protocol which we apply to the states defined in Eq.~\eqref{eq:stateort} (rotated R states). In this scheme Alice and Bob apply the filtering protocol as described in Appendix~\ref{sec:fixedprot}, but in the case of failure they throw a coin with probability distribution $(r,1-r)$ and depending on the outcome they either output a state of fidelity half or declare a failure. This leads to the new overall probability of success given by $p_{\textrm{succ}} = p\epsilon + (1-p)\epsilon^2 +(1- p\epsilon - (1-p)\epsilon^2)r $ and new output fidelity given by $F = [2p\epsilon + (1 - p\epsilon - (1-p)\epsilon^2)r]/2p_{\textrm{succ}}$.  For fixed value of the probability of success one can then optimize the fidelity over $\epsilon$ and $r$. The result shows that the modification (throwing a coin with non zero probability of outputting a product state) helps for $p <2/3$ for larger values of the success probability. In particular after fixing $p_{\textrm{succ}}$ the optimal output fidelity that can be obtained using this protocol is given by
\begin{equation}
F = 
\begin{cases} 
      \frac{1}{2}\left(1 + \frac{p^2}{4 p_{\textrm{succ}}(1-p)}\right) & p\leq \frac{2}{3} \land  p_{\textrm{succ}} \ge \frac{3 p^2}{4(1-p)}, \\
      \frac{2p}{p + \sqrt{p^2 + 4 p_{\textrm{succ}}(1-p)}} & \text{otherwise}. 
\end{cases}
\end{equation}
We note that it is the first parameter regime of the above function where probabilistically adding the product noise of fidelity half helps. The second regime corresponds to just applying the original filtering scheme. We also note that setting $p_{\textrm{succ}}=1$ in the above expression we recover the result of~\cite{verstraete2003optimal} for maximum fidelity obtainable from a single copy of the R state using trace preserving LOCC operations.

\section{Symmetry reduction}~\label{sec:symmetry}
If the structure of the SDP optimisation exhibits a certain symmetry we can exploit this to simplify the optimisation before actually evaluating it numerically. Inspired by the observation of Rains~\cite{rains2001semidefinite} we make a similar symmetry reduction to the main SDP in this section. Specifically, note that the target maximally entangled state $\Phi_{D}$ satisfies
\begin{align}
\forall U,\ \ \ U_{\hat{A}} \otimes U^*_{\hat{B}} (\Phi_{D})(U_{\hat{A}} \otimes U^*_{\hat{B}})^{\dagger} = \Phi_D \ .
\label{eq:symmetry}
\end{align}
Let $\mT(\cdot)$ be the twirling operation defined as
\begin{equation}
   \mT(\rho_{\aout \bout}) = \int dU (U_{\hat{A}} \otimes U^*_{\hat{B}}) \rho_{AB} (U_{\hat{A}} \otimes U^*_{\hat{B}})^{\dagger}.
   \label{eq:twirlmap}
\end{equation}
We can then re-express the symmetry in Eq.~\eqref{eq:symmetry} as $\mT(\Phi_{D})=\Phi_{D}$. This means that without loss of generality our optimal solution exhibits the same symmetry, because both the constraints and objective function of the SDP in Optimisation Program~\ref{PPTprogramme} are invariant under the symmetry:
\begin{equation}
\begin{alignedat}{2}
&\mathrm{objective:} \ \frac{|A| |B|}{\delta}\tr\left(\proj{\Phi_D}_{\aout \bout} \otimes \rho_{A'B'}^T \Ctotal\right) &&= \frac{|A| |B|}{\delta}\tr\left(\left(\mT(\proj{\Phi_D}_{\aout \bout}) \otimes \rho_{A'B'}^T\right) \Ctotal\right) \\
& &&= \frac{|A| |B|}{\delta}\tr\left(\left( \proj{\Phi_D}_{\aout \bout}\otimes \rho_{A'B'}^T\right) \mT^{\dagger}(\Ctotal)\right) \\
& &&= \frac{|A| |B|}{\delta}\tr\left(\left( \proj{\Phi_D}_{\aout \bout}\otimes \rho_{A'B'}^T\right) \mT(\Ctotal)\right),  \\
&\mathrm{constraints:} \ |A||B| \tr\left[\left(\id_{\hat{A}\hat{B}} \otimes \rho_{A'B'}^T\right) \Ctotal\right] &&= |A||B| \tr\left[\left(\id_{\hat{A}\hat{B}} \otimes \rho_{A'B'}^T\right) \mT(\Ctotal)\right] \ ,
\end{alignedat}
\end{equation}
and similarly for the other constraints. In other words, if $\Ctotal$ is an optimal solution, then so is
\begin{align}
\mT(\Ctotal) = \int dU (U_{\hat{A}} \otimes U^*_{\hat{B}} \otimes \id_{A'B'}) \Ctotal (U_{\hat{A}} \otimes U^*_{\hat{B}} \otimes \id_{A'B'})^{\dagger} \ ,
\label{eq:twirlsymm}
\end{align}
and it is intuitive that $\mT(\Ctotal)$ contains a smaller number of variables compared to $\Ctotal$. Thus, declaring and optimising over the variable $\mT(\Ctotal)$ is a more efficient approach.

In order to explicitly write down the symmetry-reduced optimisation, we need to understand the structure of the twirling operation~\eqref{eq:twirlmap}. Using the tools from representation theory of the unitary group~\cite{twirling_map} we can write
\begin{equation}
\mT(\rho_{\aout \bout}) = \tr_{\aout \bout}\left[\rho_{\aout \bout} \proj{\Phi_D}_{\aout \bout}\right]\proj{\Phi_D}_{\aout \bout} + \tr_{\aout \bout}\left[ \rho_{\aout \bout} \left(\id - \proj{\Phi_D}\right)_{\aout \bout}\right]\frac{\id_{\aout \bout} - \proj{\Phi_D}_{\aout \bout}}{D^2-1}.
\end{equation}
This gives us the new form of our optimisation variable as follows
\begin{align}
\begin{split}
\mT(\Ctotal) &= \tr_{\aout \bout}\left[\Ctotal (\proj{\Phi_D}_{\aout \bout} \otimes \id_{A'B'})\right]\otimes\proj{\Phi_D}_{\aout \bout} \\
&+ \tr_{\aout \bout}\left[\Ctotal \left( \left(\id - \proj{\Phi_D}\right)_{\aout \bout} \otimes \id_{A'B'} \right)\right]\otimes\frac{\id_{\aout \bout} - \proj{\Phi_D}_{\aout \bout}}{D^2-1}.
\end{split}
\end{align}
With the definitions
\begin{align}
M_{A'B'} &:=\tr_{\aout \bout}\left[\Ctotal (\proj{\Phi_D}_{\aout \bout} \otimes \id_{A'B'})\right], \\
E_{A'B'} &:=\tr_{\aout \bout}\left[\Ctotal \left( \left(\id - \proj{\Phi_D}\right)_{\aout \bout} \otimes \id_{A'B'} \right)\right],
\end{align}
we have
\begin{align}
\mT(\Ctotal) &= M_{A'B'}\otimes\proj{\Phi_D}_{\aout \bout} + E_{A'B'}\otimes\frac{\id_{\aout \bout} - \proj{\Phi_D}_{\aout \bout}}{D^2-1} \ ,
\label{eq:twirledvar}
\end{align}
and it is evident that we have reduced the number of variables to those contained in $M_{A'B'}$ and $E_{A'B'}$. 

Now we are ready to derive the form of our SDP in terms of the new variables $M_{A'B'}$ and $E_{A'B'}$. Using~\eqref{eq:twirledvar} in the objective function gives
\begin{align}
\frac{|A| |B|}{\delta} \tr\left[\left(\proj{\Phi_D}_{\aout \bout} \otimes \rho_{A'B'}^T\right) \mT\left(\Ctotal\right)\right] = \frac{|A||B|}{\delta} \tr\left[\rho_{A'B'}^T M_{A'B'}\right] \ .
\end{align}
Similarly, the equality constraint transforms as
\begin{align}
|A||B| \tr\left[\left(\id_{\hat{A}\hat{B}} \otimes \rho_{A'B'}^T\right) \mT(\Ctotal)\right] = |A||B| \tr\left[\rho_{A'B'}^T \left(M_{A'B'} + E_{A'B'}\right) \right] = \delta \ .
\end{align}
The inequality constraint $\Ctotal\geq0$ becomes two inequality constraints $M_{A'B'}\geq0$ and $E_{A'B'}\geq0$. The PPT relaxation constraint $\Ctotal^{\Gamma}\geq 0$ becomes
\begin{align}
\begin{split}
\mT(\Ctotal)^{\Gamma} &= \proj{\Phi_D}_{\aout \bout}^{\Gamma} \otimes M_{A'B'}^{\Gamma} + \frac{(\id_{\aout \bout} - \proj{\Phi_D}_{\aout \bout})^{\Gamma}}{D^2 - 1} \otimes E_{A'B'}^\Gamma\\
&= \frac{1}{D} (P_{S_{\aout \bout}} - P_{A_{\aout \bout}}) \otimes M_{A'B'}^{\Gamma} + \frac{\left(1-\frac{1}{D}\right) P_{S_{\aout \bout}} + \left(1 + \frac{1}{D}\right)P_{A_{\aout \bout}}}{D^2-1} \otimes E_{A'B'}^{\Gamma}\\
&= P_{S_{\aout \bout}} \otimes \left(\frac{1}{D} M_{A'B'}^{\Gamma} + \frac{1 - \frac{1}{D}}{D^2 - 1} E_{A'B'}^\Gamma\right)
+ P_{A_{\aout \bout}} \otimes \left(- \frac{1}{D} M_{A'B'}^{\Gamma} + \frac{1 + \frac{1}{D}}{D^2 - 1} E_{A'B'}^\Gamma\right) \geq 0\ ,
\end{split}
\end{align}
where we have used $\Phi^\Gamma = (P_S - P_A)/D$ and $\id^\Gamma = P_S + P_A$, where $P_S$ and $P_A$ are projectors onto the symmetric and anti-symmetric subspace, respectively. The orthogonality of $P_S$ and $P_A$ allows us to read off this constraint as two inequality constraints
\begin{align}
M_{A'B'}^\Gamma + \frac{1}{D+1} E_{A'B'}^\Gamma \geq 0\ , \ \ -M_{A'B'}^\Gamma + \frac{1}{D-1} E_{A'B'}^\Gamma \geq 0\ .
\end{align}
Finally, the last two inequality constraints of SDP in Optimisation Program~\ref{PPTprogramme} become
\begin{align}
M_{A'B'} + E_{A'B'} &=  \tr_{\aout, \bout}(\mT(\Ctotal)) = \hat{C}_{\cmark,A'B'}  \leq \frac{\id_{A',B'}}{|A||B|} \ , \\
M_{A'B'}^\Gamma + E_{A'B'}^\Gamma &=  (\tr_{\aout, \bout}(\mT(\Ctotal)))^\Gamma = \hat{C}_{\cmark,A'B'}^\Gamma  \leq \frac{\id_{A',B'}}{|A||B|} \ .
\end{align}

In summary, putting things together we obtain the following simplified SDP optimisation problem, as stated in Optimisation Program~\ref{PPTprogrammeSymmetry} in the main text:
\begin{sdp}{maximise}{$\frac{|A||B|}{\delta} \tr\left[\rho_{A'B'}^T M_{A'B'}\right]$}
& $M_{A'B'} \geq 0$, $E_{A'B'} \geq 0$\ ,\\
& $M_{A'B'} + E_{A'B'} \leq \frac{\id_{A'B'}}{|A||B|}$\ ,\\
& $M_{A'B'}^\Gamma + E_{A'B'}^\Gamma \leq \frac{\id_{A'B'}}{|A||B|}$\ ,\\
& $|A||B| \tr\left[\rho_{A'B'}^T \left(M_{A'B'} + E_{A'B'}\right) \right] = \delta$\ ,\\
& $M_{A'B'}^\Gamma + \frac{1}{D+1} E_{A'B'}^\Gamma \geq 0$ \ ,\\
& $-M_{A'B'}^\Gamma + \frac{1}{D-1} E_{A'B'}^\Gamma \geq 0$\ .\\
\end{sdp}

\section{Derivations of dual SDPs}
\label{sec:duality}

In this appendix we will restate some results of the theory of semidefinite programming, particularly the dual SDP, following the approach of Watrous~\cite{Watrous}. We will use these results to derive the form of the dual SDPs for optimising fidelity and probability of success.

There are various ways of presenting a general semidefinite program. It is most convenient for our purposes to use the following form, given in~\cite{Watrous}, for an SDP and its dual: 
\begin{itemize}
\item Primal:
\begin{sdp}{maximise}{$\tr\left[AX\right]$}
& $\Phi_1(X) = B_1$ \ ,\\
& $\Phi_2(X) \leq B_2$\ ,\\
& $X \ge 0$ \ .\\
\end{sdp}
\item Dual:
\begin{sdp}{minimize}{$\tr\left[B_1Y_1\right]$ + $\tr\left[B_2Y_2\right]$}
& $\Phi^\dag_1(Y_1) + \Phi^\dag_2(Y_2) \ge A$ \ ,\\
& $Y_1 = Y_1^\dag$\ ,\\
& $Y_2 \ge 0$ \ .\\
\end{sdp}
\end{itemize}
Here $A,B_1,B_2$ are Hermitian matrices, $\Phi_1$ and $\Phi_2$ are Hermiticity preserving linear maps and $\Phi^\dag$ is a Hermiticity preserving linear map uniquely defined in terms of $\Phi$ through the following relation: $\tr\left[ \Phi(X)Y \right] = \tr\left[ X\Phi^\dagger(Y) \right] $ for all Hermitian matrices $X$ and $Y$. Notice that the map $\Phi^\dagger$ reverses the order of the spaces as compared to the original map $\Phi$.

The variables of the primal SDP are the matrix elements of the Hermitian matrix $X$ and any $X$ that satisfies the constraints is termed a \emph{feasible} $X$. Likewise the variables of the dual SDP are the Hermitian matrices $Y_1$ and $Y_2$, and such matrices are termed feasible if they satisfy the constraints of the dual SDP. It is a very straightforward observation that feasible points of the dual SDP can be used to provide bounds on the primal optimum and vice versa. To show this consider feasible variables $X,Y_1,Y_2$; then we have
\begin{eqnarray}
\begin{split}
\tr\left[B_1Y_1\right] + \tr\left[B_2Y_2\right]-\tr\left[AX\right]&= \tr\left[\Phi_1(X)Y_1\right] + \tr\left[\Phi_2(X)Y_2\right]+ \tr\left[(B_2-\Phi_2(X))Y_2\right]-\tr\left[AX\right] \\
&= \tr\left[X(\Phi_1^\dagger(Y_1)+\Phi_2^\dagger(Y_2)-A)\right]+ \tr\left[(B_2-\Phi_2(X))Y_2\right] \geq 0.
\end{split}
\end{eqnarray}
The first equality just comes from implementing the equality constraints of the primal SDP. The second equality is just an application of the definition of $\Phi^\dagger$, and the final inequality arises from the inequality constraints of the SDP and the fact that $\tr\left[XY \right] \geq 0$ if $X\geq 0$ and $Y\geq 0$. This observation is known as \emph{weak duality} and, as stated in the main text, it is the key tool that we will use to provide bounds on the single-shot distillation fidelity with fixed probability of success.

\subsection{Optimising fidelity}

The SDP in Optimisation Program~\ref{PPTprogrammeSymmetry} for finding the optimal output fidelity can be written in the above form by defining
\begin{align}
\begin{split}
&A = \frac{|A||B|}{\delta}
\begin{pmatrix}
\rho^T& 0 \\
0 & 0
\end{pmatrix} ,\,\,
X =
\begin{pmatrix}
M & X_{12} \\
X_{12}^\dag & E
\end{pmatrix} ,\,\,
B_1 = \delta ,\,\,
B_2 =
\begin{pmatrix}
\frac{\id}{|A||B|}   & 0 & 0 & 0 \\
0       & 0    & 0 & 0 \\
0       & 0   & 0 & 0 \\
0 & 0  & 0 & \frac{\id}{|A||B|} \\
\end{pmatrix} ,\\
&\Phi_1(X) =  |A||B| \tr[\rho^T (M + E)], \\
&\Phi_2(X) =
\begin{pmatrix}
M + E   & 0         & 0 & 0 \\
0       & -M^\Gamma - \frac{1}{D + 1}E^\Gamma    & 0 & 0 \\
0       & 0 & M^\Gamma - \frac{1}{D - 1}E^\Gamma  & 0  \\
0  & 0         & 0 & M^\Gamma + E^\Gamma \\
\end{pmatrix}.
\end{split}
\label{eq:PrimalFidMatrices}
\end{align}
Observe that the SDP induced by this choice is equivalent to the original SDP in Optimisation Program~\ref{PPTprogrammeSymmetry} because the constraint $X\geq0$ reduces to $M\geq0$ and $E\geq0$ without loss of generality. More precisely, the $X\geq0$ implies $M\geq0$ and $E\geq0$ so the optimum of the original SDP in Optimisation Program~\ref{PPTprogrammeSymmetry} is at least as large as the optimum of the SDP defined here. Conversely, for any feasible pair $M, E$ of the original SDP in Optimisation Program~\ref{PPTprogrammeSymmetry} we can define a feasible $X$ of the above SDP by setting $X_{12}=0$ so the optimum of the original SDP in Optimisation Program~\ref{PPTprogrammeSymmetry} is at most the optimum of the above SDP.

Now in order to dualize, we need to calculate $\Phi^\dag_1$ and $\Phi^\dag_2$. Since $\Phi_1$ maps to a scalar, we conclude that $Y_1=y$ is a scalar and we must have, by definition of adjoint,
\begin{equation}
\tr\left[ \Phi_1(X) Y_1 \right] = |A||B| \tr[\rho^T (M + E)] y = \tr\left[ X\Phi^\dagger_1( Y_1) \right]  ,
\label{eq:phi1prod}
\end{equation}
from which we conclude that
\begin{equation}
\Phi^\dagger_1( Y_1)  =|A||B|
\begin{pmatrix}
\rho^T y & 0 \\
0 & \rho^T y
\end{pmatrix}.
\end{equation}
Turning now to $\Phi_2$, we note that $Y_2$ will be a 4-by-4 block matrix and we will label the blocks as $Y_2^{ij}$. Observe that
\begin{align}
\begin{split}
\tr \left[\Phi_2(X)Y_2 \right]  & = \tr[(M + E) Y_2^{11}] + \tr\left[\left(-M^\Gamma - \frac{1}{D + 1}E^\Gamma\right) Y_2^{22}\right] + \tr\left[\left(M^\Gamma - \frac{1}{D - 1}E^\Gamma\right) Y_2^{33}\right] + \tr[(M^\Gamma + E^\Gamma) Y_2^{44}] \\
                                &= \tr[(M + E) Y_2^{11}] + \tr\left[\left(-M - \frac{1}{D + 1}E\right)
(Y_2^{22})^\Gamma\right]  + \tr\left[\left(M - \frac{1}{D - 1}E\right) (Y_2^{33})^\Gamma\right] + \tr[(M + E) (Y_2^{44})^\Gamma].
\end{split}
\label{eq:phi2prod1}
\end{align}
With $\Phi^\dag_2(Y_2)$ expressed as a 2-by-2 block matrix
\begin{equation}
\Phi^\dag_2(Y_2) =
\begin{pmatrix}
W_1 & W_2 \\
W_2^\dagger & W_4
\end{pmatrix},
\label{eq:phi2dag}
\end{equation}
we have
\begin{equation}
\tr\left[ X \Phi^\dag_2(Y_2) \right] = \tr[M W_1 ] + \tr[X_{12}^\dag W_2] + \tr[X_{12} W_2^\dagger] + \tr[E W_4]. \label{eq:phi2dagprod1}
\end{equation}
The definition of the adjoint map, namely $\tr \left[\Phi_2(X) Y_2 \right] =\tr\left[ X \Phi^\dag_2(Y_2) \right] $, allows us to directly compare \eqref{eq:phi2prod1} and \eqref{eq:phi2dagprod1} and read off
\begin{align}
\begin{split}
W_1 &= Y_2^{11} - (Y_2^{22})^\Gamma + (Y_2^{33})^\Gamma + (Y_2^{44})^\Gamma ,\\
W_2 &= 0 ,\\
W_3 &= 0 ,\\
W_4 &= Y_2^{11} - \frac{1}{D+1} (Y_2^{22})^\Gamma - \frac{1}{D-1} (Y_2^{33})^\Gamma + (Y_2^{44})^\Gamma.
\end{split}
\label{eq:phi2dagWs}
\end{align}
Therefore the dual program becomes:
\begin{sdp}{minimize}{$y \delta + \frac{\tr[Y_2^{11}+ Y_2^{44}]}{|A||B|}$}
& $ \left(\begin{smallmatrix}
|A||B| y \rho^T + Y_2^{11} - (Y_2^{22})^\Gamma + (Y_2^{33})^\Gamma + (Y_2^{44})^\Gamma& 0 \\
0 & |A||B| y \rho^T + Y_2^{11} - \frac{1}{D+1} (Y_2^{22})^\Gamma - \frac{1}{D-1} (Y_2^{33})^\Gamma + (Y_2^{44})^\Gamma
\end{smallmatrix} \right) \geq
\left(\begin{smallmatrix}
\frac{|A||B|}{\delta} \rho & 0 \\
0 & 0
\end{smallmatrix} \right)$ \ ,\\
& $y \in \mathbb{R}$\ ,\\
& $Y_2 \ge 0$ \ .\\
\end{sdp}
For ease of notation we will define $J=Y_2^{11},G=Y_2^{22},H=Y_2^{33}, K=Y_2^{44}$. The off-diagonal blocks of the matrix variable $Y_2$ can always be chosen to be zero and thus the dual SDP can be written as follows without loss of generality: 
\begin{sdp}{minimize}{$y \delta + \frac{\tr[J + K]}{|A||B|}$}
& $J, G, H, K \ge 0, y \in \mathbb{R}$\ ,\\
& $ |A||B| \left(y - \frac{1}{\delta}\right) \rho^T + J - G^\Gamma + H^\Gamma + K^\Gamma \ge 0$\ ,\\
& $ |A||B| y \rho^T + J - \frac{1}{D + 1} G^\Gamma - \frac{1}{D - 1}H^\Gamma + K^\Gamma \ge 0$\ .\\
\end{sdp}
Here all the matrices are on registers $A'B'$. Thus we have obtained the form of the dual semidefinite program for the optimal output fidelity.

\subsection{Optimising probability of success}\label{sec:optprobsucc}
Similarly, we can now find the dual of the SDP in Optimisation Program~\ref{PPTprogrammeProb} for optimising probability of success. Again, using the form specified in~\cite{Watrous}, we obtain:
\begin{align}
\begin{split}
&A = |A||B|
\begin{pmatrix}
\rho^T& 0 \\
0 & \rho^T
\end{pmatrix} \ , \ X =
\begin{pmatrix}
M & X_{12} \\
X_{12}^\dag & E
\end{pmatrix} \ , \ B_1=0 , \ B_2 =
\begin{pmatrix}
\frac{\id}{|A||B|}   & 0 & 0 & 0 \\
0       & 0    & 0 & 0 \\
0       & 0   & 0 & 0 \\
0 & 0  & 0 & \frac{\id}{|A||B|} \\
\end{pmatrix}, \\
&\Phi_1(X) = (1-F)\tr[\rho^T M] - F\tr[\rho^T E], \\
&\Phi_2(X) =
\begin{pmatrix}
M + E   & 0         & 0 & 0 \\
0       & -M^\Gamma - \frac{1}{D + 1}E^\Gamma    & 0 & 0 \\
0       & 0 & M^\Gamma - \frac{1}{D - 1}E^\Gamma  & 0  \\
0  & 0         & 0 & M^\Gamma + E^\Gamma \\
\end{pmatrix}.
\end{split}
\end{align}
Now we need to calculate $\Phi^\dag_1$ and $\Phi^\dag_2$. Since $\Phi_1$ maps to a scalar, we conclude that $Y_1=y$ is a scalar and we must have, by definition of adjoint: 
\begin{equation}
\tr\left[ \Phi_1(X), Y_1 \right] =  \left((1-F)\tr[\rho^T M] - F\tr[\rho^T E]\right) y = \tr\left[ X\Phi^\dagger_1( Y_1) \right],
\label{eq:phi1prod}
\end{equation}
from which we conclude that:
\begin{equation}
\Phi^\dagger_1( Y_1)  =
\begin{pmatrix}
(1-F) y \rho^T & 0 \\
0 & - F y \rho^T
\end{pmatrix}.
\end{equation}
Turning now to $\Phi_2$, we note that it is the same as in the program for optimising fidelity, see Eq.~\eqref{eq:PrimalFidMatrices}. Hence $\Phi^\dag_2(Y_2)$ remains the same as given in Eq.~\eqref{eq:phi2dag} and in Eq.~\eqref{eq:phi2dagWs}.

Therefore the dual problem becomes:
\begin{sdp}{minimize}{$\frac{\tr[Y_2^{11} + Y_2^{44}]}{|A||B|}$}
& $\left( \begin{smallmatrix}
(1-F) y \rho^T + Y_2^{11} - (Y_2^{22})^\Gamma + (Y_2^{33})^\Gamma + (Y_2^{44})^\Gamma & 0\\
0 & -F y \rho^T + Y_2^{11} - \frac{1}{D+1} (Y_2^{22})^\Gamma - \frac{1}{D-1} (Y_2^{33})^\Gamma + (Y_2^{44})^\Gamma
\end{smallmatrix} \right) \geq |A||B|
\left( \begin{smallmatrix}
\rho^T & 0 \\
0 & \rho^T
\end{smallmatrix} \right)$ \ ,\\
& $y \in \mathbb{R}$\ ,\\
& $Y_2 \ge 0$ \ .\\
\end{sdp}
This SDP can be rewritten as
\begin{sdp}{minimize}{$\frac{\tr[J + K]}{|A||B|}$}
& $J,G,H, K \geq 0, y \in \mathbb{R}$\ ,\\
& $[(1-F)y - |A||B|]\rho^T + J - G^\Gamma + H^\Gamma + K^\Gamma \ge 0$\ ,\\
& $[-Fy - |A||B|]\rho^T + J - \frac{1}{D+1}G^\Gamma - \frac{1}{D-1}H^\Gamma + K^\Gamma \ge 0$\ .\\
\end{sdp}

\section{$k$ Bose symmetric extensions}
\label{sec:BSE}
This section details the calculations leading to the $1$-BSE optimisation program mentioned in the main text. We first explain how the variable is defined for a $k$-BSE. Considering $\hat{C}_{(\hat{A}A')\hat{B}B'}$ to be $k$-BSE means that there exists $\hat{C}_{(\hat{A}_1 A'_1) \ldots (\hat{A}_{k+1} A'_{k+1}) \hat{B} B'}$ satisfying the BSE constraints. We are changing the optimisation variable from the former to the latter, which lives only on the symmetric subspace of $(\hat{A}_1 A'_1) \ldots (\hat{A}_{k+1} A'_{k+1})$. The full Hilbert space of Alice decomposes as
\begin{equation}
\mathcal{H}_{(\hat{A}_1 A'_1) \ldots (\hat{A}_{k+1} A'_{k+1})} = \mathcal{H}_{\mathrm{Sym}}\oplus\mathcal{H}_{\mathrm{Sym}}^\perp ,
\end{equation}
into symmetric subspace and its orthogonal complement. Hence, the joint Hilbert space of Alice and Bob's systems has the corresponding form
\begin{equation}
\mathcal{H}_{(\hat{A}_1 A'_1) \ldots (\hat{A}_{k+1} A'_{k+1}) \hat{B} B'} = (\mathcal{H}_{\mathrm{Sym}}\oplus\mathcal{H}_{\mathrm{Sym}}^\perp)\otimes\mathcal{H}_{\hat{B},B'} = (\mathcal{H}_{\mathrm{Sym}}\otimes\mathcal{H}_{\hat{B},B'})\oplus(\mathcal{H}_{\mathrm{Sym}}^\perp\otimes\mathcal{H}_{\hat{B},B'}).
\end{equation}
Under this decomposition, the operator $\hat{C}_{(\hat{A}_1 A'_1) \ldots (\hat{A}_{k+1} A'_{k+1}) \hat{B} B'}$ has the simple form
\begin{equation}
\hat{C}_{(\hat{A}_1 A'_1) \ldots (\hat{A}_{k+1} A'_{k+1}) \hat{B} B'} = \begin{pmatrix}
W_s & 0 \\
0 & 0
\end{pmatrix},
\end{equation}
with $W_s$ being some operator acting on $\mathcal{H}_{\mathrm{Sym}}\otimes\mathcal{H}_{\hat{B},B'}$. Since our derivations in the main text are performed in the standard basis, let $U_{\mathrm{Sym}\to\mathrm{Std}}$ be the change of basis from the ``symmetric'' basis to the computational basis of Alice's systems. We finally obtain the form of our new variable in the standard basis
\begin{equation}
\hat{C}_{(\hat{A}_1 A'_1) \ldots (\hat{A}_{k+1} A'_{k+1}) \hat{B} B'} = U_{\mathrm{Sym}\to\mathrm{Std}}\otimes\id_{\hat{B},B'} \begin{pmatrix}
W_s & 0 \\
0 & 0
\end{pmatrix} U_{\mathrm{Sym}\to\mathrm{Std}}^\dagger\otimes\id_{\hat{B},B'}.
\end{equation}
In the final SDP which will be presented at the end of this section, we will only declare and optimize over the smaller variable $W_s$.

Now we specialize to the case of $1$-BSE. Considering $\hat{C}_{(\hat{A} A') \hat{B} B'}$ to be $1$-BSE means that there exists $\hat{C}_{(\hat{A}_1 A'_1) (\hat{A}_2 A'_2) \hat{B} B'}$ satisfying the BSE constraints. Since we have only two subsystems on Alice's side (corresponding to the indices 1 and 2), the orthogonal complement $\mathcal{H}_{\mathrm{Sym}}^\perp$ turns out to be the subspace consisting of antisymmetric vectors $\mathcal{H}_{\mathrm{ASym}}$. We need to compute the change of basis operator in
\begin{equation}
\hat{C}_{\hat{A}_1 A'_1 \hat{A}_2 A'_2 \hat{B} B'} = U_{\mathrm{Sym}\to\mathrm{Std}}\otimes\id_{\hat{B},B'} \begin{pmatrix}
W_s & 0 \\
0 & 0
\end{pmatrix} U_{\mathrm{Sym}\to\mathrm{Std}}^\dagger\otimes\id_{\hat{B},B'}.
\end{equation}
In the case when the input dimensions of Alice and Bob are the same and the target is the maximally entangled state of dimension $D$, we have dimensions $|\hat{A}_1|=|\hat{A}_2|=|\hat{B}|=D$ and $|A'_1|=|A'_2|=|B'|=C$, so Alice's first $(\hat{A}_1 A'_1)$ and second $(\hat{A}_2 A'_2)$ subsystems each have dimension $CD$. We can construct the change of basis $U_{\mathrm{Sym}\to\mathrm{Std}}$ for $\mathbb{C}^{CD}\otimes\mathbb{C}^{CD}$ using standard techniques. Let $\{\ket{i}:i=0,\ldots,CD\}$ denote the standard basis of a $CD$-dimensional system. Then the basis for the symmetric subspace on $(A_1 A'_1) ( A_2 A'_2)$ consists of the vectors in $V_s = V_1 \cup V_2$ where
\begin{align}
\begin{split}
V_1 &= \left\{\ket{i}_{A_1A'_1} \otimes \ket{i}_{A_2A'_2}| i = 0,1,\ldots,CD \right\}, \\
V_2 &= \left\{\frac{1}{\sqrt{2}}\left(\ket{i}_{A_1A'_1} \otimes \ket{j}_{A_2A'_2} + \ket{j}_{A_1A'_1} \otimes \ket{i}_{A_2A'_2} \right)| i,j = 0,1,\ldots,CD \, \mathrm{and} \, j>i \right\}.
\end{split}
\end{align}
Similarly, the basis for the antisymmetric subspace on $(A_1 A'_1) ( A_2 A'_2)$ consists of the vectors in
\begin{equation}
V_a = \left\{\frac{1}{\sqrt{2}}\left(\ket{i}_{A_1 A'_1} \otimes \ket{j}_{A_2 A'_2} - \ket{j}_{A_1 A'_1} \otimes \ket{i}_{A_2 A'_2} \right)|  i,j = 0,1,\ldots,CD \, \mathrm{and} \, j>i  \right\}.
\end{equation}
The coefficients of these vectors form the entries of the matrix $U_{\mathrm{Sym}\to\mathrm{Std}}$. 

We are now left with rewriting the optimisation in terms of $W_s$, a $\frac{(CD)^2 (CD +1)}{2} \times \frac{(CD)^2 (CD +1)}{2}$ matrix. The objective function
\begin{equation}
\frac{|A| |B|}{\delta} \tr\left(\left(\id_{\hat{A}_1 A'_1} \otimes \proj{\Phi_D}_{\hat{A}_2,\bout} \otimes \rho_{A'_2 B'}^T\right) \left(U_{\mathrm{Sym}\to\mathrm{Std}}\otimes\id_{\hat{B},B'} \begin{pmatrix}
W_s & 0 \\
0 & 0
\end{pmatrix} U_{\mathrm{Sym}\to\mathrm{Std}}^\dagger\otimes\id_{\hat{B} B'} \right) \right)
\end{equation}
can be rewritten as (since the trace is cyclic under permutation of operators)
\begin{equation}
\tr\left(X \begin{pmatrix}
W_s & 0 \\
0 & 0
\end{pmatrix}\right),
\label{eq:}
\end{equation}
where we convert the input data written in standard basis to the "symmetric" basis
\begin{equation}
X = \frac{|A| |B|}{\delta} U_{\mathrm{Sym}\to\mathrm{Std}}^\dagger\otimes\id_{\hat{B} B'} \left(\id_{\hat{A}_1 A'_1} \otimes \proj{\Phi_D}_{\hat{A}_2 \bout} \otimes \rho_{A'_2 B'}^T\right) U_{\mathrm{Sym}\to\mathrm{Std}}\otimes\id_{\hat{B} B'}\ .
\label{eq:}
\end{equation}
This means that only $X_s$, the component of $X$ living in the symmetric subspace, i.e. the first $\frac{(CD)^2 (CD +1)}{2}$ rows and columns of $X$, will appear in the objective function and the objective function becomes $\tr(X_sW_s)$. Similarly, the constraint on the probability of success can be rewritten as 
$\tr\left(Y_s W_s\right) = \delta$, where
\begin{equation}
Y = |A||B| U_{\mathrm{Sym}\to\mathrm{Std}}^\dagger\otimes\id_{\hat{B} B'} \left(\id_{\hat{A}_1 A'_1} \otimes \id_{\hat{A}_2 \bout} \otimes \rho_{A'_2 B'}^T\right) U_{\mathrm{Sym}\to\mathrm{Std}}\otimes\id_{\hat{B} B'},
\label{eq:}
\end{equation}
and again $Y_s$ is just a matrix that consists of the first $\frac{(CD)^2 (CD +1)}{2}$ rows and columns of $Y$. All other constraints become unaffected so the SDP becomes
\begin{sdp}{maximise}{$\tr\left({X_s}_{\hat{A}_1 A'_1\hat{A}_2 A'_2 \hat{B} B'} {W_s}_{\hat{A}_1 A'_1 \hat{A}_2,A'_2,\hat{B},B'}\right)$}
& $\tr\left({Y_s}_{\hat{A}_1 A'_1 \hat{A}_2 A'_2 \hat{B} B'} {W_s}_{\hat{A}_1 A'_1 \hat{A}_2 A'_2 \hat{B} B'}\right) = \delta$\ ,\\
& ${W_s}_{\hat{A}_1 A'_1 \hat{A}_2 A'_2 \hat{B} B'} \geq 0$ \ ,\\
& $\tr_{\hat{A}_1 A'_1}\left( U_{\mathrm{Sym}\to\mathrm{Std}}\otimes\id_{\hat{B},B'} \begin{pmatrix}
W_s & 0 \\
0 & 0
\end{pmatrix} U_{\mathrm{Sym}\to\mathrm{Std}}^\dagger\otimes\id_{\hat{B},B'} \right)^{\Gamma} \geq 0$ \ , \\
& $\tr_{\hat{A}_1 A'_1 \hat{A}_2 \hat{B}}\left( U_{\mathrm{Sym}\to\mathrm{Std}}\otimes\id_{\hat{B} B'} \begin{pmatrix}
W_s & 0 \\
0 & 0
\end{pmatrix} U_{\mathrm{Sym}\to\mathrm{Std}}^\dagger\otimes\id_{\hat{B},B'} \right) \leq \frac{\id_{A'_2 B'}}{|A||B|}$ \ , \\
& $\tr_{\hat{A}_1 A'_1 \hat{A}_2  \hat{B}}\left( U_{\mathrm{Sym}\to\mathrm{Std}}\otimes\id_{\hat{B} B'} \begin{pmatrix}
W_s & 0 \\
0 & 0
\end{pmatrix} U_{\mathrm{Sym}\to\mathrm{Std}}^\dagger\otimes\id_{\hat{B},B'} \right)^\Gamma \leq \frac{\id_{A'_2 B'}}{|A||B|}$ \ .
\end{sdp}

In the scenario most frequently considered in this paper, that is of distillation from two to one copies of a two-qubit state, we have that $C=4$ and $D=2$ and so our variable $W_s$ is a $288 \times 288$ matrix.

\section{Definitions of optimality}
In this section we introduce certain terminology that will later allow us to make precise optimality claims of the different distillation protocols. We also introduce and prove specific lemmas that later allow us to prove our optimality claims with respect to the EPL-D protocol in Appendix~\ref{sec:EPL}.

Let $\Lambda$ denote the map corresponding to a distillation protocol and $P_{\cmark}$ be the projector on the success space of the flags. 
We introduce the following shorthands:
\begin{align}
\Psi(\Lambda, P_{\cmark}, \rho) &= \tr_{F}\left((\id_{\aout \bout} \otimes P_{\cmark}) \Lambda_{AB \rightarrow \aout \bout F} (\rho)\right),\\
\eta(\Lambda, P_{\cmark}, \rho) &= \frac{\Psi(\Lambda, P_{\cmark}, \rho)}{p(\Lambda, P_\cmark, \rho)},
\end{align}
where
\begin{equation}
p(\Lambda, P_{\cmark}, \rho) = \tr(\Psi(\Lambda, P_{\cmark}, \rho)).
\end{equation}
That is, $\Psi,\eta$ are, respectively, the unnormalised and normalised output state conditioned on success. We introduce two additional shorthands for the fidelity of $\Psi$ and $\eta$ to $
\ket{\Phi^+} = \ket{\Phi_2}$, which for simplicity we will now denote as simply $\Phi$:
\begin{align}
g(\Lambda, P_{\cmark}, \rho) &= F\left( \Psi(\Lambda, P_{\cmark}, \rho) , \Phi \right),\\
f(\Lambda, P_{\cmark}, \rho) &= F\left( \eta(\Lambda, P_{\cmark}, \rho) , \Phi \right). 
\end{align}
Note that $\eta(\Lambda, P_{\cmark}, \rho)$ and $f(\Lambda, P_{\cmark}, \rho)$ are defined only if $p(\Lambda, P_{\cmark}, \rho)>0$.

We define the optimal output fidelity $f_\textrm{opt}(\rho)$ and the optimal success probability $p_\textrm{opt}(\rho)$ when optimized over all LOCC distillation operations $\Lambda$ and success projectors $P_{\cmark}$ as follows: 
\begin{equation}
f_{\textrm{opt}}(\rho)=\sup_{\Lambda\in \textrm{LOCC}, P_{\cmark}| p(\Lambda, P_\cmark, \rho)>0} f(\Lambda, P_{\cmark}, \rho) 
\end{equation}
and
\begin{equation}
	p_{\textrm{opt}}(\rho) = \sup_{\Lambda\in \textrm{LOCC}, P_{\cmark}| p(\Lambda, P_\cmark, \rho)>0 \textrm{ and } f(\Lambda, P_{\cmark}, \rho) = f_{\textrm{opt}}(\rho)} p(\Lambda, P_{\cmark}, \rho) \ .
\end{equation}
With this notation, we introduce two different definitions of optimality:
\begin{definition}
We call a protocol $\Lambda$ with the success projector $P_{\cmark}$ \textit{fidelity-optimal} with respect to the quantum state $\rho$ if 
\begin{equation}
f(\Lambda, P_{\cmark}, \rho)=f_{\emph{\textrm{opt}}}(\rho)
\end{equation}
and
\begin{equation}
p(\Lambda, P_{\cmark}, \rho) = p_{\emph{\textrm{opt}}}(\rho).
\end{equation}
\end{definition}
We emphasise here that the above definition concerns distillation towards the maximally entangled state with $D=2$, but it can be easily generalised to higher values of $D$.
\begin{definition}
We call a protocol $\Lambda$ with the success projector $P_{\cmark}$ \textit{distillation-optimal} with respect to the quantum state $\rho$ if
\begin{equation}
p(\Lambda, P_{\cmark}, \rho)E_D(\eta(\Lambda, P_{\cmark}, \rho))=E_D(\rho),\label{eq:distoptdef}
\end{equation}
where $E_D(\rho)$ is the distillable entanglement of $\rho$.
\end{definition}

Note that our definition of a protocol being distillation optimal implies that no protocol can achieve a better tradeoff between success probability and distillable entanglement of the output state (Lemma \ref{lem:distopt}). We recall that the distillable entanglement is defined as an optimisation over arbitrary distillation protocols and, in general, can only be achieved if Alice and Bob hold an infinite number of copies of the state $\rho$. 

In the following, we prove several basic facts of these definitions. 
\begin{lemma}\label{maxfidmixture}
Let $\rho=\sum_i\lambda_i\rho_i$ such that $\forall i, \lambda_i >0$ and $\sum_i\lambda_i=1$. Then,
\begin{equation}
	f_{\emph{\textrm{opt}}}\left(\sum_i \lambda_i \rho_i \right) \leq \max_i f_{\emph{\textrm{opt}}}(\rho_i) .
\end{equation}
\end{lemma}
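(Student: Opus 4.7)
The plan is to exploit linearity of all relevant maps in the input state $\rho$, so that the output of any distillation protocol on a mixture becomes a convex combination of the outputs on the individual components, and then use the fact that a weighted average cannot exceed the maximum.

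Fix any LOCC protocol with map $\Lambda$ and success projector $P_{\cmark}$ such that $p(\Lambda,P_{\cmark},\rho)>0$. Since $\Lambda_{AB\to\aout\bout F}$ is a linear CPTP map and the partial trace is linear, the unnormalised success state satisfies
\begin{equation}
\Psi(\Lambda,P_{\cmark},\rho) \;=\; \sum_i \lambda_i\,\Psi(\Lambda,P_{\cmark},\rho_i),
\end{equation}
and taking traces gives $p(\Lambda,P_{\cmark},\rho)=\sum_i\lambda_i\,p(\Lambda,P_{\cmark},\rho_i)$. Let $I=\{i:p(\Lambda,P_{\cmark},\rho_i)>0\}$; since $p(\Lambda,P_{\cmark},\rho)>0$, $I$ is non-empty. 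For $i\notin I$ we have $\Psi(\Lambda,P_{\cmark},\rho_i)=0$ by positivity, so these terms drop out.

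Next I would compute the fidelity directly. Using $g(\Lambda,P_{\cmark},\rho_i)=p(\Lambda,P_{\cmark},\rho_i)\,f(\Lambda,P_{\cmark},\rho_i)$ for $i\in I$, we get
\begin{align}
f(\Lambda,P_{\cmark},\rho) &= \frac{\bra{\Phi}\Psi(\Lambda,P_{\cmark},\rho)\ket{\Phi}}{p(\Lambda,P_{\cmark},\rho)}
= \frac{\sum_{i\in I}\lambda_i\,p(\Lambda,P_{\cmark},\rho_i)\,f(\Lambda,P_{\cmark},\rho_i)}{\sum_{i\in I}\lambda_i\,p(\Lambda,P_{\cmark},\rho_i)}.
\end{align}
This is a convex combination of the numbers $\{f(\Lambda,P_{\cmark},\rho_i)\}_{i\in I}$ with non-negative weights summing to one, hence
\begin{equation}
f(\Lambda,P_{\cmark},\rho) \;\le\; \max_{i\in I} f(\Lambda,P_{\cmark},\rho_i) \;\le\; \max_i f_{\textrm{opt}}(\rho_i),
\end{equation}
where the last inequality uses the definition of $f_{\textrm{opt}}$ together with the fact that for $i\in I$ the pair $(\Lambda,P_{\cmark})$ is a feasible point in the supremum defining $f_{\textrm{opt}}(\rho_i)$.

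Finally, taking the supremum of the left-hand side over all LOCC $\Lambda$ and projectors $P_{\cmark}$ with $p(\Lambda,P_{\cmark},\rho)>0$ yields
\begin{equation}
f_{\textrm{opt}}\!\left(\sum_i\lambda_i\rho_i\right)\;\le\;\max_i f_{\textrm{opt}}(\rho_i),
\end{equation}
which is the claim. There is no real obstacle here; the only care needed is to handle the indices $i$ with $p(\Lambda,P_{\cmark},\rho_i)=0$ (which either contribute nothing to the numerator and denominator, or, if all are zero, are excluded from the supremum on the left-hand side) so that the weighted-average step is applied to a genuine probability distribution.
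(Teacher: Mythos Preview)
Your proof is correct and follows essentially the same approach as the paper's own argument: both use linearity of $\Psi$ and $p$ in the input state to express $f(\Lambda,P_{\cmark},\rho)$ as a weighted average of the $f(\Lambda,P_{\cmark},\rho_i)$, bound by the maximum, and then take the supremum. You are somewhat more explicit than the paper in isolating the index set $I$ and justifying why zero-probability terms drop out, but the structure is identical.
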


\begin{proof}
\begin{align}
\begin{split}
	f_\textrm{opt}\left(\sum_i \lambda_i \rho_i \right)	&= \sup_{\Lambda \in \textrm{LOCC}, P_{\cmark}| p(\Lambda, P_\cmark, \rho)>0} \frac{g\left(\Lambda, P_{\cmark}, \sum_i \lambda_i \rho_i \right)}{p\left(\Lambda, P_{\cmark}, \sum_j \lambda_j \rho_j \right)} \\
		 &= \sup_{\Lambda \in \textrm{LOCC}, P_{\cmark}| p(\Lambda, P_\cmark, \rho)>0} \frac{\sum_{i | p(\Lambda, P_\cmark, \rho_i)>0} \lambda_i f\left(\Lambda, P_{\cmark}, \rho_i \right) p\left(\Lambda, P_{\cmark}, \rho_i \right)}{\sum_j \lambda_j p\left(\Lambda, P_{\cmark},  \rho_j \right)} \\
	  &\leq \max_if_\textrm{opt}(\rho_i).
\end{split}
\end{align}
\end{proof}

\begin{lemma}
\label{optprobsuccmix}
Let $\rho=\sum_i\lambda_i\rho_i$ such that $\forall i, \lambda_i >0$ and $\sum_i\lambda_i=1$, let $\Lambda$ and $P_{\cmark}$ correspond to a distillation protocol such that $f(\Lambda, P_{\cmark}, \rho)=f_{\emph{\textrm{opt}}}(\rho)=\max_if_{\emph{\textrm{opt}}}(\rho_i)$ and let the index $k$ be such that $f(\Lambda, P_{\cmark}, \rho_k)=\max_i f(\Lambda, P_{\cmark}, \rho_i)$ is unique. Then,
\begin{equation}
p\left(\Lambda, P_{\cmark}, \rho \right) \le \lambda_k.
\end{equation}
\end{lemma}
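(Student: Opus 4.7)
The plan is to show that if the protocol $\Lambda,P_\cmark$ already saturates $f_\textrm{opt}(\rho)=\max_i f_\textrm{opt}(\rho_i)$, then all the probability mass contributing to success must come from the unique component $\rho_k$ that achieves the maximum fidelity under this protocol; the stated bound then drops out of $p(\Lambda,P_\cmark,\rho)=\sum_i\lambda_ip(\Lambda,P_\cmark,\rho_i)$.

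First I would observe that, by linearity of $\Lambda$ and of the trace, both $g$ and $p$ are affine in the input state, so
\begin{equation}
g(\Lambda,P_\cmark,\rho)=\sum_i\lambda_i g(\Lambda,P_\cmark,\rho_i),\qquad p(\Lambda,P_\cmark,\rho)=\sum_i\lambda_i p(\Lambda,P_\cmark,\rho_i).
\end{equation}
Write $g_i\dfdas g(\Lambda,P_\cmark,\rho_i)$, $p_i\dfdas p(\Lambda,P_\cmark,\rho_i)$, and $F\dfdas f(\Lambda,P_\cmark,\rho)=\max_j f_\textrm{opt}(\rho_j)$. Since $\proj{\Phi}\leq\id$, we always have $0\leq g_i\leq p_i$, and whenever $p_i>0$ we have $g_i/p_i=f(\Lambda,P_\cmark,\rho_i)\leq f_\textrm{opt}(\rho_i)\leq F$, so $g_i\leq F p_i$ with equality iff either $p_i=0$ or $f(\Lambda,P_\cmark,\rho_i)=F$.

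Next I would combine the two identities above: multiplying the second by $F$ and subtracting the first gives
\begin{equation}
\sum_i\lambda_i(Fp_i-g_i)=F\,p(\Lambda,P_\cmark,\rho)-g(\Lambda,P_\cmark,\rho)=0,
\end{equation}
because the right-hand side is exactly $p(\Lambda,P_\cmark,\rho)[F-f(\Lambda,P_\cmark,\rho)]=0$ by the assumption $f(\Lambda,P_\cmark,\rho)=F$. Each summand is non-negative and $\lambda_i>0$, so $g_i=Fp_i$ for every $i$. For any $i$ with $p_i>0$ this forces $f(\Lambda,P_\cmark,\rho_i)=F=f(\Lambda,P_\cmark,\rho_k)$; by the hypothesised uniqueness of the maximiser this is only possible for $i=k$. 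Hence $p_i=0$ for all $i\neq k$, and
\begin{equation}
p(\Lambda,P_\cmark,\rho)=\lambda_k p_k\leq\lambda_k,
\end{equation}
since $p_k\leq 1$.

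The steps are essentially bookkeeping once the affine decomposition of $g$ and $p$ is in hand; the only mild technical care is in handling components $\rho_i$ for which $p_i=0$, where $f(\Lambda,P_\cmark,\rho_i)$ is undefined. This is the step I expect to require the most attention, and I would handle it by working throughout with the unnormalised quantity $g_i$ (which is always defined and satisfies $g_i=0$ whenever $p_i=0$) rather than with the ratio $g_i/p_i$, so that the inequality $g_i\leq Fp_i$ and its equality cases are unambiguously available for all $i$.
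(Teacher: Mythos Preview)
Your proof is correct and follows essentially the same approach as the paper: both decompose $f(\Lambda,P_\cmark,\rho)$ via linearity, observe that equality with the maximum in a nonnegative combination forces all the deficit terms to vanish, and then invoke uniqueness of the maximiser to conclude $p_i=0$ for $i\neq k$. Your choice to work with the unnormalised $g_i$ rather than the ratios $f_i$ is a clean way to handle the $p_i=0$ case, which the paper deals with instead by restricting its sum to indices with $p_i>0$; the underlying argument is identical.
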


\begin{proof}
From Lemma~\ref{maxfidmixture} we see that we must have
\begin{equation}
f(\Lambda, P_{\cmark}, \rho_k)=f_\textrm{opt}(\rho)=\max_if_\textrm{opt}(\rho_i) = f_\textrm{opt}(\rho_k).
\end{equation}
Then:
\begin{align}
\begin{split}
f_\textrm{opt}(\rho_k)&=f\left(\Lambda, P_{\cmark}, \sum_i \lambda_i \rho_i \right) \\
		      &=  \frac{\sum_{i | p(\Lambda, P_\cmark, \rho_i)>0} \lambda_i f\left(\Lambda,  P_{\cmark}, \rho_i \right) p\left(\Lambda, P_{\cmark}, \rho_i \right)}{p\left(\Lambda, P_{\cmark}, \rho \right)} \\
	&= \frac{\lambda_k p\left(\Lambda, P_{\cmark}, \rho_k \right)}{p\left(\Lambda,  P_{\cmark}, \rho \right)} f_\textrm{opt}(\rho_k) + \sum_{\substack{i \neq k \\ p(\Lambda, P_\cmark, \rho_i)>0}} \frac{\lambda_i p\left(\Lambda, P_{\cmark}, \rho_i \right)}{ p\left(\Lambda, P_{\cmark}, \rho \right)} f(\Lambda, P_{\cmark}, \rho_i).\label{eq:lemproof}
\end{split}
\end{align}
Now note that $\sum_i\lambda_ip(\Lambda, P_{\cmark}, \rho_i)/p(\Lambda, P_{\cmark}, \rho)=1$ and $\forall i\neq k, f(\Lambda, P_{\cmark}, \rho_i)<f_\textrm{opt}(\rho_k)$. That is we have a convex combination of $f_\textrm{opt}(\rho_k)$ and all the other $f(\Lambda, P_{\cmark}, \rho_i)$ that are smaller than $f_\textrm{opt}(\rho_k)$. As this convex combination needs to equal $f_\textrm{opt}(\rho_k)$, we require that $\frac{\lambda_k p\left(\Lambda, P_{\cmark}, \rho_k \right)}{p\left(\Lambda,  P_{\cmark}, \rho \right)} = 1$ and $\forall i\neq k, p\left(\Lambda, P_{\cmark}, \rho_i \right) = 0$. This means that
\begin{equation}
p\left(\Lambda,  P_{\cmark}, \rho \right) = \lambda_k p\left(\Lambda, P_{\cmark}, \rho_k \right) \leq \lambda_k.
\end{equation}
\end{proof}

\begin{lemma}\label{lem:distopt}
Given a bipartite state $\rho$ and an LOCC protocol $\Lambda_{AB\rightarrow \hat A\hat BF}$ together with a projector $P_{\cmark}$, it holds that
\begin{equation}
p(\Lambda, P_{\cmark}, \rho) E_D(\eta(\Lambda, P_{\cmark}, \rho))\leq E_D(\rho).
\end{equation}
\end{lemma}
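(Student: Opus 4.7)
The plan is to upper bound $p(\Lambda,P_{\cmark},\rho)E_D(\eta(\Lambda,P_{\cmark},\rho))$ by $E_D(\rho)$ using a concatenated LOCC protocol on $n$ copies of $\rho$, then appealing directly to the definition of distillable entanglement as the optimal asymptotic rate of EPR-pair extraction by LOCC.

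First, I would spell out the construction. Given $n$ copies of $\rho$, Alice and Bob apply $\Lambda$ independently to each copy and broadcast the flag registers, which is an LOCC operation. Let $N_n$ denote the (random) number of copies on which they declare success. Conditioned on $N_n=m$, the joint state on the corresponding output registers is $\eta^{\otimes m}$, where $\eta=\eta(\Lambda,P_{\cmark},\rho)$. By a Chernoff-type concentration bound, for every $\epsilon>0$ we have $N_n\geq (p-\epsilon)n$ with probability tending to $1$ as $n\to\infty$, where $p=p(\Lambda,P_{\cmark},\rho)$. Since the flags are classical and accessible to both parties, Alice and Bob can, without loss of generality, route exactly $\lfloor(p-\epsilon)n\rfloor$ of the successful outputs into a second-stage distillation step (outputting arbitrary garbage otherwise), which keeps the overall procedure LOCC.

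Next I would invoke the definition of distillable entanglement. For any $\delta>0$, there exists, for sufficiently large $m$, an LOCC protocol taking $\eta^{\otimes m}$ to a state whose fidelity with $\Phi_2^{\otimes \lfloor m(E_D(\eta)-\delta)\rfloor}$ is at least $1-\delta$. Composing this with the first stage yields an LOCC protocol that, on input $\rho^{\otimes n}$, produces (with probability at least $1-\delta'$, where $\delta'\to 0$ as $n\to\infty$) a state of fidelity at least $1-\delta$ with $\Phi_2^{\otimes k_n}$ for $k_n=\lfloor\lfloor(p-\epsilon)n\rfloor(E_D(\eta)-\delta)\rfloor$. Dividing by $n$, the achieved rate is at least $(p-\epsilon)(E_D(\eta)-\delta)-o(1)$. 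On the failure branch of the concentration event one can output a fixed separable state, and the resulting contribution to the overall fidelity vanishes as $n\to\infty$. The definition of $E_D(\rho)$ as the supremum of achievable rates then forces
\begin{equation}
(p-\epsilon)(E_D(\eta)-\delta)\leq E_D(\rho),
\end{equation}
and taking $\epsilon,\delta\to 0$ gives the claimed inequality.

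The routine concentration and diagonalization of $\epsilon$ and $\delta$ limits are the only non-conceptual steps, and the main subtlety to be careful with is the exact definition of $E_D$ used: if the convention requires the output to be close to a fixed number of EPR pairs with probability one (rather than in trace distance), one needs to argue that failure branches contribute negligibly to the overall fidelity. This is handled by mixing in a separable filler state on the rare event $\{N_n<(p-\epsilon)n\}$, whose probability is exponentially small in $n$, so the resulting trace-distance error is $o(1)$ and does not affect the asymptotic rate. Aside from that, the argument is essentially a one-line application of the definition of distillable entanglement to the composed protocol, and no new ingredients beyond LOCC closure under classical post-selection are required.
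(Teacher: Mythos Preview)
Your proof is correct and follows essentially the same approach as the paper: both argue that composing $\Lambda$ on each copy with a near-optimal distillation protocol for $\eta$ yields an LOCC scheme on $\rho^{\otimes n}$ achieving rate $p\,E_D(\eta)$, which must therefore be bounded by $E_D(\rho)$. The paper phrases this as a one-line contradiction sketch, whereas you give the direct argument with the Chernoff and failure-branch details spelled out, but the underlying idea is identical.
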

\begin{proof}
Suppose that there exists $\Lambda_{AB\rightarrow \hat A\hat BF}$ together with a projector $P_{\cmark}$ such that 
\begin{equation}
p(\Lambda, P_{\cmark}, \rho) E_D(\eta(\Lambda, P_{\cmark}, \rho))> E_D(\rho).
\end{equation}
Then it would be possible to take $n$ copies of $\rho$, obtain approximately $np(\Lambda, P_{\cmark}, \rho)$ copies of $\eta(\Lambda, P_{\cmark}, \rho)$, and for large enough $n$ distill $np(\Lambda, P_{\cmark}, \rho) E_D(\eta(\Lambda, P_{\cmark}, \rho))$ EPR pairs which would be strictly larger than $nE_D(\rho)$. However, this is not possible since by definition $E_D(\rho)$ is the maximum rate at which EPR pairs can be distilled from $\rho_{AB}$ by LOCC.
\end{proof}

\section{Bell diagonal states}
\label{app:belloptimal}
In Section~\ref{sec:belloptimal}, we stated Theorem \ref{th:infdejmps} and argued that the DEJMPS distillation protocol is optimal for distilling two copies of rank three Bell diagonal states. In this appendix we make this argument rigorous. The formal statement that we show is as follows: 
\begin{theorem}\label{DEJMPSfidoptimal}
DEJMPS is fidelity-optimal with respect to the state $\rho = \tau^{\otimes 2}$, where
\begin{equation}
\tau = p_1 \proj{\Phi^+} + p_2 \proj{\Psi^+} + (1-p_1 - p_2) \proj{\Phi^-}\ ,
\label{bell3states}
\end{equation}
with $p_1 >0.5$ and $p_1 > p_2 \ge 1-p_1 - p_2$.
\end{theorem}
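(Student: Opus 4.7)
The plan is to apply semidefinite programming duality to the PPT relaxations developed in Section~\ref{sec:rainsbound}. Since the set of PPT operations strictly contains all LOCC operations, any upper bound on the PPT-relaxed optimum is automatically an upper bound over all LOCC protocols, so it suffices to exhibit feasible dual solutions that match the performance of DEJMPS. Concretely, I would first specialize the DEJMPS output formulas (Appendix~\ref{sec:fixedprot}) to the rank-$3$ input ($p_4 = 0$), recovering the success probability $N = p_1^2 + (1-p_1)^2$ and conditional output fidelity $p'_1 = p_1^2/N$ already stated in Section~\ref{sec:belloptimal}. Since DEJMPS is itself LOCC, this already gives $f_{\textrm{opt}}(\tau^{\otimes 2}) \geq p'_1$ and, restricted to protocols attaining this fidelity, $p_{\textrm{opt}}(\tau^{\otimes 2}) \geq N$; only the two matching upper bounds remain.

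For the fidelity upper bound I would use the dual of Optimisation Program~\ref{PPTprogrammeSymmetry} derived in Appendix~\ref{sec:duality}: for every $\delta \in (0,1]$ I would construct explicit feasible dual variables $y$, $J$, $G$, $H$, $K$ (generally depending on $\delta$) whose dual objective $y\delta + \tr[J+K]/(|A||B|)$ equals precisely $p'_1$. By weak duality this certifies that no PPT, hence no LOCC, distillation scheme with success probability $\delta$ can exceed fidelity $p'_1$; taking the supremum over $\delta$ yields $f_{\textrm{opt}}(\tau^{\otimes 2}) \leq p'_1$. For the success-probability statement I would apply the same strategy to the dual of Optimisation Program~\ref{PPTprogrammeProb} (Appendix~\ref{sec:optprobsucc}), fixing $F = p'_1$ and exhibiting feasible dual variables whose objective evaluates to $N$, so that $p_{\textrm{opt}}(\tau^{\otimes 2}) \leq N$.

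The hard part will be producing the explicit dual variables. The matrices live on $A'B' \simeq \mathbb{C}^{16}$, so a generic ansatz is infeasible. I would exploit the structure that $\rho = \tau^{\otimes 2}$ is diagonal in the two-copy Bell basis (the sixteen product states $\ket{\Phi^+\Phi^+}, \ket{\Phi^+\Psi^+}, \ldots$) and that the DEJMPS circuit acts as a simple permutation on that basis, restricting the ansatz to dual variables that are themselves Bell-diagonal in this basis. Under this symmetry the partial transpose on $B'$ acts as a signed permutation on the sixteen two-copy Bell projectors, so every PSD constraint of the dual program collapses to a finite list of linear inequalities in the diagonal coefficients, and matching the dual objective to $p'_1$ (respectively to $N$) reduces to choosing a handful of scalars. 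The bulk of the work is then the elementary but careful algebraic verification that such a choice satisfies every inequality under the hypotheses $p_1 > 1/2$ and $p_1 > p_2 \geq 1-p_1-p_2$; the breakdown of this verification once a fourth Bell component $p_4 > 0$ is introduced is precisely what explains why the analogous rank-$4$ statement cannot be proved in this way and why the numerical gap in FIG.~\ref{fig:triangle} appears in a neighbourhood of the isotropic state.
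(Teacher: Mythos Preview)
Your proposal is correct and matches the paper's own proof essentially step for step: the paper also certifies $f_{\textrm{opt}}=p'_1$ and $p_{\textrm{opt}}=N$ by exhibiting feasible solutions to the duals of Optimisation Programs~\ref{PPTprogrammeSymmetry} and~\ref{PPTprogrammeProb}, choosing the dual variables to be two-copy Bell-diagonal so that every semidefinite constraint collapses to scalar inequalities verifiable under $p_1>1/2$. The only refinement you may want to add when carrying this out is the paper's extra simplification of restricting the nontrivial dual variable to satisfy $V=V^\Gamma$ (i.e.\ killing the $YY$-type Pauli components), which makes the scalar checks especially clean.
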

\begin{remark}
Every Bell diagonal state of rank up to three can be transformed to the form in Eq.~\ref{bell3states} using only local Clifford operations, hence Theorem~\ref{DEJMPSfidoptimal} effectively applies to all Bell diagonal states of rank up to three.
\end{remark}

The proof is structured as follows. In Appendix~\ref{sec:propbellstates}, we prove some basic properties of Bell diagonal states. In Appendix~\ref{sec:dejmpsfopt}, we show that DEJMPS protocol achieves $f(\text{DEJMPS},\rho)=f_\text{opt}(\rho)$ for states of the form in Eq.~\eqref{bell3states} and we complete the argument in Appendix~\ref{sec:dejmpspopt}, where we show that the success probability for these states is $p(\text{DEJMPS},\rho)=p_\text{opt}(\rho)$. 

\subsection{Properties of the Bell diagonal states}
\label{sec:propbellstates}
Consider the Bell diagonal states
\begin{equation}
\tau = p_1 \proj{\Phi^+} + p_2 \proj{\Psi^+} + p_3 \proj{\Phi^-} +(1-p_1 - p_2 - p_3) \proj{\Psi^-}.
\end{equation}
Given the parameters $(p_1,p_2,p_3)$ we have that $\tr\left[\tau\right]=1$ and the eigenvalues of $\tau$ are positive so long as $p_1,p_2,p_3\geq 0$ and $1-p_1 - p_2 - p_3\geq 0$. Geometrically the set of Bell diagonal states forms a tetrahedron. Notice that $p_1 = \tr \left[\proj{\Phi^+}\tau\right]$ and so on.

We can give an alternative parameterization for $\tau$ as follows:
\begin{equation}
	\tau = \frac{1}{4}\left(II + r_1 XX + r_2 YY + r_3 ZZ\right),
\end{equation}
where for Pauli matrices $P_i$ we use the shorthand notation $P_i \otimes P_j = P_i P_j.$ Notice that $r_1=\tr \left[XX\tau\right]$ and so on. The convenience of this parameterization is that 
\begin{equation}
\tau^\Gamma = \frac{1}{4}\left(II + r_1 XX - r_2 YY + r_3 ZZ\right),
\end{equation}
so that in these coordinates the partial transpose is a reflection. (This follows because $Y^T=-Y$ and other Pauli matrices are unaffected by transpose.) Notice that the partial transpose of a Bell diagonal state is a Bell diagonal matrix.

We can use the definitions to find
\begin{eqnarray}
p_1&=&(1+r_1-r_2+r_3)/4 ,\\
p_2&=&(1+r_1+r_2-r_3)/4 ,\\
p_3&=&(1-r_1+r_2+r_3)/4 ,\\
1-p_1 - p_2 - p_3 &=&(1-r_1-r_2-r_3)/4.
\end{eqnarray}
These formulas make it possible to tell when $\tau$ is positive even if it is expressed in terms of the parameters $r_i$. Now if we have two copies of $\tau$ we of course have
\begin{eqnarray}
\begin{aligned}
	\tau \otimes \tau &=\frac{1}{4} \left(II + r_1 XX + r_2 YY + r_3 ZZ\right)\otimes \frac{1}{4}\left(II + r_1 XX + r_2 YY + r_3 ZZ\right) \\
      &= \frac{1}{16}\left[IIII + r_1 (IIXX+XXII)+ r_2 (IIYY+YYII)+ r_3 (IIZZ+ZZII) +r_1^2 XXXX \right. 
\\ &\left.+ r_1r_2(XXYY+YYXX) +r_1r_3(XXZZ+ZZXX) + r_2^2 YYYY + r_2r_3(YYZZ+ZZYY)\right].
\end{aligned}
\end{eqnarray}
In the dual SDP we will restrict attention to dual variables $V$ that have the same symmetry as the matrices $\tau\otimes \tau$; specifically,
\begin{eqnarray}
\begin{aligned}
V &=  \frac{1}{16}\left[v_0 IIII + v_1 (IIXX+XXII)+ v_2 (IIYY+YYII)+ v_3 (IIZZ+ZZII) +v_{11} XXXX \right. \\
&\left.+ v_{12}(XXYY+YYXX) +v_{13}(XXZZ+ZZXX) + v_{22} YYYY + v_{23}(YYZZ+ZZYY)\right]
\label{eq:dualV}
\end{aligned}
\end{eqnarray}
and so
\begin{eqnarray}
\begin{aligned}
	V^\Gamma &= \frac{1}{16} \left[v_0 IIII + v_1 (IIXX+XXII)- v_2 (IIYY+YYII)+ v_3 (IIZZ+ZZII) +v_{11} XXXX \right. \\
&\left.- v_{12}(XXYY+YYXX) +v_{13}(XXZZ+ZZXX) + v_{22} YYYY - v_{23}(YYZZ+ZZYY)\right]\ .
\end{aligned}
\label{eq:dualpt}
\end{eqnarray}
Here $\Gamma$ denotes the transpose on Bob's systems, that is on the second and fourth Pauli matrices. Notice that in this parameterization $v_{13}=\tr\left[(XXZZ)V\right]$ and so on. 
Alternatively we can expand $V$ in terms of projections on the Bell states as follows:
\begin{eqnarray}
\begin{aligned}
V &=  w_1\proj{\Phi^+} \proj{\Phi^+} + w_2 (\proj{\Phi^+} \proj{\Psi^+}+\proj{\Psi^+} \proj{\Phi^+}) + w_3 \proj{\Psi^+} \proj{\Psi^+}  \\ 
& + w_4 \proj{\Phi^-} \proj{\Phi^-} + w_5 (\proj{\Phi^+} \proj{\Phi^-}+\proj{\Phi^-} \proj{\Phi^+})  \\ 
& + w_6 (\proj{\Psi^+} \proj{\Phi^-}+\proj{\Phi^-}\proj{\Psi^+})  + w_7 \proj{\Psi^-} \proj{\Psi^-}  \\ 
&+ w_8 (\proj{\Phi^+} \proj{\Psi^-}+\proj{\Psi^-}\proj{\Phi^+}) + w_9 (\proj{\Psi^+} \proj{\Psi^-}+\proj{\Psi^-}\proj{\Psi^+})  \\
& + w_{10} (\proj{\Phi^-} \proj{\Psi^-}+\proj{\Psi^-}\proj{\Phi^-}).
\end{aligned}
\label{eq:dualvariable}
\end{eqnarray}
Here we use a shorthand notation $\proj{\psi} \otimes \proj{\phi} = \proj{\psi} \proj{\phi}$. In terms of this parameterization $V\geq 0$ if and only if $w_i\geq 0$ for all $i$.

In constructing a dual semidefinite program in the main text we consider a restricted set of $V$ such that $V^\Gamma=V$. It is clear from Eq.~\eqref{eq:dualV} and Eq.~\eqref{eq:dualpt} that the condition $V^\Gamma=V$ is equivalent to $v_2=0=v_{12}=v_{23}$. Thus we require the following three conditions
\begin{eqnarray}
v_2&=& -w_1+w_3+w_4+2w_6-w_7-2w_8=0, \label{eq:pt1} \\
v_{12}&=& -w_1+w_3-w_4+2w_5+w_7-2w_9=0, \label{eq:pt2} \\
v_{23}&=& -w_1+2w_2-w_3+w_4+w_7-2w_{10}=0. \label{eq:pt3}
\end{eqnarray}

In the main text we construct a dual feasible solution for the SDP that arises in the restricted case of a Bell diagonal state where $1-p_1-p_2-p_3=0$, and therefore $p_3=1-p_1-p_2$. Making the definitions
\begin{equation}
 \lambda_1 = p_1^2, \,\,\, \lambda_2  = p_1 p_2, \,\,\, \lambda_3 = p_2^2, \,\,\, \lambda_4 = (1 - p_1 - p_2)^2, \,\,\, \lambda_5  = p_1 (1 - p_1 - p_2), \,\,\, \lambda_6 = p_2 (1 - p_1 - p_2),
\label{eq:lambdas} 
\end{equation}
we obtain
\begin{eqnarray}
\begin{aligned}
\tau\otimes \tau &=  \lambda_1\proj{\Phi^+} \proj{\Phi^+} + \lambda_2 (\proj{\Phi^+} \proj{\Psi^+}+\proj{\Psi^+} \proj{\Phi^+}) + \lambda_3 \proj{\Psi^+} \proj{\Psi^+}  \\
& + \lambda_4 \proj{\Phi^-} \proj{\Phi^-} + \lambda_5 (\proj{\Phi^+} \proj{\Phi^-}+\proj{\Phi^-} \proj{\Phi^+})  \\
& + \lambda_6 (\proj{\Psi^+} \proj{\Phi^-}+\proj{\Phi^-}\proj{\Psi^+}).
\end{aligned}
\label{eq:rank3state}
\end{eqnarray}
 
\subsection{Optimal fidelity of DEJMPS}\label{sec:dejmpsfopt}
We will first show that $f(\text{DEJMPS},\rho)=f_{\textrm{opt}}(\rho)$, when $\rho$ consists of two copies of some Bell diagonal state of rank up to three.
The dual SDP for maximizing fidelity has the form:
\begin{sdp}{minimize}{$d(y,J,G,H, K)= y \delta + \frac{\tr[J + K]}{|A||B|}$}
& $J, G, H, K \ge 0, y \in \mathbb{R}$\ ,\\
& $ |A||B| \left(y - \frac{1}{\delta}\right) \rho^T + J - G^\Gamma + H^\Gamma + K^\Gamma \ge 0$\ ,\\
& $ |A||B| y \rho^T + J - \frac{1}{D + 1} G^\Gamma - \frac{1}{D - 1}H^\Gamma + K^\Gamma \ge 0$\ ,\\ \label{eq:dualsdp}
\end{sdp}
For rank-two and rank-three Bell diagonal states, the output fidelity of DEJMPS is $F = p'_1 = p_1^2/N$, where $N = p_1^2 + (1-p_1)^2$ is the probability that the protocol succeeds. Hence we require a feasible solution of the dual program whose objective function takes the value $p'_1$. Here we find such a solution that is valid for all all $\delta \in (0,1]$. As an ansatz consider a solution with $y = \frac{p'_1}{\delta}$ and $J = G = K = 0$. This means that the objective function takes the value $p'_1$. We now need to show that there exists a matrix $H$ such that
\begin{align}
&H \ge 0 ,\\
&\frac{|A| |B|}{\delta}(p'_1 - 1) \rho^T + H^\Gamma \ge 0,
\label{eq:firstcond} \\
&\frac{|A| |B|}{\delta}p'_1 \rho^T - H^\Gamma \ge 0,
\label{eq:seccond}
\end{align}
To make it simpler we can assume that $H = \frac{|A| |B|}{\delta}V$ and so now we need to find the matrix V such that
\begin{align}
&V \ge 0 ,\\
&(p'_1 - 1) \rho^T + V^\Gamma \ge 0,
\label{eq:firstcondV} \\
&p'_1 \rho^T - V^\Gamma \ge 0.
\label{eq:seccondV}
\end{align}
Since the input state in our SDP is $\rho=\tau\otimes\tau$ given by Eq.~\eqref{eq:rank3state}, we further restrict $V$ by requiring that $V=V^\Gamma$. We can also ignore the transpose on $\rho_{A'B'}$ in the above equations as here we work with the Bell diagonal states. The chosen dual variable $V$ that satisfies the above conditions can be specified as follows in terms of the coefficients in Eq.~\eqref{eq:dualvariable}:
\begin{align}
\begin{aligned}
w_1 = p_1' (1-p_1)^2, \,\,\, w_2 = p_1' (1-p_1)p_2, \,\,\, w_3 = p_1'p_2^2, \,\,\, w_4 = p_1'(1-p_1-p_2)^2, \\
w_5 = p_1'(1-p_1) (1-p_1-p_2), \,\,\, w_6 = p_1'p_2 (1-p_1-p_2),  \,\,\, w_7=0=w_8=w_9=w_{10}.
\end{aligned}
\label{eq:ws}
\end{align}
Clearly $V\geq 0$ since $w_i\geq 0$ for all $i$. It is straightforward to check that each of equations (\ref{eq:pt1}-\ref{eq:pt3}) are satisfied and therefore $V=V^\Gamma$. Since $V^\Gamma$ is diagonal in the same basis as $\rho_{A'B'}$, to verify the conditions \eqref{eq:firstcondV} and \eqref{eq:seccondV} we just need to verify a set of scalar equations:
\begin{align}
&(p'_1 - 1) \lambda_i + w_i \ge 0, \\
&p'_1 \lambda_i - w_i \ge 0,
\end{align}
where the coefficients $\lambda_i$ are defined in Eq.~\eqref{eq:lambdas}. It is straightforward to determine that each of these equations is satisfied so long as $p_1\geq 1/2$ as was specified originally. This shows that $V$ defined through Eq.~\eqref{eq:dualvariable} and Eq.~\eqref{eq:ws} satisfies Eq.~\eqref{eq:firstcondV} and Eq.~\eqref{eq:seccondV} and therefore we have found a feasible solution of the dual problem for which the objective function takes the value $p'_1$ for all values of $\delta \in (0,1]$. This proves that for all those values of $\delta$ there exists no protocol that can achieve higher fidelity than $p'_1$, and hence DEJMPS protocol achieves the highest fidelity for two copies of all Bell diagonal states of rank up to three, when optimising over all LOCC protocols.

\subsection{Optimal probability of success of DEJMPS}\label{sec:dejmpspopt}

Now we will show that DEJMPS also satisfies the second condition required for being fidelity-optimal, namely $p(\text{DEJMPS},\rho) = p_{\textrm{opt}}(\rho)$. In other words, we will show that it is also not possible to achieve the output fidelity of DEJMPS with probability of success larger than that of DEJMPS. We recall that the dual SDP for the probability of success reads
\begin{sdp}{minimize}{$\frac{\tr[J + K]}{|A||B|}$}
& $J,G,H, K \geq 0, y \in \mathbb{R}$\ ,\\
& $[(1-F)y - |A||B|]\rho^T + J - G^\Gamma + H^\Gamma + K^\Gamma \ge 0$\ ,\\
& $[-Fy - |A||B|]\rho^T + J - \frac{1}{D+1}G^\Gamma - \frac{1}{D-1}H^\Gamma + K^\Gamma \ge 0$\ .\\
\end{sdp}

As an ansatz we consider a solution with $J = |A||B| R$, $y = |A||B| s$ and $G = K = 0$, where
\begin{align}
\begin{split}
R 	&= \left[p_1^2 \proj{\Phi^+}\proj{\Phi^+} + p_2^2 \proj{\Psi^+}\proj{\Psi^+} + (1-p_1-p_2)^2 \proj{\Phi^-}\proj{\Phi^-} \right. \\
	& \left. + p_2 (1-p_1-p_2)\left(\proj{\Psi^+}\proj{\Phi^-} + \proj{\Phi^-}\proj{\Psi^+}\right)\right]
\end{split}
\end{align}
and
\begin{equation}
s = - \frac{N}{(1-p_1)(2p_1 - 1)}.
\end{equation}
This means that the objective function takes the value $N$. We now need to show that there exists a matrix $H$ such that
\begin{align}
&H \ge 0 ,\\
&[(1-F)y - |A||B|]\rho^T + J + H^\Gamma \ge 0,
\label{eq:firstcondprob} \\
&[-Fy - |A||B|]\rho^T + J - \frac{1}{D-1}H^\Gamma \ge 0.
\label{eq:seccondprob}
\end{align}
To make it simpler we can assume that $H = |A| |B| V$ and so now we need to find the matrix V such that
\begin{align}
&V \ge 0 ,\\
&[(1-F)s - 1]\rho^T + R + V^\Gamma \ge 0,
\label{eq:firstcondprob} \\
&[-Fs - 1]\rho^T + R  - \frac{1}{D-1}V^\Gamma \ge 0.
\label{eq:seccondprob}
\end{align}
Here $F = p'_1$ is the output fidelity of DEJMPS and $N = p_1^2 + (1-p_1)^2$. Again, since we work in the Bell basis with Bell diagonal states, we can ignore the transpose in the above equations. We specify the Bell coefficients of $V$ as
\begin{align}
\begin{split}
w_1 = \frac{(1-p_1) p_1^2}{2p_1-1}, \,\, w_2 = \frac{p_1^2 p_2}{2p_1 - 1}, \,\, w_3 = \frac{p_1^2 p_2^2}{(1-p_1)(2p_1 - 1)}, \,\, w_4 = \frac{p_1^2 (1-p_1-p_2)^2}{(1-p_1)(2p_1 - 1)}, \\
w_5 = \frac{p_1^2 (1-p_1-p_2)}{2p_1-1}, \,\, w_6 = \frac{p_1^2 p_2 (1-p_1-p_2)}{(1-p_1)(2p_1 - 1)}, \,\, w_7=w_8=w_9=w_{10} = 0.
\end{split}
\end{align}
where $w$'s are the Bell coefficients as expressed in the definition Eq~\eqref{eq:dualvariable}. Now we will show that these variables satisfy all the constraints. 
Clearly $V\geq 0$ since $w_i\geq 0$ for all $i$. It is straightforward to check that each of equations (\ref{eq:pt1}-\ref{eq:pt3}) are satisfied and therefore $V=V^\Gamma$. Since $V^\Gamma$ is diagonal in the same basis as $\rho_{A'B'}$, to verify the conditions \eqref{eq:firstcondprob} and \eqref{eq:seccondprob} we just need to verify a set of scalar equations:
\begin{align}
&[(1-F)s - 1] \lambda_i + [R]_{ii}+ w_i \ge 0, \\
&(-Fs-1) \lambda_i + [R]_{ii} - w_i \ge 0.
\end{align}
where the coefficients $\lambda_i$ are again defined in Eq.~\eqref{eq:lambdas} and $[R]_{ii}$ are the diagonal entries of $R$ in the Bell basis.

We can easily check that for $p_1 > 0.5$, all the constraints are satisfied and so we have found a feasible solution to the dual SDP for probability of success. The value of the objective function is $\frac{\tr[J]}{|A||B|} = N$. Hence we have found a feasible solution of the dual minimisation problem (that provides upper bounds for achievable probability of success) that can be in fact achieved with DEJMPS. That is, we have proven that DEJMPS is also optimal with respect to probability of success. That is, for Bell diagonal states of rank up to three, it is impossible to achieve the output fidelity of DEJMPS with probability of success larger than that of DEJMPS. This concludes the proof that DEJMPS is fidelity-optimal for two copies of all Bell diagonal states of rank up to three.

\section{Remote entanglement generation through EPL scheme}
\label{sec:EPL}
Here, we show that EPL-D is the optimal distillation protocol within the EPL remote entanglement generation scheme according to our two definitions. That is, we formally state and prove Theorems \ref{th:infepl} and \ref{th:distepl} which we now formulate as one theorem:
\begin{theorem}
EPL-D is both fidelity-optimal and distillation-optimal for states of the form:
\begin{equation}
    \rho_{AB}(p, p_{d}) = \frac{1}{2 \pi}\int d \phi \tau_{A1B1}(\phi,p,p_d) \otimes \tau_{A2B2}(\phi,p,1),
\label{eq:intphaseApp}
\end{equation}
where
\begin{equation}
\tau_{AB}(\phi,p,p_d) = p \left(p_d \proj{\Psi^+(\phi)} + (1-p_d) \proj{\Psi^-(\phi)})\right) + (1-p) \proj{11}.
\end{equation}
\end{theorem}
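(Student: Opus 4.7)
The plan is to exploit the $\phi$-averaging to reduce $\rho_{AB}(p,p_d)$ to a block-diagonal form whose only non-trivial block is a local-unitary copy of the EPL-D output $\eta_{\aout\bout}(p_d)$ of Eq.~\eqref{eq:dephasedState2}. First we carry out the integration $\int\!d\phi/(2\pi)$ explicitly. Expanding $\tau_{A_1B_1}(\phi,p,p_d)\otimes\tau_{A_2B_2}(\phi,p,1)$ produces a $\phi$-independent $\proj{11}\otimes\proj{11}$ contribution, cross terms $\proj{\Psi^\pm(\phi)}\otimes\proj{11}$ (and its mirror) whose coherences carry a net phase $e^{\pm i\phi}$ and drop out on averaging, and a $\Psi\otimes\Psi$ sector spanned by $\{\ket{01,01},\ket{01,10},\ket{10,01},\ket{10,10}\}$. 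In that sector every off-diagonal element carries a net phase $e^{\pm i\phi}$ or $e^{\pm 2i\phi}$ except the swap coherences $\ket{01,10}\bra{10,01}$ and $\ket{10,01}\bra{01,10}$, where the phases of the two copies cancel exactly. Collecting what survives, $\rho_{AB}(p,p_d)$ becomes the direct sum of a collection of rank-one separable blocks (total weight $1-p^2/2$) and a single two-dimensional block of weight $p^2/2$, which coincides with $\eta_{\aout\bout}(p_d)$ after a local $X$ on each side.

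For the fidelity statement we combine this decomposition with Lemmas~\ref{maxfidmixture} and~\ref{optprobsuccmix}. The rank-one blocks are product states, so their $f_{\text{opt}}$ is at most $1/2$, while $f_{\text{opt}}(\eta(p_d))=p_d$ by the non-filterability result of~\cite{verstraete2003optimal} already used for R states. For $p_d>1/2$, Lemma~\ref{maxfidmixture} gives $f_{\text{opt}}(\rho_{AB}(p,p_d))\leq p_d$, which EPL-D attains, yielding equality. The maximum is attained on the unique two-dimensional block, so Lemma~\ref{optprobsuccmix} caps the success probability of any fidelity-optimal protocol by the block weight $p^2/2$, which EPL-D also attains. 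Hence EPL-D is fidelity-optimal.

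For distillation-optimality we use the standard two-qubit Bell-diagonal formula $E_D(\eta(p_d))=1-h(p_d)$ together with the block structure: since the rank-one blocks are separable and therefore contribute no distillable entanglement, a convexity-plus-flag-concatenation argument gives $E_D(\rho_{AB}(p,p_d))\leq(p^2/2)(1-h(p_d))$, while the matching lower bound follows by running EPL-D and then asymptotic hashing on the output. Plugging this into Eq.~\eqref{eq:distoptdef} gives
\begin{equation}
p(\text{EPL-D},\rho_{AB}(p,p_d))\cdot E_D(\eta(p_d)) = \tfrac{p^2}{2}(1-h(p_d)) = E_D(\rho_{AB}(p,p_d)),
\end{equation}
so EPL-D saturates the bound of Lemma~\ref{lem:distopt} and is distillation-optimal.

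The main technical step is the phase bookkeeping of the $\Psi\otimes\Psi$ sector: one must check that no $\phi$-dependent coherence is accidentally overlooked and that the surviving $2\times 2$ matrix really is, up to local unitaries, $\eta_{\aout\bout}(p_d)$ rather than some related dephased state. Once the block-diagonal form is in hand, the rest is a mechanical invocation of the preceding appendix's lemmas and the non-filterability lemma. A small subtlety is that Lemma~\ref{optprobsuccmix} requires the fidelity-maximising block to be unique, so the bound $p_d>1/2$ is strictly needed; the boundary case $p_d=1/2$ is vacuous because the two-dimensional block is then unentangled and there is nothing to distill.
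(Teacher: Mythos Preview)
Your argument for fidelity-optimality is essentially the paper's: integrate out $\phi$, exhibit the block-diagonal form with a single non-trivial $2\times 2$ block locally equivalent to $\eta_{\aout\bout}(p_d)$, and then invoke Lemmas~\ref{maxfidmixture} and~\ref{optprobsuccmix} together with the non-filterability of $\eta_{\aout\bout}(p_d)$ from~\cite{verstraete2003optimal}. Your remark about needing $p_d>1/2$ for the uniqueness hypothesis of Lemma~\ref{optprobsuccmix} is a nice point the paper leaves implicit.

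For distillation-optimality you take a genuinely different route. The paper upper-bounds $E_D(\rho_{AB}(p,p_d))$ by the relative entropy of entanglement, exhibiting an explicit separable state $\sigma^{\mathrm{SEP}}$ for which $S(\rho\|\sigma^{\mathrm{SEP}})=\tfrac{p^2}{2}(1-h(p_d))$. You instead argue directly from the block structure: the $2\times 2$ block can be locally flagged, so the distillable entanglement of the mixture equals the weighted average of the blocks' distillable entanglements. This is more elementary once the block decomposition is in hand, and it avoids guessing the right separable comparison state. Two cautions, though. First, the word ``convexity'' is misleading here: $E_D$ is famously \emph{not} convex, and your bound does not follow from any convexity property. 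What you actually need is the (correct) fact that for a mixture whose components live on subspaces that Alice and Bob can distinguish by \emph{local} measurements, $E_D$ is additive over the flagged ensemble; you should state and use that. Second, for this to apply you must verify that the $2\times 2$ block is locally distinguishable from the separable remainder. It is---both parties having odd local parity singles out exactly the support of $\rho^F$ without disturbing it---but this check is the load-bearing step and should be made explicit rather than folded into the phrase ``flag-concatenation.''
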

We postpone the proof of fidelity-optimal to Appendix~\ref{sec:eplfidopt} and the proof of distillation-optimal to Appendix~\ref{subsec:epldistopt}. 

\subsection{EPL-D is fidelity-optimal}
\label{sec:eplfidopt}
We note that for states of the form Eq.~\eqref{eq:intphase} the integration over the phase can be performed analytically:
\begin{align}
\rho_{AB}(p,p_d) &= \frac{p^2}{4}\left[P_{{\text{odd}}_{A1B1}} \otimes P_{{\text{odd}}_{A2B2}} + (2p_d - 1) (\ketbra{01}{10}_{A1B1} \otimes \ketbra{10}{01}_{A2B2} + \ketbra{10}{01}_{A1B1} \otimes \ketbra{01}{10}_{A2B2})\right]\nonumber \\
                        &+ \frac{(1-p)p}{2}\left[\proj{11}_{A1B1} \otimes
P_{{\text{odd}}_{A2B2}} + P_{{\text{odd}}_{A1B1}} \otimes \proj{11}_{A2B2}\right] + (1-p)^2 \proj{11}_{A1B1} \otimes \proj{11}_{A2B2},
\label{eq:avphaseintegrated}
\end{align}
where $P_{\text{odd}} = \proj{01} + \proj{10}$ is the projector on the odd-parity subspace of the two-qubit space.
Let us now permute the order of the registers to $A1A2B1B2$. After the permutation, $\rho$ takes the following diagonal form in the standard basis:
\begin{equation}
\rho_{AB}(p,p_d) =
\begin{pmatrix}
0_{3}   \\
& a     \\
& & 0_2 \\
& & & Q \\
& & & & 0_1 \\
& & & & & b \\
& & & & & & a \\
& & & & & & & b \\
& & & & & & & & b \\
& & & & & & & & & c \\
\end{pmatrix},
\label{eq:sigmamatrix}
\end{equation}
where $0_i$ denotes an $i \times i$ zero matrix, all the non filled elements are 0, and the shorthands $Q,a,b,c$ and $d$ stand for
\begin{equation}
Q = \begin{pmatrix}
a & 0 & 0 & ad   \\
0 & b & 0 & 0   \\
0 & 0 & 0 & 0 \\
ad & 0 & 0 & a
\end{pmatrix},
\label{eq:Q}
\end{equation}
\begin{align}
a &= \frac{p^2}{4}, \\
b &= \frac{1}{2}(1-p) p, \\
c &= (1-p)^2, \\
d &= 2 p_d - 1,
\end{align}
Let
\begin{align}
L(p,p_d) =
\begin{pmatrix}
0_{3}   \\
& a     \\
& & 0_7 \\
& & & b \\
& & & & a \\
& & & & & b \\
& & & & & & b \\
& & & & & & & c \\
\end{pmatrix},
\label{eq:L}
I(p,p_d) =
\begin{pmatrix}
0_{6}   \\
& 0 & 0 & 0 & 0 \\
& 0 & b & 0 & 0 \\
& 0 & 0 & 0 & 0 \\
& 0 & 0 & 0 & 0 \\
&    &   &   &   & 0_6 \\
\end{pmatrix}
\textrm{ and }
F(p,p_d) =
\begin{pmatrix}
0_{6}   \\
& a & 0 & 0 & ad \\
& 0 & 0 & 0 & 0 \\
& 0 & 0 & 0 & 0 \\
& ad & 0 & 0 & a \\
&    &   &   &   & 0_6 \\
\end{pmatrix}.
\end{align}
Now we can rewrite $\rho$ as a function of $L,I$ and $F$:
\begin{equation}
	\rho_{AB}(p,p_d) = \tr[L]\rho^L + \tr[I]\rho^I +\tr[F]\rho^F,
\end{equation}
where:
\begin{equation}
	\rho^L = \frac{1}{\tr[L]} L, \, \, \rho^I = \frac{1}{\tr[I]} I, \, \, \rho^F = \frac{1}{\tr[F]} F.
\end{equation}
Both $\rho^L$ and $\rho^I$ are diagonal in the standard basis. In consequence, the output fidelity on these states is upper bounded by 0.5. Hence by Lemma \ref{maxfidmixture} we see that:
\begin{equation}
	f_{\textrm{opt}}\left(\rho_{AB}(p,p_d) \right) \leq f_\textrm{opt}(\rho^F).
\end{equation}
Note that $\rho^F$ only has support on a bipartite two-qubit subspace:
\begin{equation}
\rho^F = \frac{1}{2}\left(\proj{01}_A \otimes \proj{10}_B + d \ketbra{01}{10}_A \otimes \ketbra{10}{01}_B + d \ketbra{10}{01}_A \otimes \ketbra{01}{10}_B + \proj{10}_A \otimes \proj{01}_B \right).
\end{equation}
Hence, Alice and Bob can redefine their state according to: 
\begin{align}
\begin{split}
\ket{01}_A &\rightarrow \ket{0}_A, \\
\ket{10}_A &\rightarrow \ket{1}_A, \\
\ket{01}_B &\rightarrow \ket{1}_B, \\
\ket{10}_B &\rightarrow \ket{0}_B. \\
\end{split}
\end{align}
Under such local relabeling the state $\rho^F$ becomes
\begin{equation}
\rho^F = p_d \proj{\Phi^+} + (1-p_d)\proj{\Phi^-}.
\label{eq:dephasedState3}
\end{equation}
We know from~\cite{verstraete2003optimal} that it is not possible to increase the fidelity of the state in Eq.~\eqref{eq:dephasedState3} through local filtering.
In consequence,
\begin{equation}
	f_\textrm{opt}\left(\rho_{AB}(p,p_d) \right) \leq p_d.
\end{equation}
Since the output fidelity of EPL-D is exactly $p_d$, EPL-D achieves the optimal fidelity. Now we show that it achieves this output fidelity with the highest possible probability of success. From Lemma~\ref{optprobsuccmix}, it follows that this probability of success is upper bounded by the relative weight of $\rho_F$ in $ \rho_{AB}(p,p_d)$, which is $p^2/2$. Since EPL-D achieves the output fidelity of $p_d$ with success probability $p^2/2$, we can conclude that it is also optimal with respect to probability of success. Hence EPL-D is fidelity-optimal for the EPL remote entanglement generation.

\subsection{EPL-D is distillation-optimal}
\label{subsec:epldistopt}
Let us consider the distillable entanglement of the state in Eq.~\eqref{eq:avphaseintegrated}. Unfortunately, there is no straightforward way of calculating distillable entanglement. However, distillable entanglement is upper bounded by the relative entropy of entanglement \cite{horodecki2000limits}:
\begin{equation}
E_R(\rho) = \min_{\sigma \in \text{SEP}}S(\rho||\sigma),
\end{equation}
where $S(\rho||\sigma)$ is the relative entropy defined as
\begin{equation}
S(\rho||\sigma) = \tr[\rho \log\rho] - \tr[\rho \log\sigma].
\end{equation}
Moreover, $S(\rho||\sigma)$ for any $\sigma \in \text{SEP}$ is an upper bound on $E_R(\rho)$ and, in consequence, on $E_D(\rho)$. Consider the separable state
\begin{align}
\sigma^{\text{SEP}}_{AB}(p) &= \frac{p^2}{4}P_{{\text{odd}}_{A1B1}} \otimes P_{{\text{odd}}_{A2B2}} + \frac{(1-p)p}{2}\left[\proj{11}_{A1B1} \otimes
P_{{\text{odd}}_{A2B2}}\right. \\
            &\left. + P_{{\text{odd}}_{A1B1}} \otimes
\proj{11}_{A2B2}\right] + (1-p)^2 \proj{11}_{A1B1} \otimes \proj{11}_{A2B2}.
\label{eq:avphasesepguess}
\end{align}
Then we can calculate
\begin{equation}
S(\rho_{AB}(p,p_d)||\sigma^{\text{SEP}}_{AB}(p)) = \frac{p^2}{2}(1 - h(p_d)),
\end{equation}
where $h$ denotes the binary entropy function. We can conclude that $E_D(\rho_{AB}(p,p_d))\leq \frac{p^2}{2}(1 - h(p_d))$.

Now, we note that a possible distillation scheme would be to first perform the EPL-D protocol on the individual copies of the state in Eq.~\eqref{eq:avphaseintegrated} and then perform the optimal achievable distillation procedure on the output states. Hence it is possible to distil EPR states from the states in Eq.~\eqref{eq:avphaseintegrated} at a rate given by
\begin{equation}
R = p_{\text{succ,EPL-D}}E_D\left(\eta_{\aout \bout}(p_d)\right).
\end{equation}
The success probability of EPL-D is $\frac{p^2}{2}$ and the distillable entanglement of rank-two Bell diagonal states is \cite{pirandola2017fundamental}
\begin{equation}
E_D\left(\eta_{\aout \bout}(p_d)\right) = 1 - h(p_d).
\end{equation}
Hence we can conclude that $E_D(\rho_{AB}(p,p_d))=\frac{p^2}{2}(1 - h(p_d))$ and so $E_D(\rho_{AB}(p,p_d)) = p_{\text{succ,EPL-D}}E_D\left(\eta_{\aout \bout}(p_d)\right)$. This proves that EPL-D is distillation-optimal for EPL remote entanglement generation scheme.

\bibliographystyle{arxiv2}
\bibliography{purify}

\begin{thebibliography}{10}

\bibitem{briegel1998quantum}
Briegel, H.-J., D{\"u}r, W., Cirac, J.~I., and Zoller, P.
\newblock Quantum repeaters: The role of imperfect local operations in quantum
  communication.
\newblock {\em Physical Review Letters}, 81(26):5932, 1998.

\bibitem{bratzik2013quantum}
Bratzik, S., Abruzzo, S., Kampermann, H., and Bru{\ss}, D.
\newblock Quantum repeaters and quantum key distribution: The impact of
  entanglement distillation on the secret key rate.
\newblock {\em Physical Review A}, 87(6):062335, 2013.

\bibitem{guha2015rate}
Guha, S., Krovi, H., Fuchs, C.~A., Dutton, Z., Slater, J.~A., Simon, C., and
  Tittel, W.
\newblock Rate-loss analysis of an efficient quantum repeater architecture.
\newblock {\em Physical Review A}, 92(2):022357, 2015.

\bibitem{vollbrecht2011entanglement}
Vollbrecht, K. G.~H., Muschik, C.~A., and Cirac, J.~I.
\newblock Entanglement distillation by dissipation and continuous quantum
  repeaters.
\newblock {\em Physical Review Letters}, 107(12):120502, 2011.

\bibitem{Munro_15}
Munro, W.~J., Azuma, K., Tamaki, K., and Nemoto, K.
\newblock Inside quantum repeaters.
\newblock {\em Selected Topics in Quantum Electronics, IEEE Journal of},
  21(3):1--13, 2015.

\bibitem{nickerson2013topological}
Nickerson, N.~H., Li, Y., and Benjamin, S.~C.
\newblock Topological quantum computing with a very noisy network and local
  error rates approaching one percent.
\newblock {\em Nature Communications}, 4:1756, 2013.

\bibitem{nickerson2014freely}
Nickerson, N.~H., Fitzsimons, J.~F., and Benjamin, S.~C.
\newblock Freely scalable quantum technologies using cells of 5-to-50 qubits
  with very lossy and noisy photonic links.
\newblock {\em Physical Review X}, 4(4):041041, 2014.

\bibitem{bennett1996purification}
Bennett, C.~H., Brassard, G., Popescu, S., Schumacher, B., Smolin, J.~A., and
  Wootters, W.~K.
\newblock Purification of noisy entanglement and faithful teleportation via
  noisy channels.
\newblock {\em Physical Review Letters}, 76(5):722, 1996.

\bibitem{deutsch1996quantum}
Deutsch, D., Ekert, A., Jozsa, R., Macchiavello, C., Popescu, S., and Sanpera,
  A.
\newblock Quantum privacy amplification and the security of quantum
  cryptography over noisy channels.
\newblock {\em Physical Review Letters}, 77(13):2818, 1996.

\bibitem{zhao2001practical}
Zhao, Z., Pan, J.-W., and Zhan, M.
\newblock Practical scheme for entanglement concentration.
\newblock {\em Physical Review A}, 64(1):014301, 2001.

\bibitem{yamamoto2001concentration}
Yamamoto, T., Koashi, M., and Imoto, N.
\newblock Concentration and purification scheme for two partially entangled
  photon pairs.
\newblock {\em Physical Review A}, 64(1):012304, 2001.

\bibitem{pan2001entanglement}
Pan, J.-W., Simon, C., Brukner, {\v{C}}., and Zeilinger, A.
\newblock Entanglement purification for quantum communication.
\newblock {\em Nature}, 410(6832):1067--1070, 2001.

\bibitem{campbell2008measurement}
Campbell, E.~T. and Benjamin, S.~C.
\newblock Measurement-based entanglement under conditions of extreme photon
  loss.
\newblock {\em Physical Review Letters}, 101(13):130502, 2008.

\bibitem{kwiat2001experimental}
Kwiat, P.~G., Barraza-Lopez, S., Stefanov, A., and Gisin, N.
\newblock Experimental entanglement distillation and hidden non-locality.
\newblock {\em Nature}, 409(6823):1014--1017, 2001.

\bibitem{zhao2003experimental}
Zhao, Z., Yang, T., Chen, Y.-A., Zhang, A.-N., and Pan, J.-W.
\newblock Experimental realization of entanglement concentration and a quantum
  repeater.
\newblock {\em Physical Review Letters}, 90(20):207901, 2003.

\bibitem{reichle2006experimental}
Reichle, R., Leibfried, D., Knill, E., Britton, J., Blakestad, R., Jost, J.~D.,
  Langer, C., Ozeri, R., Seidelin, S., and Wineland, D.~J.
\newblock Experimental purification of two-atom entanglement.
\newblock {\em Nature}, 443(7113):838--841, 2006.

\bibitem{takahashi2010entanglement}
Takahashi, H., Neergaard-Nielsen, J.~S., Takeuchi, M., Takeoka, M., Hayasaka,
  K., Furusawa, A., and Sasaki, M.
\newblock Entanglement distillation from gaussian input states.
\newblock {\em Nature photonics}, 4(3):178--181, 2010.

\bibitem{kalb2017entanglement}
Kalb, N., Reiserer, A.~A., Humphreys, P.~C., Bakermans, J.~J., Kamerling,
  S.~J., Nickerson, N.~H., Benjamin, S.~C., Twitchen, D.~J., Markham, M., and
  Hanson, R.
\newblock Entanglement distillation between solid-state quantum network nodes.
\newblock {\em Science}, 356(6341):928--932, 2017.

\bibitem{dur2007entanglement}
D{\"u}r, W. and Briegel, H.~J.
\newblock Entanglement purification and quantum error correction.
\newblock {\em Reports on Progress in Physics}, 70(8):1381, 2007.

\bibitem{bennett1996mixed}
Bennett, C.~H., DiVincenzo, D.~P., Smolin, J.~A., and Wootters, W.~K.
\newblock Mixed-state entanglement and quantum error correction.
\newblock {\em Physical Review A}, 54(5):3824, 1996.

\bibitem{rains2001semidefinite}
Rains, E.~M.
\newblock A semidefinite program for distillable entanglement.
\newblock {\em IEEE Transactions on Information Theory}, 47(7):2921--2933,
  2001.

\bibitem{vidal2002computable}
Vidal, G. and Werner, R.~F.
\newblock Computable measure of entanglement.
\newblock {\em Physical Review A}, 65(3):032314, 2002.

\bibitem{plenio2005logarithmic}
Plenio, M.~B.
\newblock Logarithmic negativity: a full entanglement monotone that is not
  convex.
\newblock {\em Physical Review Letters}, 95(9):090503, 2005.

\bibitem{wang2016improved}
Wang, X. and Duan, R.
\newblock Improved semidefinite programming upper bound on distillable
  entanglement.
\newblock {\em Physical Review A}, 94(5):050301, 2016.

\bibitem{tomamichel2016quantum}
Tomamichel, M., Berta, M., and Renes, J.~M.
\newblock Quantum coding with finite resources.
\newblock {\em Nature Communications}, 7, 2016.

\bibitem{buscemi2010distilling}
Buscemi, F. and Datta, N.
\newblock Distilling entanglement from arbitrary resources.
\newblock {\em Journal of Mathematical Physics}, 51(10):102201, 2010.

\bibitem{fang2017non}
Fang, K., Wang, X., Tomamichel, M., and Duan, R.
\newblock Non-asymptotic entanglement distillation.
\newblock {\em arXiv preprint arXiv:1706.06221}, 2017.

\bibitem{brandao2011one}
Brandao, F.~G. and Datta, N.
\newblock One-shot rates for entanglement manipulation under non-entangling
  maps.
\newblock {\em IEEE Transactions on Information Theory}, 57(3):1754--1760,
  2011.

\bibitem{duan2001long}
Duan, L.-M., Lukin, M., Cirac, J.~I., and Zoller, P.
\newblock Long-distance quantum communication with atomic ensembles and linear
  optics.
\newblock {\em Nature}, 414(6862):413--418, 2001.

\bibitem{barrett2005efficient}
Barrett, S.~D. and Kok, P.
\newblock Efficient high-fidelity quantum computation using matter qubits and
  linear optics.
\newblock {\em Physical Review A}, 71(6):060310, 2005.

\bibitem{doherty01}
Doherty, A.~C., Parrilo, P.~A., and Spedalieri, F.~M.
\newblock A complete family of separability criteria.
\newblock {\em Physical Review A}, 69:022308, 2004.

\bibitem{doherty02}
Doherty, A.~C., Parrilo, P.~A., and Spedalieri, F.~M.
\newblock Distinguishing separable and entangled states.
\newblock {\em Physical Review Letters}, 88(18):187904, 2002.

\bibitem{Note1}
Available online: \url{https://github.com/StephanieWehner/EntanglementDist.jl}.

\bibitem{gisin1996hidden}
Gisin, N.
\newblock Hidden quantum nonlocality revealed by local filters.
\newblock {\em Physics Letters A}, 210(3):151--156, 1996.

\bibitem{rains1999bound}
Rains, E.~M.
\newblock Bound on distillable entanglement.
\newblock {\em Physical Review A}, 60(1):179, 1999.

\bibitem{rains1999rigorous}
Rains, E.~M.
\newblock Rigorous treatment of distillable entanglement.
\newblock {\em Physical Review A}, 60(1):173, 1999.

\bibitem{boyd2004convex}
Boyd, S. and Vandenberghe, L.
\newblock {\em Convex Optimization}.
\newblock Cambridge University Press, 2004.

\bibitem{navascues2009power}
Navascues, M., Owari, M., and Plenio, M.~B.
\newblock Power of symmetric extensions for entanglement detection.
\newblock {\em Physical Review A}, 80(5):052306, 2009.

\bibitem{doherty2014entanglement}
Doherty, A.~C.
\newblock Entanglement and the shareability of quantum states.
\newblock {\em Journal of Physics A: Mathematical and Theoretical},
  47(42):424004, 2014.

\bibitem{wolf:lectureNotes}
Wolf, M.~M.
\newblock Quantum channels and operations, guided tour.
\newblock Available online:
  \url{https://www-m5.ma.tum.de/foswiki/pub/M5/Allgemeines/MichaelWolf/QChannelLecture.pdf},
  2012.

\bibitem{dehaene2003local}
Dehaene, J., Van~den Nest, M., De~Moor, B., and Verstraete, F.
\newblock Local permutations of products of {B}ell states and entanglement
  distillation.
\newblock {\em Physical Review A}, 67(2):022310, 2003.

\bibitem{ruan2017adaptive}
Ruan, L., Dai, W., and Win, M.~Z.
\newblock Adaptive recurrence quantum entanglement distillation for
  two-kraus-operator channels.
\newblock {\em Physical Review A}, 97(5):052332, 2018.

\bibitem{campbell2010exploit}
Campbell, E.~T.
\newblock How to exploit local information when distilling entanglement.
\newblock {\em International Journal of Quantum Information},
  8(01n02):161--180, 2010.

\bibitem{cabrillo1999creation}
Cabrillo, C., Cirac, J., Garcia-Fernandez, P., and Zoller, P.
\newblock Creation of entangled states of distant atoms by interference.
\newblock {\em Physical Review A}, 59(2):1025, 1999.

\bibitem{kent1999optimal}
Kent, A., Linden, N., and Massar, S.
\newblock Optimal entanglement enhancement for mixed states.
\newblock {\em Physical Review Letters}, 83(13):2656, 1999.

\bibitem{verstraete2001local}
Verstraete, F., Dehaene, J., and DeMoor, B.
\newblock Local filtering operations on two qubits.
\newblock {\em Physical Review A}, 64(1):010101, 2001.

\bibitem{verstraete2003optimal}
Verstraete, F. and Verschelde, H.
\newblock Optimal teleportation with a mixed state of two qubits.
\newblock {\em Physical {R}eview {L}etters}, 90(9):097901, 2003.

\bibitem{pirandola2017fundamental}
Pirandola, S., Laurenza, R., Ottaviani, C., and Banchi, L.
\newblock Fundamental limits of repeaterless quantum communications.
\newblock {\em Nature Communications}, 8, 2017.

\bibitem{hhh:teleportationChannels}
Horodecki, M., Horodecki, P., and Horodecki, R.
\newblock General teleportation channel, singlet fraction, and
  quasidistillation.
\newblock {\em Physical {R}eview A}, 60(3):1888, 1999.

\bibitem{pfaff2014unconditional}
Pfaff, W., Hensen, B., Bernien, H., van Dam, S.~B., Blok, M.~S., Taminiau,
  T.~H., Tiggelman, M.~J., Schouten, R.~N., Markham, M., Twitchen, D.~J.,
  et~al.
\newblock Unconditional quantum teleportation between distant solid-state
  quantum bits.
\newblock {\em Science}, 345(6196):532--535, 2014.

\bibitem{shiwu}
Shi, Y. and Wu, X.
\newblock Epsilon-net method for optimizations over separable states.
\newblock {\em Theoretical Computer Science}, 598:51--63, 2015.

\bibitem{datorro}
Datorro, J.
\newblock {\em Convex Optimization and Euclidean Distance Geometry}.
\newblock Meboo Publishing, 2015.

\bibitem{twirling_map}
Emerson, J., Alicki, R., and {\.Z}yczkowski, K.
\newblock Scalable noise estimation with random unitary operators.
\newblock {\em Journal of Optics B: Quantum and Semiclassical Optics},
  7(10):S347, 2005.

\bibitem{Watrous}
Watrous, J.
\newblock {\em The Theory of Quantum Information}.
\newblock Cambridge University Press, 2018.
\newblock Available online: \url{https://cs.uwaterloo.ca/~watrous/TQI/TQI.pdf}.

\bibitem{horodecki2000limits}
Horodecki, M., Horodecki, P., and Horodecki, R.
\newblock Limits for entanglement measures.
\newblock {\em Physical Review Letters}, 84(9):2014, 2000.

\bibitem{horodecki1996distillability}
Horodecki, M., Horodecki, P., and Horodecki, R.
\newblock Inseparable two spin-1/2 density matrices can be distilled to a
  singlet form.
\newblock {\em Physical Review Letters}, 78(4):574, 1997.

\end{thebibliography}

\end{document}